  \providecommand\BibTeX{{%
    \normalfont B\kern-0.5em{\scshape i\kern-0.25em b}\kern-0.8em\TeX}}}
\newcommand{\cmark}{\ding{51}}%
\newcommand{\xmark}{\ding{55}}%
\definecolor{pastelblue}{HTML}{2874ae}
\definecolor{chestnut}{HTML}{A24516}
\newcommand{\ltlN}{\LTLnext}
\newcommand{\ltlG}{\LTLglobally}
\newcommand{\ltlF}{\LTLeventually}
\newcommand{\ltlU}{\LTLuntil}
\newcommand{\ap}{\mathit{AP}}
\newcommand{\nat}{\mathbb{N}}
\newcommand{\set}[1]{\{ #1 \}}
\newcommand{\pow}[1]{2^{#1}}
\newcommand{\ldot}{\mathpunct{.}}
\newcommand{\lang}[1]{\mathcal{L}(#1)}
\newcommand{\ltlmodels}{\models}
\newcommand{\aut}{\mathcal{A}}
\newcommand{\pushAut}{\mathcal{P}}
\newcommand{\system}{\mathcal{S}}
\newcommand{\traceVars}{\mathcal{V}}
\newcommand{\traceSet}{T}
\newcommand{\finTraceSet}{U}
\newcommand{\PSPACE}{\texttt{PSPACE}\@\xspace}
\newcommand{\EXPSPACE}{\texttt{EXPSPACE}\@\xspace}
\newcommand{\NEXPT}{\texttt{NEXPTIME}\@\xspace}
\newcommand{\NP}{\texttt{NP}\@\xspace}
\newcommand{\coRE}{\texttt{coRE}\@\xspace}
\newcommand{\nstep}[2]{\mathit{Step}_{#1}(#2)}
\newcommand{\nnstep}[2]{\mathit{Reach}_{#1}(#2)}
\newcommand{\forward}{\Rrightarrow}
\newcommand{\backward}{\Lleftarrow}
\def\proofsnamefont{\itshape}
\def\proofsindent{\noindent}
\newenvironment{proofSketch}[1][Proof sketch]{\par
    \pushQED{\qed}%
    \normalfont 
    \trivlist
    \item[\proofsindent\hskip\labelsep
    {\proofsnamefont #1.}]\ignorespaces
}{%
    \popQED\endtrivlist
}
\newcommand\xqed[1]{%
    \leavevmode\unskip\penalty9999 \hbox{}\nobreak\hfill
    \quad\hbox{#1}}
\newcommand\demo{\xqed{$\vartriangleleft$}}
	\theoremstyle{acmdefinition}
	\newtheorem{remark}[theorem]{Remark}}
\newif\iffullversion
\newcommand{\ifFull}[1]{\iffullversion#1\else the full version \cite{fullVersion}\fi}
\begin{document}

\title[Deciding Hyperproperties Combined with Functional Specifications]{Deciding Hyperproperties Combined \\with Functional Specifications}

\author{Raven Beutner}
\orcid{0000-0001-6234-5651}
\affiliation{%
	\institution{CISPA Helmholtz Center for Information Security}
	\country{Germany}
}

\author{David Carral}
\orcid{0000-0001-7287-4709}
\affiliation{%
	\institution{LIRMM, Inria, University of Montpellier, CNRS}
	\country{France}
}

\author{Bernd Finkbeiner}
\orcid{0000-0002-4280-8441}
\affiliation{%
	\institution{CISPA Helmholtz Center for Information Security}
	\country{Germany}
}

\author{Jana Hofmann}
\orcid{0000-0003-1660-2949}
\affiliation{%
	\institution{CISPA Helmholtz Center for Information Security}
	\country{Germany}
}

\author{Markus Krötzsch}
\orcid{0000-0002-9172-2601}
\affiliation{%
	\institution{Technische Universität Dresden}
	\country{Germany}
}

\renewcommand{\shortauthors}{Beutner et al.}

\begin{abstract}
	We study satisfiability for HyperLTL with a $\forall^*\exists^*$
	quantifier prefix, known to be highly undecidable in general.
	HyperLTL can express system properties that relate multiple traces (so-called \emph{hyperproperties}), which are often
	combined with \emph{trace properties} that specify functional behavior on single traces.
	Following this conceptual split, we first define several \emph{safety} and \emph{liveness} fragments
	of $\forall^*\exists^*$ HyperLTL, and characterize the complexity of their (often much easier) satisfiability problem.
	We then add LTL trace properties as functional specifications.
	Though (highly) undecidable in many cases, this way of combining ``simple'' HyperLTL and arbitrary
	LTL also leads to interesting new decidable fragments.		
	This systematic study of $\forall^*\exists^*$ fragments is complemented by a new
	(incomplete) algorithm for $\forall\exists^*$-HyperLTL satisfiability.
\end{abstract}

\begin{CCSXML}
<ccs2012>
<concept>
<concept_id>10003752.10003790.10002990</concept_id>
<concept_desc>Theory of computation~Logic and verification</concept_desc>
<concept_significance>500</concept_significance>
</concept>
<concept>
<concept_id>10003752.10003790.10003793</concept_id>
<concept_desc>Theory of computation~Modal and temporal logics</concept_desc>
<concept_significance>500</concept_significance>
</concept>
</ccs2012>
\end{CCSXML}

\ccsdesc[500]{Theory of computation~Logic and verification}
\ccsdesc[500]{Theory of computation~Modal and temporal logics}

\keywords{Hyperproperties, HyperLTL, Satisfiability}

\maketitle

\section{Introduction}

Hyperproperties are properties that relate multiple execution traces of a system \cite{ClarksonS08} and comprise a range of relevant properties from many areas of computer science. Examples are symmetry, optimality, robustness, and noninterference.
The most prominent logic for expressing hyperproperties is HyperLTL \cite{ClarksonFKMRS14}, which extends LTL with trace quantification.
Generalized noninterference \cite{McCullough88}, for example, states that high-security inputs do not influence the input-output behavior observable by a low-security user, which can be expressed in HyperLTL as follows.
\[
	\forall \pi  \forall \pi'  \exists \pi'' \ldot \LTLglobally \Big(\bigwedge_{a \in L_\mathit{out} \cup L_\mathit{in}} \!\!\!\!\!(a_\pi \leftrightarrow a_{\pi''}) \land \!\!\bigwedge_{a \in H_\mathit{in}} (a_{\pi'} \leftrightarrow a_{\pi''} ) \Big)
\]
The formula states that for every two traces $\pi, \pi'$, there exists a trace $\pi''$ that combines the low-security inputs and outputs on $\pi$ and the high-security inputs on $\pi'$.

In this paper, we study the satisfiability problem of HyperLTL.
For LTL, satisfiability is \PSPACE-complete~\cite{sistla1985complexity}.
For hyperproperties, satisfaction cannot be decided by analyzing single traces in isolation, making formal reasoning challenging. 
Deciding satisfiability in the $\exists^*\forall^*$ fragment of HyperLTL is already \EXPSPACE-complete~\cite{decidable-hyperltl-1}; and deciding hyperproperties with a $\forall^*\exists^*$ trace quantifier alternations is, in general, strongly undecidable, namely $\Sigma_1^1$-complete~\cite{FortinKT021}.
The $\forall^*\exists^*$ fragment contains many relevant properties like generalized noninterference, program refinement, and software doping~\cite{McCullough88,DArgenioBBFH17}.
However, positive results for this important fragment have been very rare and were only obtained by heavy restrictions on the use of temporal operators or by assuming finite models~\cite{decidable-hyperltl-2} (see related work below). 
Algorithms, even if incomplete, are similarly missing.

In this work, we address these shortcomings by studying ways of solving
satisfiability of $\forall^*\exists^*$ HyperLTL specifications.
We identify simple yet expressive fragments of $\forall^*\exists^*$ 
with better computational properties, where
our approach derives interesting fragments in two steps.
First, we split a specification into hyperproperty and trace property, so that we can focus on ``simple'' hyperproperties.
Second, to find such simple hyperproperties, we systematically study fragments of
temporal safety and temporal liveness hyperproperties.
This work towards new decidable fragments is complemented by 
a new (incomplete but often successful) algorithm that is applicable to arbitrary $\forall\exists^*$ specifications.

\paragraph{Splitting in Hyperproperties and Trace Properties.} 
So far, all HyperLTL decidability results were obtained by considering HyperLTL specifications in isolation. Most of the time, however, specifications refer to a specific system. The hyperproperty itself is often relatively simple (like the noninterference property above) and only gets difficult to satisfy given a specification of the functional behavior of the system.\footnote{Of course, we can incorporate the LTL property in the HyperLTL formula: we conceptually divide the specification into a (complicated) trace property and a (simple) hyperproperty.}
The following example highlights this interplay between functional property and hyperproperty.
\begin{example}\label{ex:introEx}
	Consider a system of agents that send and receive data.
	Each trace describes the behavior of a single agent.
	We want the system to satisfy the following hyperproperty.
	\begin{align*}
		\varphi \coloneqq \forall \pi \exists \pi' \ldot \ltlF (\mathit{send}_\pi \land \mathit{rec}_{\pi'})
	\end{align*}%
	The formula states that each agent eventually sends its information and that there exists some agent receiving it. The formula on its own is easily satisfiable already by a one-trace model.
    In addition to the hyperproperty we add the simple functional specification (trace property)
	\begin{align*}
		\psi \coloneqq (\neg \mathit{rec}) \ltlU (\mathit{rec} \land \ltlN \ltlG \neg \mathit{rec}) \land \ltlG (\mathit{rec} \leftrightarrow \ltlN \mathit{send})
	\end{align*}
	which expresses that each agent receives data exactly once and sends it forth in the next step.
	Every model that satisfies the \emph{combination} of $\varphi$ and $\psi$ needs to be infinite. 
	Automatically checking satisfiability is thus complex as we cannot iteratively search for models of bounded (finite) size. 
	\demo
\end{example}
A satisfiability checker that distinguishes between a functional specification and hyperproperties could be used to sanity-check whether a hyperproperty is satisfiable in combination with the specification of the system at hand.

\paragraph{Temporal Safety and Temporal Liveness.}
The classification into safety and liveness has a long tradition in the study of trace properties,
where especially safety often allows for easier algorithms.
For our analysis, we define analogous fragments:
a HyperLTL formula is \emph{temporal safety} (resp. \emph{temporal liveness}) if its LTL body describes a safety (resp. liveness) property.
We study the relationship to the existing notations of \emph{hypersafety} and \emph{hyperliveness} defined by Clarkson and Schneider \cite{ClarksonS08}.
Guided by our insights into the complete fragments, we derive several more specific classes of temporal safety and liveness
properties, for which satisfiability is easier to decide.

\begin{table}
	\small
	\caption{Deciding satisfiability of HyperLTL specifications.
	    All results, expect for decidability (dec.), denote completeness.
		Our notation is found in \Cref{subsec:notation}, e.g., $ \forall^*\exists^* \ldot {\protect\LTLglobally} ({\protect\LTLnext}^*)$ is the class of $\forall^*\exists^*$ formulas whose LTL body uses a single ${\protect \LTLglobally}$ operator with optional ${\protect \LTLnext}$ operators in its scope.}
	\begin{tabular}{@{~}c@{\hspace{2mm}}c@{\hspace{2mm}}r@{~}l@{\hspace{2mm}}r@{~}l@{~}}
		\toprule
		&& \multicolumn{2}{@{}c@{}}{\textbf{no LTL spec.}} & \multicolumn{2}{@{}c@{}}{\textbf{with LTL spec.}} \\
		\midrule
		& complete fragment & \coRE&[Thm.~\ref{theo:coReComp}]  & $\Sigma_1^1$&[Thm.~\ref{theo:safetyUndec}]  \\[1pt]
		& $\forall^*\exists^* \ldot \LTLnext^*$ & \texttt{NEXP}&[Thm.~\ref{theo:prop}] & \texttt{NEXP}& [Thm.~\ref{theo:prop}] \\[1pt]
		& $\forall^*\exists^* \ldot \LTLglobally$ & \texttt{NEXP}&[Lem.~\ref{lem:saftyDecNexptime}]& $\Sigma_1^1$&[Thm. \ref{theo:safetyUndec}] \\[1pt]
	\rotatebox{90}{\makebox[0pt][l]{\textbf{\begin{minipage}{10ex}\centering Temporal\\Safety\end{minipage}}}}
		& $\forall^*\exists^* \ldot \LTLglobally(\LTLnext^*)$ & \coRE&[Lem.~\ref{lem:coReHardness}] & $\Sigma_1^1$&[Thm.~\ref{theo:safetyUndec}] \\
		\midrule
		&complete fragment & $\Sigma_1^1$&[Thm.~\ref{theo:livenessAnaytical}] & $\Sigma_1^1$&[Thm.~\ref{theo:livenessAnaytical}] \\[1pt]
		&$\forall \exists^*\ldot \text{det-liveness}$ & trivial&[Prop.~\ref{prop:alwaysSat}] & $\Sigma_1^1$&[Cor.~\ref{cor:detLiveness}] \\[1pt]
		&$\forall \exists^*\ldot \LTLeventually (\LTLnext^*) $ & \NP&[Lem.~\ref{lem:npEventual}]& dec.&[Thm.~\ref{theo:eventuallyDec}] \\[1pt]
	\rotatebox{90}{\makebox[0pt][l]{\textbf{\begin{minipage}{10ex}\centering Temporal\\Liveness\end{minipage}}}}
		& $\forall^* \exists^*\ldot \LTLeventually \land \cdots \land \LTLeventually $ & \NP& [Lem.~\ref{lem:npEventual}] & $\Sigma_1^1$& [Thm.~\ref{theo:conjunctionFAnalytical}] \\
		\bottomrule
	\end{tabular}	
	\label{tab:results}
\end{table}
\paragraph{Main Results.} Our results are summarized in Table~\ref{tab:results},
where each line represents a class of HyperLTL properties, and the columns distinguish
whether or not additional (arbitrarily complex) LTL specifications are allowed.
All hardness results for $\forall^*\exists^*$ fragments, except in the \texttt{NEXP} cases,
already hold for $\forall\exists^*$.
The restriction to temporal safety makes the satisfiability of HyperLTL drop from $\Sigma_1^1$ to \coRE,
which we show by an effective reduction to satisfiability of first-order logic. While still undecidable, this enables the
use of common first-order techniques such as resolution, tableaux, and related methods~\cite{arhandbook}. 
Hardness already holds for simple formulas consisting only of a single $\LTLglobally$ with $\LTLnext$s in its scope.
If we add (non-safety) functional specifications, hardness jumps back to $\Sigma^1_1$.
In contrast to temporal safety properties, the class of temporal liveness HyperLTL formulas is of analytical complexity, even without additional LTL specifications. However, again in contrast to $\LTLglobally(\LTLnext^*)$, formulas from the $\forall\exists^* \ldot \LTLeventually(\LTLnext^*)$ fragment are decidable, even when combined with an arbitrary LTL specification. 
This is the first HyperLTL decidability result for formulas that can enforce models with infinitely many traces.
The class also contains the specification from Example~\ref{ex:introEx}. 
This decidability result is tight in the sense that already conjunctions of multiple eventualities are analytical again.

Finally, to complement our decidability results, we propose a general approximation algorithm to find the \emph{largest} model for specifications consisting of a HyperLTL formula and an LTL formula. 
Our experimental evaluation shows that our algorithm performs significantly better than approaches that iteratively search for models of bounded size \cite{decidable-hyperltl-2,FinkbeinerHH18} and can even show unsatisfiability for many formulas (which is impossible in bounded approaches). 

\paragraph{Structure}

The remainder of this paper is structured as follows.
We give some basic preliminaries and introduce HyperLTL in \Cref{sec:prelim}. 
We study the fragment of temporal safety in \Cref{sec:safety}. 
We begin this study with the full fragment, and then gradually decrease in expressiveness all the way to the fragment containing only $\ltlN$ operators. 
We then move to temporal liveness in \Cref{sec:livness}.
Analogous to the safety case, we begin with the full fragment, gradually decreasing to the fragment of pure eventualities, for which we establish decidability.
Finally, in \Cref{sec:largestModelsAlg}, we describe our approximation for finding the largest models and report on experimental results in \Cref{sec:eval}.

\paragraph{Related Work.}
In recent years, many logics to express hyperproperties have been developed.
Most approaches extend existing logics with trace or path quantification, examples besides HyperLTL are HyperCTL$^*$ \cite{ClarksonS08}, HyperQPTL \cite{Rabe16}, HyperPDL-$\Delta$~\cite{GutsfeldMO20}, and HyperATL$^*$ \cite{BeutnerF21}.
Monadic first-order logics can be extended by adding a special equal-level predicate \cite{Finkbeiner017} or using different types of quantifiers~\cite{flavoursSequential}.
Recently, hyperproperties have also been obtained via a team semantics for trace logics \cite{KrebsMV018,VirtemaHFK021}.
Apart from plain temporal logics, there are also hyperlogics for hyperproperties that are asynchronous~\cite{DBLP:conf/cav/BaumeisterCBFS21, DBLP:conf/lics/BozzelliPS21, DBLP:journals/pacmpl/GutsfeldMO21}, quantitative~\cite{FinkbeinerHT18}, or probabilistic~\cite{AbrahamBBD20,DimitrovaFT20}.

HyperLTL remains the most used among the proposed hyperlogics.
Its satisfiability problem is known to be challenging:
if we define fragments based on quantifier prefixes (but with an arbitrary body),
then $\exists^*\forall^*$ is the most general fragment for which satisfiability
is still decidable (and \texttt{EXPSPACE}-complete), whereas
$\forall\exists^*$ already leads to undecidability \cite{decidable-hyperltl-1}.
In fact, the $\forall^*\exists^*$ fragment is already satisfiability-complete: any HyperLTL formula can be effectively translated into equisatisfiable $\forall^*\exists^*$ formula \cite{decidable-hyperltl-2}.
Analyzing the case of (unrestricted) HyperLTL in more detail,
Fortin et al. show satisfiability to be $\Sigma_1^1$-complete,
and therefore above all problems in the arithmetic hierarchy~\cite{FortinKT021}. 
In a more fine-grained analysis, Mascle and Zimmermann show that the problem becomes decidable
if one only considers models of a bounded size or if, for selected quantifier prefixes, temporal operators
are not nested~\cite{decidable-hyperltl-2}.
In particular, $\forall\exists^*$ properties using only $\ltlF$ and $\ltlG$ (without $\ltlN$s) are decidable (and always have a finite model), as no ``diagonal'' comparison between trace positions is possible~\cite{decidable-hyperltl-2}. 
The satisfiability of the logics HyperQPTL and HyperCTL$^*$, which both subsume HyperLTL, has been studied as well~\cite{CoenenFHH19}.

\section{Preliminaries}\label{sec:prelim}
We assume a fixed, finite set of atomic propositions $\ap$ and write $\Sigma \coloneqq  2^\ap$.
Given a symbol $\pi$, we write $\ap_\pi$ for the set $\set{a_\pi \mid a \in \ap}$.
A trace $t$ is an element in $\Sigma^\omega$.
For $i \in \nat$, $t(i)$ denotes the $i$th element in $t$ (starting with the $0$th) and $t[i, \infty]$ is the suffix of a trace starting in point in time $i$.
For a finite trace $u \in \Sigma^*$ and an infinite trace $t \in \Sigma^\omega$, $u$ is a prefix of $t$ (written $u \lessdot t$) if for every $0 \leq i < |u|$, $u(i) = t(i)$.
A trace property $P$ is a set of traces, whereas a hyperproperty $H$ is a set of sets of traces~\cite{ClarksonS08}.

\subsection{Trace Properties and LTL}
Linear temporal logic (LTL) defines trace properties by combining temporal operators with boolean connectives. Its syntax is defined by the following grammar.
\begin{align*}
    \psi &\coloneqq a \mid \neg \psi \mid \psi \land \psi \mid \ltlN \psi \mid \psi \ltlU \psi
\end{align*}%
where $a \in \ap$. We also use the standard Boolean connectives $\wedge$, $\rightarrow$, $\leftrightarrow$ and constants $\top, \bot$, as well as the derived LTL operators \emph{eventually} $\ltlF \psi \coloneqq \top \ltlU \psi $, and \emph{globally} $\ltlG \psi \coloneqq \neg \ltlF \neg \psi$.
The semantics of LTL is defined as usual.
\begin{align*}
    t &\ltlmodels  a &\text{iff} \quad  &a \in t(0)\\
     t &\ltlmodels  \neg \psi &\text{iff} \quad & t \not\ltlmodels  \psi \\
    t &\ltlmodels  \psi_1 \land \psi_2 &\text{iff} \quad  &t \models \psi_1 \text{ and }  t \models  \psi_2\\
    t &\ltlmodels  \ltlN  \psi &\text{iff} \quad & t[1,\infty] \ltlmodels \psi \\
    t &\ltlmodels  \psi_1 \ltlU \psi_2 &\text{iff} \quad & \exists i \ldot t[i, \infty]\ltlmodels  \psi_2 \text{ and } \forall j < i \ldot  t[j,\infty] \ltlmodels  \psi_1
\end{align*}%

Safety and liveness properties are prominent classes of trace properties~\cite{AlpernS85}.  
Safety properties are characterized by the fact that each violation is caused after a finite time. Liveness properties characterize that something good happens eventually. 
\begin{definition}\label{def:safetyLiveness}
    A property $P$ is \emph{safety} if it holds that for every trace $t \notin P$, there exists a $u \lessdot t$ such that for every $t'$ with $u \lessdot t'$, we have $t' \notin P$.
    A property $P$ is \emph{liveness} if for every $u \in \Sigma^*$, there exists a $t \in \Sigma^\omega$ with $u \lessdot t$ and $t \in P$.
\end{definition}

\subsection{Hyperproperties and HyperLTL}

HyperLTL \cite{ClarksonFKMRS14} extends  LTL with explicit quantification over traces, thereby lifting it from a logic expressing trace properties to one expressing hyperproperties \cite{ClarksonS08}. 
Let $\traceVars$ be a set of trace variables. 
We define HyperLTL formulas with the following grammar. 
\begin{align*}
    \varphi &\coloneqq \exists \pi \ldot \varphi \mid \forall \pi \ldot \varphi \mid \phi \\
    \phi &\coloneqq a_\pi \mid \neg \phi \mid \phi \land \phi \mid \ltlN \phi \mid \phi \ltlU \phi
\end{align*}%
Here, $\pi \in \traceVars$ and $a \in \ap$. 
We consider only closed formulas, i.e., formulas where for each atom $a_\pi$ the trace variable $\pi$ is bound by some trace quantifier. 
The semantics of HyperLTL is given with respect to a set of traces $\traceSet$ and a trace assignment $\Pi$, which is a partial mapping $\Pi : \traceVars \rightharpoonup \Sigma^\omega$. 
For $\pi \in \traceVars$ and $t \in T$, we write $\Pi[\pi \mapsto t]$ for the trace assignment obtained by updating the value of $\pi$ to $t$. 
We write $\Pi[i,\infty]$ for the assignment $\Pi[i,\infty](\pi) \coloneqq  \Pi(\pi)[i,\infty]$.
\begin{align*}
   \Pi &\ltlmodels_\traceSet  a_\pi &\text{iff} \quad  &a \in \Pi(\pi)(0)\\
   \Pi &\ltlmodels_\traceSet  \neg \phi &\text{iff} \quad & \Pi \not\ltlmodels_\traceSet  \phi \\
   \Pi &\ltlmodels_\traceSet  \phi_1 \land \phi_2 &\text{iff} \quad  &\Pi \models_\traceSet \phi_1 \text{ and }  \Pi \models_\traceSet  \phi_2\\
   \Pi &\ltlmodels_\traceSet  \ltlN  \phi &\text{iff} \quad & \Pi[1,\infty] \ltlmodels_\traceSet \phi \\
   \Pi &\ltlmodels_\traceSet  \phi_1 \ltlU \phi_2 &\text{iff} \quad & \exists i \ldot \Pi[i, \infty]\ltlmodels_\traceSet  \phi_2 \text{ and } \\
    & & & \quad \forall j < i \ldot  \Pi[j,\infty] \ltlmodels_\traceSet  \phi_1\\
   \Pi &\models_\traceSet \exists \pi \ldot \varphi &\text{iff} \quad &\exists t \in \traceSet \ldot \Pi[\pi \mapsto t] \models_\traceSet  \varphi\\
   \Pi &\models_\traceSet  \forall \pi \ldot \varphi &\text{iff} \quad &\forall t \in \traceSet \ldot \Pi[\pi \mapsto t] \models_\traceSet  \varphi
\end{align*}%
We say that $\traceSet$ is a model of $\varphi$ (written $\traceSet \models \varphi$) if $\emptyset \models_\traceSet \varphi$, where $\emptyset$ denotes the empty trace assignment. 

\begin{remark}
	HyperLTL is closed under conjunction (and, more generally, under any boolean combination). 
	For two HyperLTL formulas $\varphi_1, \varphi_2$, we therefore write $\varphi_1 \land \varphi_2$ for \emph{some} HyperLTL formula expressing the conjunction of $\varphi_1, \varphi_2$.
	For examples $(\forall \pi \exists \pi'\ldot\ltlG( a_\pi \not\leftrightarrow a_{\pi'})) \land (\forall \pi\ldot \ltlF b_\pi)$ can be expressed as $\forall \pi \forall \pi'  \exists \pi''\ldot \ltlG(a_\pi \not\leftrightarrow a_{\pi''}) \land  \ltlF b_{\pi'}$.
	\demo
\end{remark}

Analogous to trace properties, we can characterize hyperproperties as hypersafety and hyperliveness \cite{ClarksonS08}.
We lift the prefix relation $\lessdot$ to sets of traces:
a set $\finTraceSet \subseteq \Sigma^*$ of finite traces is a prefix of a set $\traceSet \subseteq \Sigma^\omega$ (written $\finTraceSet \lessdot \traceSet$) if, for every $u \in \finTraceSet$, there exists a $t \in \traceSet$ such that $u \lessdot t$.

\begin{definition}\label{def:hyperSafe}
    A hyperproperty $H$ is \emph{hypersafety} if for every $\traceSet \subseteq \Sigma^\omega$ with $\traceSet \notin H$, there exists a finite set $\finTraceSet \subseteq \Sigma^*$ with $\finTraceSet \lessdot \traceSet$ such that, for every $\traceSet' \subseteq \Sigma^\omega$ with $\finTraceSet \lessdot \traceSet'$, we have $\traceSet' \notin H$.
    A property $H$ is \emph{hyperliveness} if for every finite set $\finTraceSet \subseteq \Sigma^*$, there exists $\traceSet \subseteq \Sigma^\omega$ with $\finTraceSet \lessdot \traceSet$ and $\traceSet \in H$.
\end{definition}

Intuitively, a violation of a hypersafety property can be explained by the finite interaction of finitely many traces.
Conversely, a hyperproperty is hyperliveness, if such a set can always be extended to a set satisfying the property.

\subsection{Specifications and Notation}
\label{subsec:notation}
We study the combination of $\forall^*\exists^*$ HyperLTL formulas and arbitrary LTL formulas,
and call such pairs \emph{specifications}.

\begin{definition}
    A \emph{specification} is a pair $(\psi, \varphi)$ where $\psi$ is an LTL formula and $\varphi$ a HyperLTL formula. 
    We say that $(\psi, \varphi)$ is satisfiable iff there exists a non-empty set of traces $\traceSet \subseteq \Sigma^\omega$ such that $\forall t \in \traceSet \ldot t \ltlmodels \psi$ and $\traceSet \models \varphi$. 
\end{definition}

In general, we write $(\psi, \varphi)$ for specifications with arbitrary LTL and HyperLTL formulas. 
We use the following notation for fragments of specifications. 
We write $(\top, \varphi)$ to indicate that no LTL formula is given or, equivalently, the trace specification is $\mathit{true}$.
We represent the quantifier prefix of the HyperLTL property using regular expressions.
For example, $\forall\exists^*$ is a prefix consisting of a single universal quantifier followed by any number of existential quantifiers. 
We write $Q^*$ for an arbitrary prefix. 
The body of a HyperLTL formula is structured based on the use of temporal operators.
We allow propositional (temporal-operator-free) formulas as conjuncts if not stated otherwise
Consider the following example.
A $\forall^*\exists^*\ldot\ltlG(\ltlN^*)$ formula is of the form
$\forall \pi_1 \ldots \forall \pi_n \exists \pi_{n+1} \ldots \exists \pi_{n+m} \ldot (\ltlG \phi) \land \phi'$, where $\phi$ may contain (potentially nested) $\ltlN$ operators and $\phi'$ does not contain any temporal operators. 
Analogously, a formula in $\forall^*\exists^*\ldot\ltlG$ describes formulas as the one above but $\phi$ may not contain $\ltlN$s. A formula in $\forall^*\exists^*\ldot \ltlF \land \ltlF$ uses a conjunction of two eventualities (also without $\ltlN$s).

\subsection{Complexity of Undecidable Problems}

Many problems considered in this paper are highly undecidable. 
To enable precise quantification of ``how undecidable'', we briefly recall the arithmetic and analytical hierarchy.
We only provide a brief overview and refer to \cite{rogers1987theory} for details. 
The arithmetic hierarchy contains all problems (languages) that can be expressed in first-order arithmetic over the natural numbers.
It contains the class of recursively enumerable (\texttt{RE}) and co-enumerable problems (\coRE) in its first level. 
The class $\Sigma_1^1$ (sitting in the analytical hierarchy) contains all problems that can be expressed with existential second-order quantification (over sets of numbers) followed by a first-order arithmetic formula.
Analogously, the class $\Pi_1^1$ contains all problems expressible using universal second-order quantification.
Consequently, both $\Sigma_1^1$ and $\Pi_1^1$ (strictly) contain the entire arithmetic hierarchy. 

\subsection{Machines}

As a basic model of computation to show hardness we use two-counter machines. 
A \emph{nondeterministic 2-counter machine} (2CM) consists of a finite set of instructions $l_1, \ldots l_n$, which modify two counters $c_1, c_2$. Each instruction $l_i$ is of one of the following forms, where $x \in \set{1,2}$ and  $1 \leq i,j,k \leq n$.\\
\texttt{$1)$\,$l_i$: $c_x \coloneqq c_x+1$; goto $\{l_j, l_{k}\}$ } \\
\texttt{$2)$\,$l_i$: $c_x \coloneqq c_x-1$; goto $\{l_j, l_{k}\}$ } \\
\texttt{$3)$\,$l_i$: if $c_x = 0$ then goto $l_j$ else goto $l_k$} \\
\texttt{$4)$\,$l_i$: halt}\\
Here, \texttt{goto} $\{l_j, l_{k}\}$ indicates that the machine nondeterministically chooses between instructions $l_j$ and $l_k$.
A configuration of a 2CM is a tuple $(l_i, v_1, v_2)$, where $l_i$ is the next instruction to be executed,
and $v_1, v_2 \in \nat$ denote the values of the counters.
The initial configuration of a 2CM is $(l_1, 0, 0)$. The transition relation between configurations is defined as expected. 
Decrementing a counter that is already $0$ leaves the counter unchanged. 
A 2CM halts if a configuration with a \texttt{halt} instruction is reached.
Deciding if a machine has a halting computation is \texttt{RE}-complete and deciding if it has an infinite computation is \texttt{coRE}-complete~\cite{minsky1967computation}.
An infinite computation is \emph{recurring}  if it visits instruction $l_1$ infinitely many times.
Deciding if a machine has a recurring computation, is $\Sigma_1^1$-hard~\cite{FischerL79,AlurH94}. 

\section{Temporal Safety}\label{sec:safety}
In this section, we study the satisfiability problem of temporal safety HyperLTL formulas.
We begin by defining temporal safety and argue why, compared to hypersafety, it is the more suitable fragment in the context of satisfiability.
Subsequently, we show that temporal safety specifications improve the general $\Sigma_1^1$-hardness of $\forall^*\exists^*$  hyperproperties \cite{FortinKT021} to \coRE-complete.
We obtain this result by a reduction to satisfiability of first-order logic.
In the next step, we investigate the combination of temporally safe hyperproperties with arbitrary functional trace specifications given in LTL. The complexity jumps again to $\Sigma_1^1$-completeness, perhaps surprisingly already for very basic $\forall \exists^*$ formulas only using one $\LTLglobally$ as temporal operator.
We, therefore, analyze the remaining fragments for decidability results and establish that hyperproperties that only use $\LTLnext$s as temporal operators are \NEXPT-complete, even when adding arbitrary LTL specifications. The same holds for hyper-invariants (using $\LTLglobally$) without an LTL specification.

\subsection{Hypersafety and Temporal Safety}
The safety fragment of LTL is one of the most successful fragments of temporal logics as it is amendable to easier monitoring and verification than arbitrary $\omega$-regular properties~\cite{KupfermanV99}.
The concept of a safety property (i.e., every violation is caused after a finite time) naturally extends to hyperproperties, giving the general class of hypersafety (cf.~\Cref{def:hyperSafe}) \cite{ClarksonS08}. 
However, hypersafety is not well suited for a systematic study of the decidability of hyperproperties.
Deciding if a property is hypersafety is already highly undecidable and deciding if a hypersafety property is satisfiable is directly reducible to LTL satisfiability.

\begin{proposition}
    Deciding if a HyperLTL formula $\varphi$ is hypersafety is $\Pi_1^1$-hard. 
\end{proposition}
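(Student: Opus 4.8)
The plan is to prove $\Pi_1^1$-hardness by reducing from a known $\Pi_1^1$-hard problem about two-counter machines. The preliminaries mention that deciding whether a 2CM has a recurring computation is $\Sigma_1^1$-hard; dually, deciding whether a 2CM has \emph{no} recurring computation is $\Pi_1^1$-hard. So the reduction target should be this complementary recurrence problem. Given a 2CM $M$, I would construct (effectively, in a computable way) a HyperLTL formula $\varphi_M$ such that $\varphi_M$ \emph{is hypersafety} if and only if $M$ has no recurring computation. Since the construction is computable and the source problem is $\Pi_1^1$-hard, this establishes $\Pi_1^1$-hardness of deciding hypersafety.

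The conceptual idea is to exploit the characterization in \Cref{def:hyperSafe}: a hyperproperty fails to be hypersafety exactly when there is a ``bad'' set $\traceSet \notin H$ every finite prefix-approximation of which can nonetheless be extended to a set \emph{in} $H$. The natural strategy is to encode runs of $M$ as traces, so that the existence of an infinite recurring computation corresponds precisely to the existence of such a ``bad limit'' witnessing non-hypersafety. First I would fix an encoding of 2CM configurations over $\Sigma = 2^{\ap}$, representing a computation as a trace whose positions spell out successive configurations $(l_i, v_1, v_2)$; encoding counter values in unary means that checking a single local transition step is expressible with a bounded (hence safety-like) temporal formula, while the recurrence condition ``visits $l_1$ infinitely often'' is a genuine liveness/$\ltlG\ltlF$ condition. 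The HyperLTL formula $\varphi_M$ would then combine a universally-quantified part asserting that all traces encode valid step-by-step computations with an existential part or a global condition that can only be met in the limit by a recurring run — arranged so that the only way $H$ can contain a set whose every finite approximation is extendable, while the set itself is excluded, is via an infinite recurring computation.

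The main obstacle I expect is getting the hypersafety characterization to align \emph{exactly} with recurrence rather than with mere infiniteness or mere reachability. Hypersafety, per \Cref{def:hyperSafe}, quantifies over finite prefix-sets $\finTraceSet$ and their extensions, so the formula must be designed so that (i) whenever $M$ has a recurring computation, there is a witnessing trace set $\traceSet$ that is excluded from $H$ yet all of whose finite prefixes $\finTraceSet \lessdot \traceSet$ extend to some $\traceSet' \in H$ (violating the hypersafety closure condition), and conversely (ii) whenever $M$ has no recurring computation, $H$ genuinely is hypersafety, i.e. every excluded $\traceSet$ has a finite ``bad'' witness prefix. Pinning down the HyperLTL body so that exclusion from $H$ tracks the failure of recurrence, while finite extendability always succeeds, is the delicate part: it requires the recurrence-enforcing subformula to be a liveness condition (so that no finite prefix can refute it, giving the extendability direction) whose violation nevertheless leaves no finite certificate, which is precisely what breaks the safety closure and yields non-hypersafety.

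I would round out the argument by verifying both directions of the reduction and confirming the construction is computable. Since membership in $\Pi_1^1$ is not claimed here (only hardness), I need not exhibit a matching upper bound; the sole goal is the reduction $M \text{ has no recurring computation} \iff \varphi_M \text{ is hypersafety}$, together with computability of $M \mapsto \varphi_M$, which transfers $\Pi_1^1$-hardness from the recurrence problem to the hypersafety problem.
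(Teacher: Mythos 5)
Your choice of source problem and the polarity of the reduction are sound: non-recurrence of a 2CM is indeed $\Pi_1^1$-hard, and a liveness-style condition (no finite certificate of violation) is exactly the right mechanism for destroying hypersafety. The gap is at the step you yourself flag as delicate, and it is not merely delicate but, as you describe it, impossible. You propose to represent an entire computation as a \emph{single} trace whose positions spell out successive configurations with unary counters, and you claim that ``checking a single local transition step is expressible with a bounded (hence safety-like) temporal formula.'' This is false: with unary counters, consecutive configurations occupy blocks of unbounded length, so verifying one transition requires comparing two unary values at unbounded distance. The set of single-trace encodings of valid 2CM computations is not $\omega$-regular and hence not LTL-definable; if a formula ``all traces encode valid step-by-step computations'' existed, then conjoining it with $\ltlG\ltlF\,\mathbf{l}_1$ would express ``$M$ has a recurring computation'' as an (essentially single-trace) satisfiability question, i.e.\ LTL satisfiability, which is decidable in \PSPACE{} --- contradicting the $\Sigma_1^1$-hardness of recurrence. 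HyperLTL's synchronous semantics gives no way to compare a trace with a shifted copy of itself, so no quantifier prefix rescues this layout.

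The missing idea --- and the central trick in the paper's own 2CM encodings (\Cref{theo:safetyUndec}, \Cref{theo:conjunctionFAnalytical}) --- is to distribute the computation \emph{across} traces: each trace encodes one configuration (or one pair of consecutive configurations) with counters unary along the time axis, and the successor relation is checked \emph{positionwise across two traces} by a formula such as $\forall\pi\exists\pi'\ldot\ltlG\bigwedge_{a} (a^2_\pi \leftrightarrow a^1_{\pi'})$, with recurrence enforced by a countdown counter that resets at $l_1$. With that repair your plan does go through: if $M$ has a recurring computation, the resulting formula has a model $T^*$, and any single trace of $T^*$ forms a violating set all of whose finite prefix sets extend (inside $T^*$) to a model, refuting hypersafety; if $M$ has no recurring computation, the formula is unsatisfiable, and (since its existential conjunct rules out the empty trace set) it defines the empty hyperproperty, which is trivially hypersafety. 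You should also be aware that the paper's own proof is entirely different and much shorter: it cites an effective transformation \cite{DBLP:conf/csfw/FinkbeinerHT19} turning any $\varphi$ into a $\varphi'$ that is hypersafety iff $\varphi$ is unsatisfiable, and composes it with the $\Pi_1^1$-hardness of HyperLTL unsatisfiability \cite{FortinKT021}; your inlined reduction is a viable alternative, but only after the encoding is fixed as above.
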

\begin{proof}
	As shown in \cite[Thm.~23]{DBLP:conf/csfw/FinkbeinerHT19}, for any HyperLTL formula $\varphi$ we can effectively construct a formula $\varphi'$ such that $\varphi$ is unsatisfied iff $\varphi'$ is hypersafety.  
	As HyperLTL unsatisfiability is $\Pi_1^1$-hard~\cite{FortinKT021}, the hardness follows.
\end{proof}

\begin{proposition}
	Given a HyperLTL formula $\varphi$ that is hypersafety, satisfiability of $\varphi$ is decidable in \texttt{PSPACE}.
\end{proposition}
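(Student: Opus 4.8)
The plan is to reduce satisfiability of $\varphi$ to LTL satisfiability, which is \PSPACE-complete \cite{sistla1985complexity}. The central observation I would isolate first is a structural lemma: a hypersafety property, once it is satisfiable at all, is already satisfiable by a \emph{single} trace. Concretely, suppose $\traceSet \models \varphi$ for some non-empty $\traceSet$ and fix any $t \in \traceSet$; I claim $\set{t} \models \varphi$. Assume otherwise, i.e.\ $\set{t} \notin \lang{\varphi}$. Since $\varphi$ is hypersafety, there is a finite $\finTraceSet \lessdot \set{t}$ such that every $\traceSet'$ with $\finTraceSet \lessdot \traceSet'$ satisfies $\traceSet' \notin \lang{\varphi}$. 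But $\finTraceSet \lessdot \set{t}$ just means that every $u \in \finTraceSet$ is a prefix of $t$, and since $t \in \traceSet$ this immediately gives $\finTraceSet \lessdot \traceSet$. Hence $\traceSet$ is itself one of the forbidden extensions, contradicting $\traceSet \in \lang{\varphi}$.

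Second, I would collapse the single-trace case to plain LTL. On a one-element model $\set{t}$, every trace quantifier---universal or existential---has $t$ as its only possible witness, so the trace assignment necessarily maps all trace variables to $t$. Consequently $\set{t} \models \varphi$ holds iff $t \models \hat{\varphi}$, where $\hat{\varphi}$ is the LTL formula obtained from the body of $\varphi$ by deleting all quantifiers and replacing each indexed atom $a_\pi$ by $a$. This is a routine check on the HyperLTL semantics (atoms, boolean connectives, and temporal operators all translate directly once every variable denotes $t$), and $\hat{\varphi}$ has size linear in $\varphi$.

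Combining the two steps, $\varphi$ is satisfiable iff it has a single-trace model iff $\hat{\varphi}$ is LTL-satisfiable. The decision procedure therefore constructs $\hat{\varphi}$ in linear time and invokes an LTL satisfiability check, placing the problem in \PSPACE.

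The only real content is the structural lemma; once the single-trace collapse is established the rest is immediate. I expect the main subtlety to lie purely in the prefix bookkeeping of that lemma, namely verifying that $\finTraceSet \lessdot \set{t}$ together with $t \in \traceSet$ indeed yields $\finTraceSet \lessdot \traceSet$, so that the hypersafety witness against $\set{t}$ is inherited by the larger model $\traceSet$ and produces the contradiction. The quantifier collapse and the appeal to \PSPACE-completeness of LTL are then standard.
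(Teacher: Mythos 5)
Your proof is correct and takes essentially the same approach as the paper: both arguments reduce satisfiability to single-trace models via the subset-closure of hypersafety and then collapse the trace quantifiers, yielding a \PSPACE procedure. The only differences are cosmetic---you prove the singleton case of subset-closure directly rather than citing Clarkson--Schneider, and you reduce all the way to LTL satisfiability instead of to $\forall^*$ HyperLTL satisfiability, which gives the same bound.
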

\begin{proof}
    As hypersafety properties are closed under subsets \cite{ClarksonS08}, $\varphi$ is satisfiable iff it is satisfiable by a single trace model. 
    Therefore, we can collapse all quantifiers in $\varphi$ to universal ones, giving an equisatisfiable (but not equivalent) $\forall^*$ formula for which satisfiability is decidable in \texttt{PSPACE} \cite{decidable-hyperltl-1}.
\end{proof}

Instead of focusing on hypersafety, we study the satisfiability problem for a broader fragment of formulas which we call temporally safe.

\begin{definition}
    \label{definition:th}
    A HyperLTL formula $Q \pi_1 \ldots Q \pi_n \ldot \phi$ is \emph{temporal safety} if $\phi$ (interpreted as an LTL formula over $\ap_{\pi_1} \cup \ldots \cup \ap_{\pi_n}$) describes a safety property.
\end{definition}

Similar to the case of LTL \cite{KupfermanV99}, the safety restriction on the body of the HyperLTL formula allows for easier \emph{verification} (see, e.g., \cite{BeutnerF22CAV,BeutnerF22CSF}).
We argue that temporal safety is also an interesting fragment to study in the context of \emph{satisfiability}. 
First, compared with hypersafety, it is decidable whether a formula is temporally safe, as safety is recognizable for LTL~\cite{sistla1994safety}. 
Second, the next two propositions show that temporal safety defines an expressive fragment: it subsumes all $\forall^*\exists^*$ hypersafety properties.

\begin{proposition}\label{prop:hypersafetyVTemSafetyForall}
	For any~$\,\forall^*$ hypersafety property, there exists an equivalent $\forall^*$ property that is temporally safe. 
\end{proposition}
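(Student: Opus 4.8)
The plan is to reformulate the statement language-theoretically and then replace the body of the formula by a safety property tailored to the \emph{finite witnesses} that hypersafety guarantees. Write the given formula as $\varphi = \forall \pi_1 \ldots \forall \pi_n \ldot \phi$ and read its body $\phi$ as an LTL formula over $\ap_{\pi_1} \cup \ldots \cup \ap_{\pi_n}$, i.e., as a language $L$ of zipped $n$-tuples of traces. The induced hyperproperty $H = \{\traceSet : \traceSet \models \varphi\}$ is then subset-closed and, crucially, \emph{$n$-bounded}: since every violation of $\phi$ involves at most $n$ traces, we have $\traceSet \notin H$ iff some subset of $\traceSet$ of size at most $n$ is already not in $H$. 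The goal is to construct an equivalent $\forall \vec\pi \ldot \phi'$ whose body $\phi'$ is a safety LTL formula.

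The key step is to extract \emph{bounded} finite witnesses. Fix any $\traceSet \notin H$. By $n$-boundedness there is a violating subset $S$ of size at most $n$, and since $H$ is hypersafety, $S \notin H$ yields a finite prefix set $\finTraceSet \lessdot S$ all of whose extensions stay outside $H$. Collapsing $\finTraceSet$ to one (longest) prefix per trace turns it into a \emph{bad pattern}: a tuple of at most $n$ equal-length finite words such that every set extending it is outside $H$. Such patterns are closed under extension, so the set $F$ of all of them is exactly the set of bad prefixes of a safety language; I would let $\phi'$ be the LTL formula whose $n$-tuples are those that never realize a pattern in $F$. The equivalence $\forall \vec\pi \ldot \phi' \equiv \varphi$ then follows from the chain: $\traceSet \models \forall \vec\pi \ldot \phi'$ iff no $n$ traces of $\traceSet$ ever realize a pattern in $F$ (using that patterns are extension-closed and may be padded to exactly $n$ traces), iff no bad pattern lies below $\traceSet$, iff $\traceSet \in H$. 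The last ``iff'' is the heart of the argument: its forward direction is the witness extraction above (which needs hypersafety), while its backward direction holds by construction, since any set extending a bad pattern is outside $H$.

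The main obstacle is precisely this last equivalence, which is where hypersafety is indispensable: for a general $\varphi$ a model of the safety relaxation need not satisfy $\varphi$. A related subtlety is that one must take the closure at the level of whole \emph{sets} of traces—accounting for all re-selections of the chosen $n$ traces—rather than of a single zipped tuple; closing the body language $L$ on its own would not in general be tight. The second point requiring care is purely language-theoretic: I must ensure that $F$, and hence $\phi'$, is not merely $\omega$-regular but genuinely LTL-definable (counter-free), so that $\phi'$ is a legitimate safety HyperLTL body. This should follow from $\phi$ being LTL together with the fact that the bad-pattern construction—complementation, the existential choice of an escaping extension, and the intersection over the at most $n$ selections—stays within the counter-free languages. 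Keeping the witnesses bounded by $n$ is exactly what guarantees the pattern remains checkable by a single $\forall^n$ quantifier block.
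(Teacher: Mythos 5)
Your semantic skeleton is sound and, step for step, mirrors the paper's own argument: the extraction of a violating subset of size at most $n$, the collapse of the hypersafety witness set to one longest prefix per trace, the padding to a common length, and the re-selection maps that account for repeated choices of traces all appear in the paper's proof as well. The genuine gap is exactly the step you flag and then defer: showing that your body $\phi'$ is \emph{LTL-definable}. This is not a side condition but the heart of the statement. Temporal safety (\Cref{definition:th}) requires the body of the $\forall^n$ formula to be an LTL formula describing a safety property; a $\forall^n$ block over a merely $\omega$-regular (or merely topologically closed) safety language is not a HyperLTL formula at all. Your set $F$ of bad patterns is defined semantically (``every set extending it is outside $H$''), so the language ``never realize a pattern in $F$'' is a safety-closure-type construction, and the claim that this construction ``stays within the counter-free languages'' under complementation, the existential choice of an escaping extension, and intersection over selections is precisely what would have to be proved. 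That the closure operation preserves LTL-definability amounts to a preservation-of-aperiodicity statement, and your sketch gives no argument for it; ``this should follow'' is the entire content of the obligation.

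The paper sidesteps this issue completely by constructing $\phi'$ \emph{syntactically}: $\phi' \coloneqq \bigwedge_{f} \phi_{[f]}$, where $f$ ranges over all maps $\{1,\ldots,n\} \to \{1,\ldots,n\}$ and $\phi_{[f]}$ substitutes $\pi_{f(i)}$ for $\pi_i$. This is an LTL formula by construction, so no definability question ever arises; the work then goes into showing that $\forall \vec\pi \ldot \phi' \equiv \varphi$ (immediate from the semantics of universal quantification, which permits repeated traces) and that $\phi'$ is a safety property, which is proved by exactly the witness extraction you describe. In fact, one can check that your semantically defined language coincides with the language of $\bigwedge_{f} \phi_{[f]}$: both are the set of zipped tuples $(t_1,\ldots,t_n)$ whose underlying trace set $\{t_1,\ldots,t_n\}$ lies in $H$. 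So your construction is repairable by making that identification explicit---but as written, the LTL-definability obligation is left undischarged, and the route you propose for discharging it is substantially harder than the syntactic one the paper uses.
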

\begin{proof}
	Let $\varphi = \forall \pi_1 \ldots \pi_n \ldot \phi$ be the hypersafety property.
	For any function $f : \{1, \ldots, n\} \to \{1, \ldots, n\}$ (of which there are $n^n$ many) we define the formula $\phi_{[f]}$ as the formula obtained by replacing each trace variable $\pi_i$ for $1 \leq i \leq m$ with $\pi_{f(i)}$.
	Define $\varphi' \coloneqq  \forall \pi_1 \ldots \pi_n \ldot \phi'$ where 
	{\begin{align*}
			\phi' \coloneqq \bigwedge_{f : \{1, \ldots, n\} \to \{1, \ldots, n\}} \phi_{[f]}
	\end{align*}}%
	It is easy to see that $\varphi \equiv \varphi'$ (using the semantics of universal quantification).
	We claim that $\phi'$ expresses a safety property when interpreted as trace property over $\ap_{\pi_1} \cup \cdots \cup \ap_{\pi_n}$.
	 Take any trace $t$ over $\ap_{\pi_1} \cup \cdots \cup \ap_{\pi_n}$ with $t \not\models \phi'$ (as in the definition of safety, cf.~\Cref{def:safetyLiveness}).
	Let $T = \{t_1, \ldots, t_n\}$ be the set obtained by splitting $t$ into $n$ traces, i.e., $t_i$ is a trace over $\ap$ that copies the assignments to $\ap_{\pi_i}$ on $t$.
	By construction of $T$ we get $T \not\models \varphi'$ and, as $\varphi \equiv \varphi'$ is hypersafety, we get a finite set of finite traces $U \lessdot T$ such that no extension of $U$ satisfies $\varphi$.
	We assume that $U = \{u_1, \ldots, u_n\}$ where $u_i \lessdot t_i$ for each $i$.
	This assumption is w.l.o.g., as we can replace multiple prefixes of the same $t_i$ by the longest among those prefixes, and add an arbitrary prefix of each $t_i$ that previously had no prefix in $U$. 
	We further assume, again w.l.o.g., that all $u_i$s have the same length, say $k$.
	Now define $u$ as the finite trace (of length $k$) over $\ap_{\pi_1} \cup \cdots \cup \ap_{\pi_n}$, where the assignment to $\ap_{\pi_i}$ is taken from $u_i$.
	As $u_i \lessdot t_i$ for each $i$, we get $u \lessdot t$.
	It remains to argue that $u$ is a bad prefix of $\phi'$.
	Let $t'$ be any trace with $u \lessdot t'$. 
	We, again, split $t'$ into traces $t'_1, \ldots, t'_n$.
	Now $T' \coloneqq  \{t'_1, \ldots, t'_n\}$ satisfies $U \lessdot T'$, so $T' \not\models \varphi$.
	By the semantics of universal quantification, there thus exists a $f$ such that $[\pi_1 \mapsto t'_{f(1)}, \ldots, \pi_n \mapsto t'_{f(n)}] \not\models \phi$ and so $[\pi_1 \mapsto t'_1, \ldots, \pi_n \mapsto t'_n] \not\models \phi_{[f]}$.
	This implies that $t' \not\models \phi_{[f]}$ in the LTL semantics so $t' \not\models \phi'$ as required.
\end{proof}

\begin{remark}
	We do not claim that every $\forall^*$ hypersafety property is temporally safe.
	Instead,  \Cref{prop:hypersafetyVTemSafetyForall}  only states that there exists an equivalent temporally safe property.
	For example, $\forall \pi \forall \pi'\ldot \ltlF (a_\pi \land \neg a_{\pi'})$ is unsatisfiable and thus hypersafety but $\ltlF (a_\pi \land \neg a_{\pi'})$ is not a safety property.
	\demo
\end{remark}

\begin{proposition}\label{prop:hypersafetyVTemSafety}
    For any $\forall^*\exists^*$ hypersafety property, there exists an equivalent $\forall^*$ property that is temporally safe. 
\end{proposition}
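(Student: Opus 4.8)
The plan is to eliminate the existential quantifiers and thereby reduce to \Cref{prop:hypersafetyVTemSafetyForall}. Write the given property as $\varphi = \forall \pi_1 \ldots \pi_n \exists \pi_{n+1} \ldots \pi_{n+m} \ldot \phi$, and let $H$ be the hyperproperty it defines. The crucial lever is that any hypersafety property is closed under subsets: if $\traceSet \in H$ and $\traceSet' \subseteq \traceSet$ with $\traceSet' \notin H$, then the bad prefix set $\finTraceSet \lessdot \traceSet'$ witnessing $\traceSet' \notin H$ also satisfies $\finTraceSet \lessdot \traceSet$ (since $\traceSet' \subseteq \traceSet$), whence $\traceSet \notin H$ by \Cref{def:hyperSafe} --- a contradiction. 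So I first record this closure property; it is exactly the one already invoked in the \PSPACE proposition above.

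Next I use subset-closure to localize the existential witnesses. Suppose $\traceSet \models \varphi$ and fix arbitrary $t_1, \ldots, t_n \in \traceSet$. By closure under subsets, the finite set $\set{t_1, \ldots, t_n}$ is again a model of $\varphi$. Evaluating $\varphi$ on this set with the universal block instantiated as $\pi_i \mapsto t_i$, the existential witnesses must themselves lie in $\set{t_1, \ldots, t_n}$; and since $\phi$ is quantifier-free, whether it holds depends only on the assignment and not on the ambient trace set. Hence there is a function $g : \set{n+1, \ldots, n+m} \to \set{1, \ldots, n}$ with $[\pi_i \mapsto t_i, \pi_{n+j} \mapsto t_{g(j)}] \models \phi$. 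This is the heart of the argument: safety forces every violation to be visible already on finitely many traces, so the witnesses can never need to range beyond the universally chosen ones.

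This suggests the $\forall^*$ candidate
\begin{align*}
	\psi \coloneqq \forall \pi_1 \ldots \pi_n \ldot \bigvee_{g : \set{n+1, \ldots, n+m} \to \set{1, \ldots, n}} \phi_{[g]},
\end{align*}
where $\phi_{[g]}$ replaces each $\pi_{n+j}$ in $\phi$ by $\pi_{g(j)}$ (a finite disjunction, as there are $n^m$ such $g$). The argument above gives $\traceSet \models \varphi \Rightarrow \traceSet \models \psi$. The converse $\traceSet \models \psi \Rightarrow \traceSet \models \varphi$ holds for \emph{any} formula and needs no safety assumption: a disjunct $\phi_{[g]}$ satisfied under $\pi_i \mapsto t_i$ exhibits explicit existential witnesses $t_{g(j)} \in \traceSet$. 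Thus $\varphi \equiv \psi$, and $\psi$ is a $\forall^*$ formula defining the same (hence hypersafety) hyperproperty. Applying \Cref{prop:hypersafetyVTemSafetyForall} to $\psi$ then yields an equivalent temporally safe $\forall^*$ property, completing the proof.

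The step I expect to require the most care is the witness-localization in the second paragraph: one must be certain that restricting to $\set{t_1, \ldots, t_n}$ does not spuriously delete the witnesses, which is precisely where subset-closure (and thus the hypersafety hypothesis) is indispensable, and where the $\forall^*\exists^*$ statement genuinely goes beyond the purely universal \Cref{prop:hypersafetyVTemSafetyForall}. A minor loose end to check separately is the degenerate case $n = 0$, where there are no universal traces to draw witnesses from.
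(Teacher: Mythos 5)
Your proof is correct and follows essentially the same route as the paper's: the identical disjunction $\bigvee_{g} \phi_{[g]}$ over witness-selection functions, justified by subset-closure of hypersafety to localize the existential witnesses among the universally chosen traces, followed by an appeal to \Cref{prop:hypersafetyVTemSafetyForall}. Your two additions --- explicitly deriving subset-closure from \Cref{def:hyperSafe} and flagging the degenerate case $n=0$ (where the empty disjunction is $\bot$, matching the fact that a purely existential hypersafety property is unsatisfiable) --- are sound refinements of details the paper leaves implicit.
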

\begin{proof}
    Let $\varphi = \forall \pi_1 \ldots \pi_n \exists \pi'_1 \ldots \pi'_m \ldot \phi$ be hypersafety. 
    For a function $g : \{1, \ldots, m\} \to \{1, \ldots, n\}$ we define the formula $\phi_{[g]}$ as the formula obtained by replacing each trace variable $\pi'_i$ for $1 \leq i \leq m$ with $\pi_{g(i)}$.
    Now define:
    {\begin{align*}
            \varphi' \coloneqq  \forall \pi_1 \ldots \pi_n \ldot \bigvee_{g : \{1, \ldots, m\} \to \{1, \ldots, n\}} \phi_{[g]}
    \end{align*}}%
    We claim that $\varphi \equiv \varphi'$.
    Showing that $\varphi'$ implies $\varphi$ is easy as the disjunction gives an explicit witness for the existential quantifiers.
    For the other direction, assume $\traceSet \models \varphi$ for some model $\traceSet$. 
    Let $t_1, \ldots, t_n \in \traceSet$ be arbitrary. 
    As $\varphi$ is a hypersafety property and $\{t_1, \ldots, t_n\} \subseteq \traceSet$, we get that $\{t_1, \ldots, t_n\} \models \varphi$.
    In particular, if we bind each $\pi_i$ to $t_i$ (in $\varphi$), we get witness traces $t'_1, \ldots, t'_m \in \{t_1, \ldots, t_n\}$ for the existential quantifiers in $\varphi$.
    Now define $g$ by mapping each $1 \leq j \leq m$ to $i \in \{1, \ldots, n\}$ with $t'_j = t_i$.
    The trace assignment $[\pi_1 \mapsto t_1, \ldots, \pi_n \mapsto  t_n]$ satisfies $\phi_{[g]}$.
    As we can find such a $g$ for every $t_1, \ldots, t_n \in \traceSet$, we get that $\traceSet \models \varphi'$ as required. 
    As $\varphi'$ is a $\forall^*$ formula, we can conclude using \Cref{prop:hypersafetyVTemSafetyForall}.
\end{proof}

While temporal safety subsumes $\forall^*\exists^*$ hypersafety, it is a strictly larger fragment as shown by the following formula.
\begin{align*}
    &\big(\exists \pi \ldot a_\pi \big)\land  \big(\forall \pi \ldot \LTLglobally (a_\pi \rightarrow \LTLnext \LTLglobally \neg a_\pi)\big) \land \\
    &\big(\forall \pi \exists \pi' \ldot	\LTLglobally (a_{\pi} \leftrightarrow \LTLnext a_{\pi'})\big)
\end{align*}
Every model of this property must contain \emph{infinitely} many traces.
Hypersafety properties, on the other hand, are closed under subsets and are therefore always satisfiable by a single trace model (if satisfiable at all)~\cite{ClarksonS08}.

\subsection{Temporal Safety without Functional Specifications}

Having established that temporal safety spans a broad spectrum of properties, we now establish that the general analytical hardness of HyperLTL~\cite{FortinKT021} drops to \coRE-completeness for temporal safety.  

\begin{theorem}\label{theo:coReComp}
	The satisfiability problem of temporally safe HyperLTL is \coRE-complete.
\end{theorem}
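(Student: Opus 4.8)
The claim has two directions: \coRE-hardness and \coRE-membership. The plan is to establish membership by reducing satisfiability of a temporally safe HyperLTL formula to satisfiability of first-order logic (which is \coRE), and to establish hardness separately (indeed \Cref{lem:coReHardness} is announced to provide hardness already for the restricted $\forall^*\exists^*\ldot\ltlG(\ltlN^*)$ fragment, so I would lean on that and focus the argument here on the upper bound).

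For membership, the key idea is that a temporally safe formula $\varphi = Q\pi_1 \ldots Q\pi_n \ldot \phi$ admits a natural first-order encoding. First I would set up a first-order signature whose domain represents the set of traces in a candidate model, together with a successor-like structure on positions (time), and a unary/relational vocabulary encoding, for each trace and each time point, the truth of every atomic proposition $a \in \ap$. The trace quantifiers $Q\pi_i$ translate directly into first-order quantifiers over the trace sort. The genuinely delicate part is handling the LTL body $\phi$: in general the temporal operators $\ltlU$ and $\ltlF$ are existential over an unbounded horizon and are \emph{not} expressible in plain first-order logic over $(\nat,<)$ in an effective, satisfiability-preserving way. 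This is exactly where the safety hypothesis is used.

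The crucial step is therefore to exploit safety to make $\phi$ first-order expressible. Since $\phi$ describes a safety property (over the product alphabet $\ap_{\pi_1}\cup\cdots\cup\ap_{\pi_n}$), by the standard characterization of safety LTL (\cite{KupfermanV99,sistla1994safety}) it is equivalent to a formula of the form $\ltlG\chi$ where $\chi$ is a \emph{past}-LTL (or bounded/pure-safety) formula, or more operationally it is recognized by an automaton all of whose violations are caught by a finite bad prefix. The plan is to convert $\phi$ into such a safety-canonical form, so that ``$\phi$ holds on the chosen traces'' becomes a \emph{universally quantified, bounded-lookback} condition over time points: for every position $i$, a purely local (finite-window) constraint holds. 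A universal quantification over positions of a local constraint \emph{is} first-order expressible over $(\nat, \mathit{succ})$, because each temporal operator now reaches only a bounded distance or looks only into the finite past. I would then assemble the full first-order sentence $\Phi$ as the trace quantifiers followed by this $\forall$-over-time encoding of the safety body, and argue the reduction is effective and satisfiability-preserving: $\Phi$ has a model iff $\varphi$ has a (nonempty) trace-set model, reading the first-order domain off as the trace set and vice versa. Since first-order satisfiability is \coRE (semidecidable: enumerate proofs of validity of the negation, equivalently recognize unsatisfiability is \RE\ so satisfiability is \coRE --- here I would state the direction matching the paper's convention, namely that FO \emph{satisfiability} is \coRE), membership follows.

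The main obstacle I expect is precisely the faithful first-order encoding of the safety body: one must ensure that the translation of $\phi$ both (i) is sound and complete on \emph{infinite} time, not merely on finite prefixes, and (ii) does not smuggle in genuine unbounded existential temporal reasoning that FO over $(\nat,\mathit{succ})$ cannot capture. Getting the safety normal form to yield a purely \emph{universal}-over-positions, finite-window constraint --- and verifying that the resulting FO structure can always be realized by an actual $\omega$-trace assignment (so that no spurious ``finite'' models of $\Phi$ arise) --- is the technically sensitive point, and is where the safety hypothesis must be invoked carefully rather than the general $\omega$-regular machinery. The hardness direction, by contrast, I would regard as routine here, citing the companion \coRE-hardness result for the $\ltlG(\ltlN^*)$ subfragment and noting it is a special case of temporal safety.
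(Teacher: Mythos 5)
Your decomposition matches the paper's exactly: hardness is delegated to the companion result for $\forall\exists^*\ldot\ltlG(\ltlN^*)$ (\Cref{lem:coReHardness}), and membership is shown by an effective, satisfiability-preserving reduction to first-order logic, whose satisfiability is \coRE. Where you diverge is in \emph{how} safety is exploited inside the FO encoding. The paper does not use a syntactic normal form at all: it takes a nondeterministic \emph{safety automaton} for the body $\phi$ (an automaton with no acceptance condition beyond the existence of an infinite run), introduces predicates $\mathit{State}_q$ over trace tuples and time points, and writes purely local, successor-based constraints asserting that a run exists from the initial state. This meshes perfectly with what FO can express: since acceptance is just ``some infinite run exists,'' any FO model --- even one with a nonstandard or eventually-cyclic time sort --- yields genuine $\omega$-traces together with an accepting run when one unfolds the successor relation from $i_0$, so no order relation and no standard-part argument are needed. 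Your route instead invokes the classical characterization of LTL safety as $\ltlG\chi$ with $\chi$ a \emph{past} formula (Lichtenstein--Pnueli--Zuck / Manna--Pnueli) and translates that directly into FO. This can be made to work, but two of your claims need repair. First, general safety is \emph{not} ``finite-window'' or ``bounded distance'': past formulas such as ``once $p$'' look back unboundedly far, so the body is not a bounded-lookback constraint. Second, and consequently, the encoding is not expressible over $(\nat,\mathit{succ})$ alone --- FO over successor is Gaifman-local and cannot define unbounded past operators --- so your signature must include an order $<$, axiomatized as a discrete linear order with minimum $i_0$ and immediate successors, and you must then argue that in an arbitrary (possibly nonstandard) FO model the initial segment below any standard time point is standard, so that past operators evaluate correctly on the extracted traces. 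With those additions your argument goes through; the trade-off is that your version is more syntactic (no automata, but it needs the normal-form theorem and the order-theoretic bookkeeping), while the paper's automaton encoding is more robust model-theoretically precisely because run existence for a safety automaton is a purely local condition.
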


We show the upper bound of \Cref{theo:coReComp} by giving an effective translation from temporally safe HyperLTL to first-order logic using the fact that satisfiability of first-order logic is \coRE-complete \cite{godel1929vollstandigkeit}. 
Our translation enables the application of first-order satisfiability solvers in the realm of hyperproperties.  

\begin{definition}
	A safety automaton over alphabet $\Sigma$ is a tuple $\mathcal{A} = (Q, q_0, \delta)$ where $Q$ is a finite set of states, $q_0 \in Q$ the initial state, and $\delta \subseteq Q \times \Sigma \times Q$ is the transition relation.
	A trace $t \in \Sigma^\omega$ is accepted by $\mathcal{A}$ if there exists \emph{some} infinite run $r \in Q^\omega$ such that $r(0) = q_0$ and for all $i$, $(r(i), t(i), r(i+1)) \in \delta$.
	For every safety trace property $\phi$, there exists a safety automaton that accepts $\phi$ \cite{KupfermanV99}.
\end{definition}

\begin{restatable}{proposition}{coREContainment}
	The satisfiability problem of temporally safe HyperLTL is in \coRE.
	\label{lem:coReCont}
\end{restatable}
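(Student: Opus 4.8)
The plan is to prove the upper bound by an effective, satisfiability-preserving translation into first-order logic. Since first-order satisfiability is in \coRE\ (its complement, unsatisfiability, is recursively enumerable by Gödel's completeness theorem \cite{godel1929vollstandigkeit}), it suffices to compute, from a temporally safe formula $\varphi = Q\pi_1 \ldots Q\pi_n \ldot \phi$, a first-order sentence $\Phi$ such that $\varphi$ is satisfiable iff $\Phi$ is, and then invoke this bound. The first step is to cash in the safety assumption: since $\phi$, read as an LTL property over $\ap_{\pi_1} \cup \cdots \cup \ap_{\pi_n}$, is safety, I can compute a safety automaton $\mathcal{A} = (Q, q_0, \delta)$ over $\Sigma^n$ accepting exactly the traces that satisfy $\phi$, using the fact recalled above that every safety property is recognised by such an automaton. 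The gain is that acceptance by $\mathcal{A}$ is the purely local condition ``there exists an infinite run,'' which becomes first-order expressible once we equip time with a successor.

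For the encoding I would use a signature (many-sorted, or one-sorted with relativising unary sort predicates) with a sort $\mathrm{Tr}$ for traces and a sort $\mathrm{Tm}$ for time. On $\mathrm{Tm}$ I impose axioms making it a discrete linear order with least element $0$ and a total successor function $\mathrm{succ}$; crucially, first-order logic cannot force $\mathrm{Tm} \cong \nat$, a point I return to below. For each $a \in \ap$ I add a predicate $P_a(x, \tau)$ (``$a$ holds on trace $x$ at time $\tau$''), and for each $q \in Q$ a ``run'' predicate $R_q(x_1, \ldots, x_n, \tau)$ recording the state of a run of $\mathcal{A}$ on the combined trace at time $\tau$. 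The sentence $\Phi$ then asserts that $\mathrm{Tr}$ is non-empty (capturing the non-emptiness requirement on models), mirrors the quantifier prefix of $\varphi$ by quantifying trace variables over $\mathrm{Tr}$, and in place of the body states that the $R_q$ encode an accepting run: the initial state holds at $0$, and at every $\tau$ the recorded state admits a $\delta$-successor recorded at $\mathrm{succ}(\tau)$ consistent with the letter read off the $P_a$ at $\tau$. Packaging acceptance as the existence of a global run relation is what keeps the translation first-order despite the implicit quantification over runs.

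Correctness then splits into two directions. From a HyperLTL model $T$ of $\varphi$ I build a first-order model with $\mathrm{Tm} = \nat$, $\mathrm{Tr} = T$, $P_a$ defined from the traces, and, for each universally bound tuple together with its existential witnesses, a fixed accepting run of $\mathcal{A}$ populating the $R_q$; non-emptiness of $T$ yields non-emptiness of $\mathrm{Tr}$. The converse is where the main obstacle lies: a model of $\Phi$ may interpret $\mathrm{Tm}$ as a \emph{non-standard} discrete order strictly extending $\nat$, so reading traces off all of $\mathrm{Tm}$ need not give genuine $\omega$-traces. This is precisely the step that would fail for liveness, and it is resolved by safety: I restrict to the standard part $\{0, \mathrm{succ}(0), \mathrm{succ}^2(0), \ldots\}$, read each element of $\mathrm{Tr}$ as a trace in $\Sigma^\omega$ on this part, and observe that the run relation restricted to the standard part is already a valid \emph{infinite} run of $\mathcal{A}$ (it starts in $q_0$, and each standard $\tau$ transitions validly to its successor). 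Hence every combined witness trace is accepted by $\mathcal{A}$, thus satisfies $\phi$, and the extracted standard traces form a non-empty $T$ with $T \models \varphi$. The safety automaton is therefore exactly what lets a local first-order condition certify an $\omega$-regular property even over non-standard time, which is both why the reduction is sound and why it is confined to the safety fragment.
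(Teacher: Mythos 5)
Your proof is correct and takes essentially the same approach as the paper: a two-sorted first-order encoding (trace and time-point sorts with a successor), a safety automaton for the LTL body whose runs are recorded by state predicates indexed by trace tuples, the quantifier prefix mirrored by first-order quantification, and Gödel completeness giving the \coRE\ upper bound. The paper's model-extraction argument likewise reads traces off an infinite $\mathit{Succ}$-chain of time points, which is exactly your restriction to the standard part, with safety (acceptance = existence of any infinite run) playing the same role you identify of letting purely local run conditions certify acceptance despite non-standard time.
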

\begin{proof}
	Let $\varphi = Q_1 \pi_1 \ldots Q_n \pi_n \ldot \phi$ be a temporally safe HyperLTL formula. 
	Let $\mathcal{A}_\phi = (Q_\phi, q_{0, \phi}, \delta_\phi)$ be a safety automaton over $\Sigma \coloneqq  \pow{\ap_{\pi_1} \cup \cdots \cup \ap_{\pi_n}}$ that accepts $\phi$ (when interpreted as an LTL formula over $\ap_{\pi_1} \cup \cdots \cup \ap_{\pi_n}$).
	We define in the following an equisatisfiable first-order formula $\Theta$, which can be computed from $\varphi$.
	For readability, we use two-sorted first-order logic, which is equisatisfiable to standard first-order logic. 
	We use two sorts: $\mathit{Trace}$, which contains trace variables, and $\mathit{TimePoint}$, which contains time variables. 
	We use the constant $i_0 : \mathit{TimePoint}$ to indicate the initial time point.
	The predicate $\mathit{Succ}(\cdot, \cdot)$ over $\mathit{TimePoint} \times \mathit{TimePoint}$ encodes the successor relation on time.
	For each $a \in \ap$, we use a predicate $P_a(\cdot, \cdot)$ over $\mathit{Trace} \times \mathit{TimePoint}$ to indicate that on trace $t$, $a$ holds at point in time $i$.
	For each state $q \in Q_\phi$ we use a predicate $\mathit{State}_q$ over $\mathit{Trace}^n \times \mathit{TimePoint}$.
	Informally, $\mathit{State}_q(x_1, \ldots, x_n, i)$ indicates that a run of $\mathcal{A}$ on traces $x_1, \ldots, x_n$ is in state $q$ at timepoint $i$.
	
	We first ensure that each point in time has a successor and that the set of traces is non-empty.
	\begin{alignat*}{2}
		& \phi_\text{succ} && \coloneqq \forall i:\mathit{TimePoint} \ldot \exists i' : \mathit{TimePoint} \ldot \mathit{Succ}(i, i') \\
		& \phi_\text{non-empty} && \coloneqq\exists x : \mathit{Trace} \ldot \top
	\end{alignat*}
	For each state $q \in Q_\phi$, we construct a formula $\rho_q$ (over free variables $x_1, \ldots, x_n$), describing that, for any choice of traces and at any point in time, there is a transition in $\mathcal{A}$.
	\begin{align*}
		\rho_q \coloneqq~ & \forall i, i' : \mathit{TimePoint} \ldot  \mathit{State}_q(x_1, \ldots , x_n, i) \land \mathit{Succ}(i, i') \\
		&\rightarrow \bigvee_{(q, \sigma, q')\in \delta_\phi} \Big(
		\bigwedge_{a_{\pi_j} \in \sigma} P_a(x_j, i) \land \bigwedge_{a_{\pi_j} \not\in \sigma} \neg P_a(x_j, i) \, \land \\
		&\quad\quad\quad\quad\quad\quad 	\mathit{State}_{q'}(x_1, \ldots , x_n, i')\Big)
	\end{align*}
	Now, $\Theta$ is defined as follows
	\begin{align*}
		\Theta \coloneqq &~ Q_1 x_1 : \mathit{Trace}\ldot \ldots Q_n x_n : \mathit{Trace} \ldot \, \phi_\text{succ} \land \phi_\text{non-empty}\\ 
		& \quad \land \bigwedge_{q \in Q} \rho_q \land \mathit{State}_{q_0}(x_1, \ldots, x_n, i_o).
	\end{align*}
	The last conjunct ensures that all trace tuples chosen by the quantifiers have an infinite run in $\mathcal{A}$ starting in the initial state and in the initial time point.
    If $\Theta$ is satisfiable, we can construct a trace assignment by setting the propositions based on the evaluation of $P_a(\cdot, \cdot)$ in a satisfying first-order model of $\Theta$ and vice versa.
    A detailed proof can be found in \ifFull{\Cref{app:secSafety}}.
\end{proof}

To complement the upper bound, we show \coRE-hardness by reducing the complement of the halting problem of deterministic Turing machines. 
The proof shows that already a \emph{single} $\ltlG$ with nested $\ltlN$ suffices for \coRE-hardness. 

\begin{restatable}{lemma}{coReHardness}
    The satisfiability problem is \coRE-hard for specifications $(\top, \varphi)$ where $\varphi$ is of the form $ \forall \exists ^* \ldot \LTLglobally (\ltlN^*)$.\label{lem:coReHardness}
\end{restatable}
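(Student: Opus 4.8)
The plan is to reduce the complement of the halting problem for deterministic Turing machines (equivalently, a deterministic 2CM non-halting problem, which the paper records as \coRE-complete) to satisfiability of a specification $(\top, \varphi)$ with $\varphi$ of the form $\forall \exists^* \ldot \ltlG(\ltlN^*)$. Given a deterministic machine $M$, I would construct $\varphi$ so that it is satisfiable if and only if $M$ does \emph{not} halt, i.e., so that satisfiable models correspond exactly to infinite (non-halting) computations of $M$.

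The encoding I would use is the standard one where a single trace encodes one configuration of the machine and the set of traces in a model encodes the entire (infinite) run, one configuration per trace. Concretely, I would use atomic propositions to mark the current instruction/state, the counter values (or the tape content, if encoding a Turing machine directly), and some bookkeeping propositions. The key idea is that the $\forall \pi \exists \pi'$ structure enforces a \emph{successor} relation: for every trace $\pi$ encoding a configuration $c$, there must exist a trace $\pi'$ encoding exactly the successor configuration $\delta(c)$. Because $M$ is deterministic, the successor is unique, so a local consistency check suffices and can be written as a single $\ltlG$ over a bounded window. The $\ltlN$ operators in the body are precisely what lets me compare position $i$ on $\pi$ with position $i$ (or $i+1$) on $\pi'$, performing the ``diagonal'' comparison that the paper notes is impossible without $\ltlN$s (this is why restricting to $\ltlG$ without $\ltlN$ is decidable, whereas $\ltlG(\ltlN^*)$ is not). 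I would also add conjuncts (expressible as an initial-position constraint folded under the $\ltlG$, or as a propositional conjunct) forcing the existence of the initial configuration $(l_1, 0, 0)$, and a conjunct stating that \emph{no} trace encodes a halting configuration, so that a model can only exist when the run goes on forever.

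The correctness argument then splits into the two standard directions. If $M$ has an infinite computation, I take $\traceSet$ to be the set of traces encoding the successive configurations of that unique run; this set satisfies the successor constraint (each configuration's successor is present), contains the initial configuration, and never reaches \texttt{halt}, so $\traceSet \models \varphi$. Conversely, if $\traceSet \models \varphi$, I argue that $\traceSet$ must contain the initial configuration, and then inductively that for every $n$ it contains the (unique, by determinism) configuration reached after $n$ steps: the $\forall \exists$ clause forces each present configuration's successor into $\traceSet$, and the no-halt conjunct guarantees this chain never terminates, yielding an infinite computation of $M$. Determinism is what makes the ``$\exists \pi'$'' witness behave like a genuine successor function rather than merely \emph{some} consistent continuation, which is essential for reading an actual run back out of an arbitrary model.

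The main obstacle I expect is engineering the configuration encoding so that the entire successor check fits inside a \emph{single} $\ltlG(\ltlN^*)$ body while respecting the fragment's syntactic constraints, and in particular handling the unbounded counter values with only a bounded number of $\ltlN$s. Representing a counter value $v$ in unary along the trace (a block of $v$ marked positions) means the successor relation must assert, under one $\ltlG$, that the counter-block of $\pi'$ agrees with that of $\pi$ except for the single increment/decrement dictated by the current instruction; getting increment, decrement, and zero-test all expressible by a fixed finite window of $\ltlN$-comparisons between the two traces is the delicate part. I would also need to ensure the body genuinely describes a \emph{safety} property is \emph{not} required here (the statement only claims membership in the $\ltlG(\ltlN^*)$ fragment, and $\ltlG(\ltlN^*)$ bodies are automatically safety), so once the encoding is in place the hardness claim follows, and indeed this simultaneously witnesses the \coRE-hardness claimed in Table~\ref{tab:results} for the $\forall^*\exists^*\ldot\ltlG(\ltlN^*)$ temporal-safety fragment.
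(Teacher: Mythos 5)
Your proposal is correct and follows essentially the same route as the paper's proof: the paper likewise reduces from non-halting of deterministic Turing machines, encodes one configuration per trace, uses the $\forall\pi\exists\pi'$ prefix to force the successor configuration onto another trace via local window comparisons under a single $\ltlG$ with non-nested $\ltlN$s, asserts the presence of the initial configuration via an extra existential trace folded into the prefix, and encodes only non-halting transitions so that a model exists iff the run is infinite. The engineering subtlety you flag is resolved in the paper by propagating configuration validity (e.g., ``the head marker $h$ holds exactly once'') inductively through the successor constraint rather than stating it outright, since that property is not expressible within a single $\ltlG(\ltlN^*)$ body.
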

\begin{proofSketch}
    We encode the non-termination of deterministic Turning machines, which is \coRE-hard.
    Each trace represents a configuration of the machine and the $\forall\exists$ formula demands that each configuration encoded in trace $\pi$ has a successor configuration on some trace $\pi'$.
    As the transitions of a TM can be checked locally, we can encode a successor configuration by comparing every three consecutive positions on $\pi$ and  $\pi'$, which is possible with a single globally. 
    We give a detailed proof in \ifFull{\Cref{app:secSafety}}.%
\end{proofSketch}

This completes the proof of \Cref{theo:coReComp}.

\subsection{Temporal Safety with Functional Specifications}

We now investigate satisfiability for the combination of temporally safe HyperLTL formulas and LTL properties.
If the LTL specification is safety, we can simply merge the trace property with the temporally safe hyperproperty, maintaining the applicability of \Cref{theo:coReComp}.
The situation changes if we allow non-safety trace properties.

\begin{restatable}{theorem}{safetySigma}\label{theo:safetyUndec}
    The satisfiability problem is $\Sigma^1_1$-complete for specifications $(\psi, \varphi)$ where $\varphi$ is of the form $\forall \exists^* \ldot \LTLglobally$.
\end{restatable}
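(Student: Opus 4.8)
Satisfiability is $\Sigma_1^1$-complete for specifications $(\psi, \varphi)$ where $\varphi$ is of the form $\forall\exists^*.\ltlG$ (a single globally over a next-free body), combined with an arbitrary LTL formula $\psi$.

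Let me think about both directions.

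**Upper bound ($\Sigma_1^1$ membership):**

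The general satisfiability of HyperLTL is $\Sigma_1^1$-complete (Fortin et al.). A specification $(\psi, \varphi)$ asks for a nonempty set $T$ of traces such that every trace satisfies $\psi$ and $T \models \varphi$. We can encode "every trace satisfies $\psi$" as a HyperLTL formula $\forall \pi. \psi_\pi$ (replacing atoms $a$ with $a_\pi$). So the specification $(\psi, \varphi)$ is satisfiable iff the HyperLTL formula $(\forall \pi. \psi_\pi) \land \varphi$ is satisfiable (with nonempty model requirement, but nonemptiness is also expressible). Since general HyperLTL satisfiability is in $\Sigma_1^1$, and we just have a particular HyperLTL formula, membership in $\Sigma_1^1$ follows immediately.

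Actually, more directly: the existence of a countable set of traces (a model can be taken countable — by a Löwenheim-Skolem type argument, HyperLTL models can be taken countable) with the required properties is expressible via second-order existential quantification over the set of traces (encoded as a set of naturals) followed by arithmetic conditions checking the LTL and HyperLTL semantics. So membership is in $\Sigma_1^1$.

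**Lower bound ($\Sigma_1^1$-hardness):**

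This is the main work. We need to reduce a $\Sigma_1^1$-complete problem. The excerpt mentions: "Deciding if a machine has a recurring computation is $\Sigma_1^1$-hard" (recurring = infinite computation visiting instruction $l_1$ infinitely often).

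So the plan is to reduce the recurring computation problem for 2-counter machines to satisfiability of $(\psi, \varphi)$ with $\varphi$ of form $\forall\exists^*.\ltlG$.

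The key idea — and this mirrors the earlier $\coRE$-hardness proof (Lemma for $\ltlG(\ltlN^*)$) — is:
- Each trace encodes a configuration of the 2CM.
- The $\forall\exists^*.\ltlG$ formula enforces that each configuration-trace $\pi$ has a successor configuration trace $\pi'$.
- But now, crucially, we've lost the $\ltlN$ operators. In the $\coRE$ case we used $\ltlG(\ltlN^*)$ to compare consecutive positions, which is how transitions were checked "diagonally." Without $\ltlN$, we cannot compare position $i$ on one trace with position $i+1$ on another.

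So the new trick must be: **push the functional/temporal complexity into the LTL specification $\psi$**. Since $\psi$ is arbitrary LTL, we can use $\psi$ to force each trace to have a rigid internal structure (e.g., a counter value encoded in unary over time, with the "next configuration's shift" baked into how we lay out the trace). The $\ltlG$ hyperproperty with no $\ltlN$ can then do position-wise comparisons $a_\pi \leftrightarrow a_{\pi'}$, which suffices if the LTL formula $\psi$ has already aligned the encodings so that position-wise comparison captures the transition relation.

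Here's the plan in detail. I'll let each trace encode one configuration with counter values laid out so that the successor relation becomes a position-wise (same-index) comparison. The LTL formula $\psi$ enforces:
1. A well-formedness condition: each trace encodes a valid configuration (instruction label + two unary counter values in designated "tracks" of atomic propositions, with proper formatting over time).
2. The local transition structure within a single trace — i.e., $\psi$ relates the "current config fields" to "intended next config fields" that the trace also carries. This is where arbitrary LTL (with its $\ltlN$, $\ltlU$) does all the diagonal/arithmetic work of incrementing/decrementing/zero-testing counters.

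The hyperproperty $\varphi = \forall\pi\exists\pi'. \ltlG(\text{body})$ then only needs to say: the "next config" fields written on trace $\pi$ match the "current config" fields of some trace $\pi'$, via position-wise equivalences $\bigwedge_a \ltlG(a_\pi^{\text{next}} \leftrightarrow a_{\pi'}^{\text{cur}})$. Because $\psi$ has already ensured each trace correctly computed its own successor-fields from its current-fields, this stitches the traces into an infinite computation sequence.

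For the **recurrence** (to get $\Sigma_1^1$ rather than $\coRE$), we additionally need to force the computation to be infinite AND visit $l_1$ infinitely often. Infinity of the trace-chain is automatic from the $\forall\exists$ structure (every config has a successor). The recurrence condition — visiting $l_1$ infinitely often — is the genuinely hard/subtle part: with a $\forall\exists$ prefix we produce an infinite sequence of configurations, but nothing forces $l_1$ to recur. The standard trick is to add another conjunct/marker handled by LTL: e.g., require (via $\psi$ or via the hyperproperty) that each trace carries a monotone "distance-to-next-$l_1$" token, or use a liveness gadget forcing $l_1$-configs to appear cofinally in the chain.

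**The main obstacle.** The hard part is exactly capturing *recurrence* ($\Sigma_1^1$) rather than mere nontermination ($\coRE$) while staying inside the $\forall\exists^*.\ltlG$ fragment. The $\forall\exists$ chain gives an infinite run, but a $\forall\exists^*.\ltlG$ hyperproperty with no $\ltlN$ has very limited power to express "$l_1$ occurs infinitely often along the chain of configurations," because the chain ordering is only implicitly defined by the successor-matching, not directly accessible. I expect the resolution to be a careful encoding where the LTL specification $\psi$ tags configurations with auxiliary information (e.g., whether an $l_1$-configuration is reachable going forward, or a counter of steps since the last $l_1$) so that the combination of the unbounded LTL bookkeeping and the $\forall\exists$ closure forces the $\Sigma_1^1$ recurrence condition. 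Getting this gadget right — ensuring that a valid model exists iff a recurring computation exists, with no spurious models that "cheat" by reusing traces or forming lasso-shaped finite chains — is where the real care is needed. I would verify both reduction directions: a recurring computation yields a model (one trace per configuration along the run), and conversely any model yields a recurring computation (using König's lemma / the successor-matching to extract an infinite, $l_1$-recurring configuration sequence).
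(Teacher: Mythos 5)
Your proposal follows essentially the same route as the paper's proof: each trace encodes a \emph{pair} of consecutive configurations over disjoint copies of $\ap$, the arbitrary LTL formula $\psi$ does all the ``diagonal'' transition-checking inside a single trace, the $\forall\exists\ldot\ltlG$ hyperproperty only performs position-wise matching of the second configuration of $\pi$ with the first configuration of $\pi'$, and recurrence is enforced by exactly your ``monotone distance-to-next-$l_1$'' token, which the paper realizes as a third counter $t$ that strictly decreases at every step and may be reset only when instruction $l_1$ is executed. One caution on your hedging: the alternative gadget you mention, a counter of steps \emph{since} the last $l_1$, would not work (an ever-growing counter forces nothing), whereas the countdown version succeeds precisely because well-foundedness of $\nat$ forces the counter to hit zero, and hence $l_1$ to recur, infinitely often.
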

\begin{proofSketch}
    Membership in $\Sigma^1_1$ follows from~\cite{FortinKT021}.
    For hardness, we encode recurring computations of nondeterministic two-counter machines.
    We represent each configuration by encoding the current instruction and two atomic propositions $\mathbf{c}_1, \mathbf{c}_2$ that hold exactly once, i.e., counter $x$ has value $v$ if $\mathbf{c}_x$ occurs in the $v$th position. 
    To ensure a recurring computation, we add a third counter $t$ that decreases in each step. When it reaches $0$, the trace must encode the initial instruction, at which point $t$ is reset to any value. 
    The key idea of the proof is that each trace in the model represents two consecutive configurations, which are encoded over disjoint copies of $\ap$ (for $i \in \{1, 2\}$, $\ap^i = \{a^i \mid a \in \ap\}$).
    In LTL, we can express that the second configuration encoded in a trace is a successor of the first configuration in that trace.
    Furthermore, we express in LTL that the value of $t$ either decreases or the initial instruction is executed. 
    In the HyperLTL property we ensure the existence of the initial configuration, and state that for each trace $\pi$, there exist a $\pi'$ such that the second configuration on $\pi$ equals the first on $\pi'$.
    We can express the latter as
    \begin{align*}
        \textstyle \forall \pi \exists \pi' \ldot \LTLglobally \bigwedge_{a \in \ap} a^2_{\pi} \leftrightarrow a^1_{\pi'}.
    \end{align*}
    The resulting specification is satisfiable if and only if the machine has a recurring computation.
    We give a detailed proof in \ifFull{\Cref{app:secSafety}}. 
\end{proofSketch}

\subsection{Propositional Hyperproperties and Invariants}

As we have seen, with arbitrary LTL properties present, a single $\ltlG$ operator suffices to jump to $\Sigma^1_1$.
This leaves hyperproperties expressed using only $\ltlN$s as the only sub-analytical fragment. 
We settle the precise complexity of the resulting problem to be \NEXPT-complete.

\begin{restatable}{theorem}{propTheo}\label{theo:prop}
    The satisfiability problem is \NEXPT-com\-plete for specifications $(\psi, \varphi)$ where $\varphi$ is of the form $Q^* \ldot \ltlN^*$. Hardness holds already for $\psi = \top$, a $\forall^*\exists^*$ prefix, and no $\ltlN$s.
\end{restatable}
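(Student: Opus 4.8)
The plan is to establish the two matching bounds separately, both built on one structural fact: since the body $\phi$ of $\varphi$ uses only $\ltlN$ operators, it has a finite nesting depth $d \leq |\varphi|$, so whether $\phi$ holds under a trace assignment depends only on the length-$(d{+}1)$ prefixes of the assigned traces. Hence $\varphi$ can distinguish traces only up to their prefix in $\Sigma^{d+1}$, and for any candidate model $\traceSet$, evaluating $\varphi$ over $\traceSet$ coincides with evaluating it with every quantifier ranging over the finite prefix set $P \coloneqq \set{t[0,d] \mid t \in \traceSet} \subseteq \Sigma^{d+1}$. This equivalence is proved by induction on the quantifier prefix, using that the projection $t \mapsto t[0,d]$ is surjective onto $P$ and that $\phi$ is prefix-determined.

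For membership in \NEXPT, I would first show that $(\psi, \varphi)$ is satisfiable if and only if there is a non-empty $P \subseteq \Sigma^{d+1}$ such that (i) every $p \in P$ extends to an infinite trace satisfying $\psi$ (call such $p$ \emph{$\psi$-good}), and (ii) $\varphi$ holds with all quantifiers ranging over $P$. The forward direction takes $P = \set{t[0,d] \mid t \in \traceSet}$; the backward direction builds a model by choosing, for each $p \in P$, one $\psi$-witnessing trace with prefix $p$. This yields the algorithm: guess $P$ (at most $\pow{|\ap|(d+1)}$ prefixes, hence exponentially many bits); check $\psi$-goodness of each $p$ by computing the states of a nondeterministic B\"uchi automaton for $\psi$ (of size $\pow{O(|\psi|)}$) reachable after reading $p$ and testing whether any of them can start an accepting run (a standard nonemptiness check); and verify~(ii) by brute-forcing all $\le |P|^{n}$ instantiations of the $n$ quantifiers, evaluating the depth-$d$ body in polynomial time each. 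Guess and verification both run in exponential time, placing the problem in \NEXPT for arbitrary prefixes $Q^*$.

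For \NEXPT-hardness I would reduce from the \NEXPT-complete problem of tiling a $2^n \times 2^n$ grid (with $n$ given in unary) under tile types $D$, horizontal and vertical compatibility relations $H, V$, and a fixed initial condition. Because the body carries no $\ltlN$, only position~$0$ of each trace matters, so a model is essentially a set $P \subseteq \pow{\ap}$ of letters; I use the propositions of each letter to encode a grid position (as $2n$ bits) together with a tile of $D$. The $\forall^*\exists^*$ formula then asserts, with a purely propositional body: a $(0,0)$-cell exists (one $\exists$); any two cells at the same position carry the same tile (two $\forall$s); and every non-right-boundary (resp.\ non-top-boundary) cell $\pi$ has a cell $\pi'$ whose position is the binary increment of $\pi$'s horizontal (resp.\ vertical) coordinate and whose tile is $H$- (resp.\ $V$-) compatible (one $\forall$ with two $\exists$s). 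Binary increment, boundary tests, and same-position tests are all fixed propositional formulas of size $O(n)$, so the result is a $\forall^2\exists^3 \ldot \phi$ formula over $O(n + \log|D|)$ propositions, computable in polynomial time, with $\psi = \top$ and no $\ltlN$s.

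The main obstacle is the soundness direction of the reduction: showing that every satisfying set $P$ induces a genuine tiling of the entire grid. The combination of ``at most one tile per position'' with ``each non-boundary cell has a compatible right and up neighbour in $P$'' and ``$(0,0)$ exists'' is what forces---by induction first along the bottom row and then up each column---that all $2^{2n}$ positions occur, each with a unique tile, and that all local constraints hold; making these constraints interlock correctly while remaining purely propositional is the crux. The analogous care on the upper-bound side is the inductive proof that projecting a model to its prefix set $P$ preserves satisfaction of $\varphi$ in both directions.
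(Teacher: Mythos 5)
Your proposal is correct. The membership argument is essentially the paper's own: the paper also observes that only a bounded lookahead $k$ matters, nondeterministically guesses a set of length-$k$ prefixes that are extendable to traces satisfying $\psi$ (via a B\"uchi automaton for $\psi$), and verifies the hyperproperty on that finite set by brute force; your prefix-set characterization and its two-way proof is the same argument spelled out slightly more explicitly. The hardness argument, however, takes a genuinely different route. The paper reduces directly from acceptance of an exponentially time-bounded nondeterministic Turing machine: each letter encodes a tuple $(s,p,\gamma,q)$ of time step, tape position, symbol, and state (with $s,p$ as binary counters in the propositions), and the formula enforces consistency, agreement of states within a time step, propagation of the initial configuration via a $\forall\exists$ increment constraint, transition validity via a $\forall^3\exists^3$ constraint on triples of consecutive positions, and acceptance. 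You instead reduce from the exponential square tiling problem, which pre-packages exactly this computation-table combinatorics. Both reductions hinge on the same key trick --- binary position counters plus $\forall^*\exists^*$ formulas with a purely propositional increment relation, so that exponentially many cells are forced to exist without ever being named explicitly --- so the encodings are structurally parallel. What your route buys is a shorter formula ($\forall^2\exists^3$ versus the paper's $\forall^3\exists^3$ transition constraint) and a cleaner soundness induction (first along the bottom row, then up each column); what it costs is reliance on the known \NEXPT-completeness of $2^n \times 2^n$ tiling with an origin constraint (itself proved by a computation-table argument), plus the standard encoding details you rightly flag: exactly one valid tile per letter, and guarding the increment obligations on boundary cells.
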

\begin{proofSketch}
    To show containment, we nondeterministically guess a set of finite traces $M \subseteq \Sigma^k$, where $k$ is the number of $\ltlN$ operators in the formula. 
    We then verify that each trace in $M$ can be extended to one satisfying $\psi$ and that $M$ is a model of the hyperproperty.
    For the lower bound, we reduce the acceptance of an exponential-time bounded nondeterministic Turing machine to a HyperLTL formula.
    Our encoding is a $\forall^*\exists^*$-formula, which does not contain any temporal operators (not even $\ltlN$) and no trace property. 
    Each trace in our model encodes a piece of information $(s, p, \gamma, q)$, where $s, p \in \nat$, $\gamma$ is a tape symbol of the TM, and $q$ either a state of the TM or $\bot$.
    The tuple $(s, p, t, q)$ encodes that in time-step $s$ and at position $p$, the tape content is $\gamma$, and either the head is at position $p$ and the machine is in state $q$, or the head is not at position $p$ (if $q = \bot$).
    As the TM is time (and thus space) bounded, $s$ and $p$ are bound by $2^n$ for some $n$.
    We show that we can express in HyperLTL that the information encoded in a given model defines a valid accepting run of the TM.
    The resulting formula is satisfiable iff TM has an accepting computation. 
    As we can never refer to all exponentially many positions explicitly, we use $\forall^*\exists^*$ formulas to encode a counter that references all positions. 
    We give a formal proof in the \ifFull{\Cref{app:secSafety}}. 
\end{proofSketch}

We note that HyperLTL without temporal operators has a strong connection to quantified boolean formulas (QBF), the validity of which is a standard \texttt{PSPACE}-complete problem \cite{Stockmeyer76}.
In contrast to QBF, where the quantifier structure spans the polynomial hierarchy, our proof shows that in HyperLTL, the $\forall^*\exists^*$ fragment suffices to show \NEXPT-hardness (refuting a conjecture in \cite{decidable-hyperltl-2} that temporal-operator-free HyperLTL is equivalent to QBF).
The reason is that HyperLTL satisfiability asks for the \emph{existence} of some model for which the formula holds (which is related to the more general second-order QBF problem \cite{Luck16}).

If we forgo the additional trace property, we can also show the following lemma.
Hardness already holds if we disallow propositional formulas outside of the $\ltlG$.  

\begin{lemma}\label{lem:saftyDecNexptime}
    The satisfiability problem is \NEXPT-com\-plete for specifications $(\top, \varphi)$ where $\varphi$ is of the form $\forall^* \exists^* \ldot \ltlG$.
\end{lemma}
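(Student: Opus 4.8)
The plan is to reduce satisfiability of a formula $\varphi = \forall\pi_1\cdots\forall\pi_n\exists\pi_{n+1}\cdots\exists\pi_{n+m}\ldot(\ltlG\phi)\land\phi'$ (with $\phi,\phi'$ propositional over $\ap_{\pi_1}\cup\cdots\cup\ap_{\pi_{n+m}}$) to a purely propositional condition on subsets of $\Sigma$, from which both bounds follow. Call a nonempty set $S\subseteq\Sigma$ \emph{self-witnessing} if for every $(\sigma_1,\ldots,\sigma_n)\in S^n$ there are $(\sigma_{n+1},\ldots,\sigma_{n+m})\in S^m$ such that the single column $(\sigma_1,\ldots,\sigma_{n+m})$ (assigning each atom $a_{\pi_i}$ according to $\sigma_i$) satisfies $\phi\land\phi'$. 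I would show that $\varphi$ is satisfiable iff a self-witnessing set exists. For the ``if'' direction, given self-witnessing $S$, I take the model $\traceSet = \{\sigma^\omega : \sigma\in S\}$ of constant traces: since $\phi,\phi'$ contain no $\ltlN$, column satisfaction at a position depends only on that column's letters, and for constant traces every column equals the constant tuple, which satisfies $\phi\land\phi'$ (hence $\phi$) everywhere. For the ``only if'' direction, I take $\traceSet\ltlmodels\varphi$ and set $S = \{t(0) : t\in\traceSet\}$ (the position-$0$ slice, nonempty); given $(\sigma_1,\ldots,\sigma_n)\in S^n$ with $\sigma_i = t_i(0)$, the model supplies witnesses $s_1,\ldots,s_m\in\traceSet$ for $(t_1,\ldots,t_n)$ whose column is good at every position, and reading off position $0$ yields $\sigma_{n+i}:=s_i(0)\in S$ with $(\sigma_1,\ldots,\sigma_{n+m})$ satisfying $\phi\land\phi'$. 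The crux of the whole argument, and the step I expect to be the main obstacle to get exactly right, is this observation that the absence of $\ltlN$ makes column satisfaction position-local, so that a single well-chosen column already decides everything and constant-trace models suffice.

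Membership in \NEXPT then follows directly: I would guess $S\subseteq\Sigma$, which has at most $2^{|\ap|}$ letters and is thus of exponential size, and verify the self-witnessing condition by iterating over all $|S|^n$ universal tuples, searching the $|S|^m$ existential tuples for each, and checking $\phi\land\phi'$ on the resulting column in polynomial time. Since $n,m,|\ap|$ are bounded by $|\varphi|$, this verification runs in exponential time, placing the problem in \NEXPT.

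For the lower bound I would reuse \Cref{theo:prop}, whose hardness already holds for a $\forall^*\exists^*$ formula $\chi = \forall\pi_1\cdots\forall\pi_n\exists\pi_{n+1}\cdots\exists\pi_{n+m}\ldot\phi'$ with propositional body and no LTL specification. I map $\chi$ to $\chi' := \forall\pi_1\cdots\forall\pi_n\exists\pi_{n+1}\cdots\exists\pi_{n+m}\ldot\ltlG\phi'$, which lies in $\forall^*\exists^*\ldot\ltlG$ and uses no propositional conjunct outside the $\ltlG$. Since $\ltlG\phi'$ implies $\phi'$ (evaluated at position $0$, which is precisely the semantics of the propositional body of $\chi$), every model of $\chi'$ is a model of $\chi$; conversely, the constant-trace construction above (instantiated with $\phi := \phi'$ and trivial outer conjunct) turns any model of $\chi$ into a model of $\chi'$. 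Hence $\chi$ and $\chi'$ are equisatisfiable, and as the map is trivially polynomial, \NEXPT-hardness transfers, matching the claim that hardness persists even without propositional formulas outside the $\ltlG$.
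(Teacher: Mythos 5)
Your proposal is correct and takes essentially the same approach as the paper: the paper's proof is a one-liner asserting that $\forall^n\exists^m\ldot(\ltlG\phi)\land\phi'$ is satisfiable iff $\forall^n\exists^m\ldot\phi\land\phi'$ is satisfiable, and then invoking \Cref{theo:prop} for both bounds --- your self-witnessing-set argument (constant traces, position-locality of an $\ltlN$-free body) is exactly a detailed proof of that asserted equisatisfiability. Your direct guess-and-check membership algorithm and your hardness transfer via $\chi\mapsto\forall^*\exists^*\ldot\ltlG\phi'$ are the same reductions that the paper's citation of \Cref{theo:prop} performs implicitly, so the only difference is that you spell out what the paper leaves to the reader.
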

\begin{proof}
	A property $\varphi = \forall^n\exists ^m\ldot (\ltlG \phi )\land \phi'$ (where $\phi, \phi'$ do not contain any temporal operators) is satisfiable iff $\forall^n \exists^m \ldot \phi \land \phi'$ is satisfiable.
    The result then follows using \Cref{theo:prop}.
\end{proof}

\section{Temporal Liveness}\label{sec:livness}

The natural counterparts of safety properties are liveness properties, which postulate that ``something good happens eventually''. 
Similar to the case of hypersafety, hyperliveness as a fragment is not well-suited when studying satisfiability: any hyperliveness property is, by definition, satisfiable. 
Analogously to our study of temporal safety, we instead study HyperLTL properties whose body is a liveness property.

\begin{definition}
    \label{definition:tl}
    A HyperLTL formula $Q \pi_1 \ldots Q \pi_n \ldot \phi$ is a \emph{temporal liveness} property if $\phi$ (interpreted as an LTL formula over $\ap_{\pi_1} \cup \cdots \cup \ap_{\pi_n}$) describes a liveness property.
\end{definition}

We examine the temporal liveness fragment following the structure of \Cref{sec:safety} and point out analogous results wherever possible. 
As in \Cref{sec:safety}, we first examine the entire class of temporal liveness and then gradually restrict this class to obtain new decidability results. 

\subsection{Hyperliveness and Temporal Liveness}
As opposed to the safety case (cf.~\Cref{prop:hypersafetyVTemSafety}), temporal liveness and hyperliveness are incomparable.
In temporal liveness, we can easily express falsity via $ \forall \pi \forall \pi'\ldot \ltlF (a_\pi \land \neg a_{\pi'})$, which is not hyperliveness. 
Conversely, the property $\forall\pi \exists \pi'\ldot \ltlG (a_\pi \not\leftrightarrow a_{\pi'})$ is hyperliveness (as we can always add more witness traces) but not expressible in temporal liveness.

\subsection{General Temporal Liveness}

Analogous to \Cref{theo:coReComp}, we consider the full fragment of temporal liveness.
Different from the fragment of temporal safety, the class of temporal liveness is already $\Sigma^1_1$-hard.

\begin{theorem}\label{theo:livenessAnaytical}
    The satisfiability problem is $\Sigma_1^1$-hard for $\forall\exists^* $ temporal liveness HyperLTL formulas.
\end{theorem}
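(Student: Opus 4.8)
The plan is to prove hardness by reducing from the recurring-computation problem for nondeterministic two-counter machines, which is $\Sigma_1^1$-hard (\Cref{sec:prelim}). I would reuse the configuration-chaining encoding from the proof sketch of \Cref{theo:safetyUndec}: each trace encodes a configuration of the machine (the current instruction together with two markers $\mathbf{c}_1,\mathbf{c}_2$ whose positions give the counter values), a third strictly decreasing timer $t$ that is reset exactly when instruction $l_1$ is executed forces the run to be recurring, and the prefix $\forall\pi\exists\pi'$ demands that the configuration encoded on $\pi$ has a successor configuration on some $\pi'$. An infinite, correctly chained set of traces then corresponds precisely to a recurring run, so the machine recurs iff the constructed formula has a model.

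The main obstacle — and the only genuine difference from \Cref{theo:safetyUndec} — is that the body must describe a \emph{liveness} property over the product alphabet $\ap_\pi\cup\ap_{\pi'}$, whereas the two constraints I need, intra-trace transition correctness and the inter-trace matching of consecutive configurations, are both genuine safety properties (a single mismatch is an irreparable finite violation), so neither can be stated with a plain $\ltlG$. The key step is therefore to re-express the conjunction of these safety constraints as an \emph{equisatisfiable} liveness body.

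I would do this by introducing a fresh ``escape'' proposition and replacing the safety core by a disjunction $\mathit{Correct}(\pi,\pi')\lor\mathit{Esc}(\pi,\pi')$. Since any superset of a dense set is dense, choosing $\mathit{Esc}$ to contain a dense property makes the whole body a liveness property. The delicate point, which I expect to be the crux, is to design $\mathit{Esc}$ so that it does \emph{not} trivialize satisfiability: a naive escape such as $\ltlF(\mathit{escape}_\pi)$ would make every set a model, because it refers only to $\pi$. The device I would use is a well-founded \emph{descending rank}: escaping on a trace $\pi$ is licensed only if the existential witness $\pi'$ escapes at a strictly smaller rank, so that a minimal-rank escaper would require a witness of rank $-1$, which cannot exist; by induction over $\nat$ no trace in a genuine model can escape, forcing $\mathit{Correct}$ to hold for every $\pi$ with its witness and recovering the correct chain. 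Checking that this escape still leaves the body dense — reconciling the downward rank constraint on the witness track with completability of arbitrary finite prefixes — is the technically demanding part. An alternative I would keep in reserve is to weaken correctness to the automatically dense form $\ltlF\ltlG(\theta)$ (``eventually always correct''), which is trivially liveness, and to relocate the positional encoding so that the finitely many possibly-corrupted early positions become irrelevant, arguing that an eventually-correct recurring chain still witnesses a recurring computation.

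Combining the escape-freeness argument with the chaining and the recurrence timer then shows that the constructed $\forall\exists^*$ temporal-liveness formula is satisfiable iff the machine has a recurring computation, which establishes $\Sigma_1^1$-hardness. As the formula uses no LTL specification, the bound holds both with and without functional specifications, matching the corresponding rows of \Cref{tab:results}.
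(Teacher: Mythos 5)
Your overall skeleton is the same as the paper's: the paper also obtains \Cref{theo:livenessAnaytical} from the 2CM recurring-computation encoding of \Cref{theo:safetyUndec}, and the entire content of the theorem is, exactly as you say, the step that turns the safety-flavored body into an equisatisfiable liveness body. But that step is where your proposal has a genuine gap. Your primary device --- escape licensed only when the witness escapes at a \emph{strictly smaller} rank --- cannot simultaneously deliver the two properties you need from it. For the infinite-descent argument (``no trace in a genuine model can escape'') to go through, a trace carrying an escape marker must be satisfiable \emph{only} via the escape disjunct, never via $\mathit{Correct}$. But then consider a finite prefix over the joint alphabet that commits the $\pi$-track to the minimal rank $0$ (e.g., places the first escape marker at position $0$) and gives the $\pi'$-track no earlier marker: no extension satisfies $\mathit{Correct}$ (the marker is already there) and none satisfies $\mathit{Esc}$ (no rank $-1$ exists). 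This prefix is dead, so the body is not dense, hence not a liveness property. The very well-foundedness that blocks cheating also blocks completability of prefixes --- this is not a technical detail to be checked, as your proposal hopes, but a contradiction in the design as stated. The only way out is to make the rank \emph{revisable upward} by every extension, i.e., determined by the \emph{last} occurrence of a marker rather than the first; and that is precisely the paper's mechanism (\Cref{theo:fromGeneralToLiveness}): a fresh proposition $\dagger$, a body of the form $\ltlF\bigl[\bigwedge_i \dagger_{\pi_i} \land \ltlN\ltlG\bigwedge_i\neg\dagger_{\pi_i}\land\phi\bigr]$, where density holds because any prefix can be repaired by placing one final, aligned $\dagger$ later, and soundness holds because (in a minimal model) the last-$\dagger$ positions of all traces coincide and the suffixes from there form a model of the original formula. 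Your fallback ($\ltlF\ltlG$ of the correctness constraint) has a second unresolved gap: counter values in the encoding \emph{are} marker positions, so constraints that hold only from some point on recover nothing about the encoded configurations; making this work would require re-encoding each configuration infinitely often (e.g., in repeated blocks), a substantial redesign rather than a ``relocation.''

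Two further remarks for comparison. First, your non-trivialization argument (once the rank is fixed to last-occurrence) is actually cleaner than the paper's converse direction, which needs a minimality assumption on the model to align the $\dagger$-positions; an infinite-descent argument avoids that. Second, the paper's route is deliberately modular: \Cref{theo:fromGeneralToLiveness} transforms \emph{any} specification with satisfiable LTL parts into an equisatisfiable temporal-liveness one, which yields not just $\Sigma_1^1$-hardness but the stronger fact that temporal liveness is satisfiability-complete for $\forall^*\exists^*$ HyperLTL (and, via \cite{decidable-hyperltl-2}, for full HyperLTL). A patched version of your direct reduction would prove the stated theorem but not this more general reusable lemma. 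You would also still need to fold the initial-configuration conjunct and the per-trace validity constraints (the LTL part $\psi$ of \Cref{theo:safetyUndec}) under the same escape mechanism, since any safety-like conjunct left outside it breaks density of the combined body --- the paper's wrapper handles all of this uniformly by placing the entire original body under the $\ltlF$.
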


To prove \Cref{theo:livenessAnaytical}, we show a stronger result: we can effectively reduce every $\forall^*\exists^*$ HyperLTL property to an equisatisfiable temporal liveness property.
\begin{theorem}\label{theo:fromGeneralToLiveness}
    Let $(\psi, \varphi)$ be a specification where $\varphi$ is of the form $\forall^n\exists^m \ldot \phi$, and $\psi$ and $\phi$ are arbitrary but satisfiable LTL formulas.
    Then there is an effectively computable specification $(\psi', \varphi')$ such that $\psi'$ is an LTL liveness property, $\varphi'$ is a $\forall^n\exists^m$ temporal liveness property, and $(\psi, \varphi)$ and $(\psi', \varphi')$  are equisatisfiable.
\end{theorem}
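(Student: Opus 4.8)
The plan is to introduce a single fresh proposition $r$ that acts as a synchronous ``warm-up'' flag: in the intended models every trace carries a finite initial block in which $r$ is false, after which $r$ stays true forever, and the original $\ap$-content of the trace is placed exactly from the position where $r$ switches on. Crucially, whenever we compare a tuple of traces we force them to switch \emph{simultaneously}, so that the original body $\phi$ (which may use $\ltlN$) is evaluated on correctly aligned suffixes. The trace specification is taken to be
\[
  \psi' \coloneqq \ltlF \ltlG r ,
\]
which is liveness (any finite prefix extends to a trace that eventually keeps $r$ on, cf.\ \Cref{def:safetyLiveness}) and which forces every trace of a model to have a well-defined last switch position.

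For the hyperproperty I keep the prefix $\forall^n \exists^m$ and use a body $\phi' \coloneqq \ell \lor \Xi$, where $\ell$ is a liveness ``escape'' and $\Xi$ is the faithful encoding
\begin{align*}
  \ell &\coloneqq \textstyle\bigvee_{i=1}^{n+m} \ltlG \ltlF \neg r_{\pi_i}, \\
  \Xi &\coloneqq \textstyle\big(\bigwedge_{i} \neg r_{\pi_i}\big) \, \ltlU \, \big(\ltlG\!\bigwedge_{i} r_{\pi_i} \land \bigwedge_{i\le n}\psi_{\pi_i} \land \phi\big),
\end{align*}
where $\psi_{\pi_i}$ is $\psi$ relativized to trace $\pi_i$. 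Intuitively $\Xi$ says that all traces of the tuple keep $r$ off until a \emph{common} point, from which $r$ stays on and both $\phi$ and (on the universal traces) $\psi$ hold. Each disjunct of $\ell$ is liveness, so $\ell$ is liveness, and since $\phi' \supseteq \ell$ the body $\phi'$ is liveness as well; thus $\varphi'$ is a $\forall^n\exists^m$ temporal liveness property regardless of the (possibly safety) shape of $\Xi$. This disjunctive trick is what lets me smuggle the synchronization of $\Xi$ into a liveness body.

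The two ingredients are designed to cancel: since $\psi'$ forces $\ltlF\ltlG r$ on every trace of any model, each $\ltlG\ltlF\neg r_{\pi_i}$ is false under every instantiation, so $\ell$ is identically false and $\phi'$ collapses to $\Xi$. For the forward direction, from a model $T$ of $(\psi,\varphi)$ I build $T'$ by keeping $r$ true everywhere and the $\ap$-content unchanged; every tuple then switches at position $0$, so $\Xi$ reduces to $\phi$ (and to $\psi$ on the universal traces) and the old witnesses still work. For the backward direction, from a model $T'$ of $(\psi',\varphi')$ I give each trace $t'$ its permanent-start $k_{t'} \coloneqq \min\{k : \forall j\ge k.\, r\in t'(j)\}$ and extract the shifted trace $t'[k_{t'},\infty]$ projected to $\ap$. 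Instantiating all universals to a single trace shows each extracted trace satisfies $\psi$; and for any universal tuple, $\Xi$ pins a common switch $k$ for that tuple and its witnesses, so the extracted suffixes are a \emph{uniform} shift of the tuple and $\phi$, verified at $k$, holds at position $0$ of the extracted traces. Hence the extracted set is a model of $(\psi,\varphi)$, giving equisatisfiability.

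The main obstacle, and the only genuinely delicate point, is exactly this alignment: because HyperLTL compares traces position-by-position, a naive ``delay each trace independently'' liveness-ification would misalign traces with different warm-up lengths and corrupt the $\ltlN$-comparisons inside $\phi$. Forcing a common switch within each compared tuple resolves this, and the escape $\ell$ (enabled precisely because $\psi'$ guarantees $\ltlF\ltlG r$) is what keeps the synchronizing conjunct $\Xi$ from destroying liveness. The satisfiability hypotheses on $\psi$ and $\phi$ are convenient if one prefers to liveness-ify $\psi$ directly, e.g.\ via $\psi' \coloneqq \ltlF(\ltlG r \land \psi)$, whose liveness relies exactly on satisfiability of $\psi$; the variant above instead threads the content through $\Xi$.
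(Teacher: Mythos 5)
Your construction is correct for every prefix with at least one universal quantifier, and it shares the paper's core idea --- a fresh proposition marking a \emph{synchronized start position} from which $\phi$ and $\psi$ are evaluated --- but realizes it with genuinely different mechanics. The paper anchors evaluation at the \emph{last} occurrence of a fresh proposition $\dagger$, wraps the entire body as $\ltlF\big[\bigwedge_i \dagger_{\pi_i} \land \ltlN\ltlG\bigwedge_i\neg\dagger_{\pi_i} \land \phi\big]$, and places $\psi$ into the trace property $\psi' \coloneqq \ltlF(\dagger \land \ltlN\ltlG\neg\dagger \land \psi)$; liveness of both formulas then hinges exactly on the hypothesis that $\psi$ and $\phi$ are satisfiable. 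You instead obtain liveness \emph{unconditionally} through the escape disjunct $\ell$, which the trace property $\ltlF\ltlG r$ neutralizes in every actual model, and you thread $\psi$ through the hyperproperty body. Your backward direction is also more robust: the paper assumes w.l.o.g.\ a \emph{minimal} model in order to conclude that all traces share one common anchor position (the existence of minimal models of $\forall\exists^*$ properties is not obvious, since models are closed under unions rather than intersections), whereas your per-tuple extraction needs no global synchronization at all --- different tuples may switch at different times, and the extracted suffixes still assemble into a model. So your route buys two things: it eliminates the satisfiability hypotheses on $\psi$ and $\phi$, and it avoids the minimality step.

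There is, however, one genuine gap at the edge of the statement: the theorem as written allows $n = 0$, and there your construction fails, because $\Xi$ contains $\psi_{\pi_i}$ only for the \emph{universal} variables, so for a purely existential $\varphi = \exists^m \ldot \phi$ the formula $\psi$ vanishes from $(\psi', \varphi')$ entirely. Concretely, take $\psi = \ltlG a$ and $\varphi = \exists \pi \ldot \ltlF \neg a_\pi$ (both satisfiable, so the theorem's hypotheses hold): $(\psi,\varphi)$ is unsatisfiable, yet the single trace with $r$ true and $a$ false everywhere satisfies your $(\psi',\varphi')$. Relatedly, your argument that each extracted trace satisfies $\psi$ proceeds by instantiating the universal quantifiers with that trace, which presupposes $n \geq 1$. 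The repair is easy --- conjoin $\psi_{\pi_i}$ for all $i \leq n+m$ in $\Xi$ and, for $n = 0$, extract only the witness traces; or adopt the placement $\psi' \coloneqq \ltlF(\ltlG r \land \psi)$, which, as you note, is where the satisfiability hypothesis on $\psi$ re-enters. Since the paper applies the theorem only with $n \geq 1$, this caveat does not affect any downstream result.
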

\begin{proof}
    The idea is to move the start position of the formula under a $\ltlF$ operator. 
    We introduce a fresh atomic proposition $\dagger$ and ensure that all traces satisfy the liveness property $\ltlF (\dagger \land \ltlN \ltlG \neg \dagger)$. 
    The \emph{unique} position where $\dagger \land \ltlN \ltlG \neg \dagger$ holds (the last time that $\dagger$ is true) is then the ``start position'' to evaluate the formula. 
    Let $\varphi = Q^* \ldot \phi$ where $Q^* = \forall \pi_1 \ldots \pi_n \exists \pi_{n+1} \ldots \pi_{n+m}$ is the quantifier prefix of $\varphi$.
    Define
    \begin{align*}
        \varphi' \coloneqq Q^* \ldot \ltlF \bigg[ \bigwedge_{i=1}^{n+m} \dagger_{\pi_i} \land \Big[\ltlN \ltlG  \bigwedge_{i=1}^{n+m} \neg \dagger_{\pi_i} \Big] \land \phi \bigg]
    \end{align*}
    In similar fashion, we define $ \psi' \coloneqq \ltlF (  \dagger \land (\ltlN \ltlG  \neg \dagger) \land \psi )$.
    It is easy to see that both the LTL body of $\varphi'$ and $\psi'$ are liveness properties.
    Here it is crucial that we assumed that $\psi$ and $\phi$ are satisfiable.
    
    We now claim that $(\psi, \varphi)$ is satisfiable if and only if $(\psi' \varphi')$ is satisfiable.
    For the first direction, assume that $\traceSet$ is a model for $(\psi, \varphi)$.
    The model with $\dagger$ added to the first step of all traces satisfies $(\psi', \varphi')$.
    For the other direction, let $T$ be a model of $(\psi', \varphi')$. We assume w.l.o.g. that there is no subset $T' \subsetneq T$ such that $T'$ is also a model for $(\psi', \varphi')$.
    The property enforces that for any traces $t_1, \ldots, t_{n+m}$, where $t_{n+1}, \ldots, t_{n+m}$ are the witness traces for $t_1, \ldots, t_m$, $\dagger$ holds for the last time at a common time point. 
    As $\traceSet'$ is minimal, every trace serves as a witness for some other traces. Therefore the last position where $\dagger$ holds is the same for all traces in $\traceSet'$.
    Let $i$ be this position.
    Then $\{t[i, \infty] \mid t \in \traceSet'\}$ is a model of $(\psi, \varphi)$.    
\end{proof}

By \Cref{theo:safetyUndec}, satisfiability of $\forall\exists^*$ HyperLTL is $\Sigma^1_1$-hard (note that we transform any specification $(\psi, \forall^n\exists^m \ldot \phi)$ with $n \geq 1$ into a specification $(\top, \forall^n\exists^m \ldot \phi')$ by integrating the trace property into the body of the HyperLTL formula).
\Cref{theo:fromGeneralToLiveness} thus gives a proof of \Cref{theo:livenessAnaytical}.
More generally, \emph{every} HyperLTL formula can be effectively reduced to an equisatisfiable $\forall^2\exists^*$ HyperLTL formula~\cite[Thm.~5]{decidable-hyperltl-2}, so \Cref{theo:fromGeneralToLiveness} shows that deciding temporal liveness can be used to decide full HyperLTL. 

\subsection{Simple Liveness Properties}

The general class of temporal liveness thus does not define an ``easier'' fragment of HyperLTL.
As in the case of safety properties, we study the precise boundary at which the jump to $\Sigma_1^1$ occurs by restricting to simpler forms of temporal liveness. 
Analogously to the case of invariants (described with $\ltlG$), we study eventualities ($\ltlF$).

\begin{restatable}{lemma}{npEventual}
	\label{lem:npEventual}
    The satisfiability problem is \texttt{NP}-complete for specifications $(\top, \varphi)$ where $\varphi$ is of the form $\forall^* \exists^* \ldot \ltlF (\ltlN^*) \land \cdots \land \ltlF (\ltlN^*)$ and no propositional formulas occur outside of the $\ltlF$ operators.
    Hardness already holds for a single eventuality.
\end{restatable}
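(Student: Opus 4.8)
The plan is to establish \texttt{NP}-completeness by treating the two directions separately, and to exploit the fact that $\ltlF(\ltlN^*)$ conjuncts can only ever constrain a \emph{bounded prefix} of the witness traces. First I would fix notation: write $\varphi = \forall \pi_1 \ldots \pi_n \exists \pi_{n+1} \ldots \pi_{n+m} \ldot \bigwedge_{\ell=1}^{r} \ltlF \chi_\ell$, where each $\chi_\ell$ is a propositional formula preceded by some number of $\ltlN$ operators (after pushing the $\ltlN$s inward, $\chi_\ell$ asks a propositional question about a single time point at a fixed offset $d_\ell$ from the position where the $\ltlF$ is evaluated). The crucial structural observation is that satisfaction of $\ltlF(\ltlN^* \cdot \phi_\ell)$ depends only on the existence of \emph{one} position on the relevant traces where the propositional constraint $\chi_\ell$ holds; nothing constrains the traces globally or relates distinct time points across conjuncts in a synchronized way. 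This is what rules out the ``diagonal'' comparisons that make $\ltlG$ fragments hard, and it is the reason a small model suffices.

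For containment in \texttt{NP}, the plan is to prove a small-model property: if $(\top,\varphi)$ is satisfiable, then it has a model consisting of polynomially many traces, each of which is ultimately periodic with a description of polynomial size. The key step is arguing that we may restrict attention to finitely many ``behaviours'': because every conjunct $\ltlF \chi_\ell$ is satisfied by exhibiting a single good position, and because the universal quantifiers range over whatever traces we put in the model, one can show that a model over a single well-chosen trace (or a constant number of traces, bounded in terms of $n,m,r$) suffices. Concretely, I would guess a set $M$ of at most $n+m$ traces, each specified by a short prefix (of length bounded by $\max_\ell d_\ell$, which is linear in the formula) followed by a trivial periodic suffix, and then verify in polynomial time that for every assignment of the universal variables to traces in $M$ there is an assignment of the existential variables to traces in $M$ making all $r$ eventualities true. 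Each such check evaluates a Boolean combination of propositional atoms at fixed offsets, which is polynomial; the number of universal assignments is $|M|^n$, which is polynomial once $|M|$ and $n$ are treated as bounded by the input. The main obstacle here is pinning down the precise bound on $|M|$ and justifying that bounding the prefix length by the nesting depth of the $\ltlN$s is sound — i.e.\ that the suffix beyond the deepest $\ltlN$ is irrelevant to satisfaction, so it can be filled in arbitrarily.

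For \texttt{NP}-hardness, since the statement says hardness already holds for a single eventuality, the plan is to reduce from propositional satisfiability (SAT) to a single $\forall^*\exists^* \ldot \ltlF(\ltlN^*)$ specification, or even a $\forall^*$ one. I would encode a SAT instance over variables $x_1, \ldots, x_k$ using atomic propositions so that a model's traces encode truth assignments, and design the single eventuality $\ltlF \chi$ so that it can be satisfied exactly when the encoded assignment satisfies the clause structure; the $\ltlN^*$ prefix lets $\chi$ inspect the values of all $k$ variables laid out along a prefix of a trace. The cleanest route is to reuse the intuition behind the \NEXPT-hardness of Theorem~\ref{theo:prop} but at a propositional scale: there, temporal-operator-free $\forall^*\exists^*$ HyperLTL already captures a QBF-like existence-of-model question, and a single existential block together with the propositional body is enough to encode SAT. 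I expect the hardness direction to be the easier of the two, with the genuine work concentrated in the small-model argument for the upper bound. Throughout, I would lean on the observation (made in the paper's discussion after Theorem~\ref{theo:prop}) that satisfiability asks for the \emph{existence} of a model, which is exactly the source of the \texttt{NP} (rather than mere polynomial-time) lower bound.
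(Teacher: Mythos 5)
Your hardness direction is essentially the paper's (SAT reduces to $\exists \pi \ldot \ltlF \zeta_\pi$, where the propositional $\zeta$ is evaluated via atoms, or via $\ltlN$-offsets, on a single trace), but your containment argument has a genuine gap. A guess-and-check algorithm that guesses a candidate model $M$ and then verifies it does \emph{not} run in \NP: verifying a $\forall^n\exists^m$ formula over $M$ requires checking all $|M|^n$ instantiations of the universal quantifiers (and, for each, searching over $|M|^m$ witness tuples), and your claim that $|M|^n$ ``is polynomial once $|M|$ and $n$ are treated as bounded by the input'' is false --- with $|M| \le n+m$ and $n$ part of the input, $(n+m)^n$ is superpolynomial, so the verification step is exponential. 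Your small-model parameters are also unsound: bounding the prefix length by the $\ltlN$-depth ignores that distinct universal tuples may need distinct witness positions. For example, $\forall \pi_1 \forall \pi_2 \exists \pi_3 \ldot \ltlF(a_{\pi_1} \land a_{\pi_2} \land \neg a_{\pi_3})$ has no $\ltlN$s, yet any model with $3$ traces needs $3$ distinct positions (one per pair), and in general, with only $n+m$ traces one is forced to use a number of witness positions growing with the number of universal tuples --- indeed the paper's own model construction uses $n+m$ traces but $(n+m)^n$ fresh positions.

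The missing idea is that one should not guess a model at all. The paper collapses the universal quantifiers: after eliminating $\ltlN$s (sound here precisely because the body sits under $\ltlF$ and there is no LTL side constraint), satisfiability of $\forall^n \exists^m \ldot \ltlF \phi$ with propositional $\phi$ is shown \emph{equivalent} to satisfiability of the propositional formula $\zeta \coloneqq \phi \land \bigwedge_{i=2}^{n} \bigwedge_{a \in \ap} (a_{\pi_i} \leftrightarrow a_{\pi_1})$, which identifies all universally quantified variables. The \NP machine then merely guesses an assignment for $\zeta$; the exponentially-many-position model is constructed only inside the correctness proof of this equivalence (for each of the $(n+m)^n$ universal tuples, a fresh position is stamped with the values prescribed by $\zeta$, the extra conjunct guaranteeing consistency when universal choices collide). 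Multiple eventualities are then reduced to a single one by giving each conjunct a disjoint copy of $\ap$, since independent eventualities can be fulfilled at different time points. Without this collapsing step (or some substitute argument showing the verification can be done in polynomial time), your approach establishes membership only in a higher class, not in \NP.
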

\begin{proofSketch}
	We collapse all universal quantifiers in $\varphi$ and thereby reduce satisfiability of $(\top, \varphi)$ to boolean satisfiability. 
	We give a detailed proof in \ifFull{\Cref{app:liveness}}.
\end{proofSketch}

Note that if we allow properties where propositional formulas occur outside of the $\ltlF$ operators, the complexity jumps back to \NEXPT (see \ifFull{\Cref{lem:npEventualWithProp}}).
It is worth to contrast this result with the analogous findings for simple temporally safe formulas.
\Cref{lem:npEventual} shows that when adding an $\ltlF$ operator around a propositional formula, the problem drops from \NEXPT (\Cref{theo:prop}) to \NP.
This is in contrast to adding $\ltlG$ operators, which remains \NEXPT-complete (\Cref{lem:saftyDecNexptime}).
Invariants with nested $\ltlN$ and propositional conjuncts are undecidable (\Cref{lem:coReHardness}), whereas eventualities with nested $\ltlN$ operators and propositional conjuncts remain decidable (see \ifFull{\Cref{lem:npEventualWithProp}}).

\subsection{Eventualities with Functional Specifications}

Surprisingly, the sharp contrast between $\ltlG$ and $\ltlF$ continues if we add functional specifications as LTL trace properties. 
For $\ltlG$, the resulting problem directly jumps to full analytical hardness (cf.~\Cref{theo:safetyUndec}). 
For $\ltlF$, we now show that the problem remains decidable. 
Our result reads as follows.

\begin{theorem}\label{theo:eventuallyDec}
    The satisfiability problem is decidable for specifications $(\psi, \varphi)$ where $\varphi$ is of the form $\forall \exists^* \ldot \ltlF(\ltlN^*)$. 
\end{theorem}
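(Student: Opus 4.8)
The plan is to characterise the models of $(\psi,\varphi)$ combinatorially and then compute the \emph{largest} model as a greatest fixpoint. Write $\varphi = \forall\pi\exists\pi_1\ldots\pi_m\ldot\ltlF\phi$, where $\phi$ uses only $\ltlN$ operators, say of nesting depth at most $k$. Then, after shifting the evaluation point to a position $i$ (as forced by the $\ltlF$), whether the quantified tuple satisfies $\phi$ depends only on the length-$(k{+}1)$ \emph{windows} $t(i)\cdots t(i+k)$ of the participating traces. Collecting the satisfying window tuples into a finite set $S\subseteq(\Sigma^{k+1})^{m+1}$ and setting, for $\mathcal{W}\subseteq\Sigma^{k+1}$, $C(\mathcal{W})\coloneqq\{w_0\mid \exists(w_0,\ldots,w_m)\in S\ldot w_1,\ldots,w_m\in\mathcal{W}\}$, one checks that $\traceSet\subseteq\Sigma^\omega$ models $\varphi$ exactly when every $t\in\traceSet$ is \emph{good}, i.e.\ there is an $i$ with $t(i)\cdots t(i+k)\in C(\mathcal{W}_i(\traceSet))$, where $\mathcal{W}_i(\traceSet)\coloneqq\{s(i)\cdots s(i+k)\mid s\in\traceSet\}$ collects the windows realised at position $i$. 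The point is that the $m$ witnesses may be distinct traces, so only the \emph{availability} of each required witness window at the common position $i$ matters. Hence $(\psi,\varphi)$ is satisfiable iff there is a non-empty $\traceSet\subseteq\lang\psi$ in which every trace is good.

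The key structural observation is that goodness is \emph{monotone} in $\traceSet$: enlarging $\traceSet$ enlarges every $\mathcal{W}_i(\traceSet)$, hence $C(\mathcal{W}_i(\traceSet))$, and can only turn bad traces good. Consequently models are closed under arbitrary unions, so there is a unique largest model, namely the greatest fixpoint $\nu G$ of the monotone operator $G(\traceSet)\coloneqq\{t\in\traceSet\mid t\text{ is good w.r.t.\ }\traceSet\}$ on subsets of $\lang\psi$. Any model is a fixpoint of $G$ and hence contained in $\nu G$, while $\nu G$ is itself a model whenever it is non-empty; therefore $(\psi,\varphi)$ is satisfiable if and only if $\nu G\neq\emptyset$, and it remains to decide this effectively.

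To compute $\nu G$ I would run the decreasing iteration $\traceSet_0\coloneqq\lang\psi$, $\traceSet_{n+1}\coloneqq G(\traceSet_n)$. Fix a nondeterministic Büchi automaton $\aut$ for $\psi$ and prune all states lacking an accepting continuation; then the set $R_i$ of states reachable in $i$ steps evolves under the deterministic subset construction, so the window-availability sequence $i\mapsto\mathcal{W}_i(\lang\psi)$ is \emph{eventually periodic} and computable. Given an eventually periodic availability sequence, the set of good traces is $\omega$-regular and effectively obtained as the traces that hit $C(\mathcal{W}_i(\cdot))$ at some position, so each $\traceSet_n$ is $\omega$-regular with eventually periodic availability, and a single iteration is computable by a product construction of $\aut$ with the corresponding window monitor. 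Emptiness of the resulting automaton is then decidable.

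The main obstacle is proving that this iteration \emph{terminates}: greatest fixpoints over $\omega$-regular sets need not stabilise after finitely many steps in general. The plan is to work entirely with the finite abstraction provided by the subset construction, since the only information $G$ consumes about $\traceSet_n$ is its availability sequence $(\mathcal{W}_i(\traceSet_n))_i$, which is pointwise decreasing in $n$, takes values in the finite lattice $2^{\Sigma^{k+1}}$, and stays eventually periodic with a period governed by a fixed product of $\aut$ with the finitely many window monitors. I would argue that only finitely many distinct availability profiles can arise along the chain, forcing stabilisation: the cofinal availability $\{w\mid w\in\mathcal{W}_i(\traceSet_n)\text{ for infinitely many }i\}$ can strictly decrease only finitely often, and once it is fixed the transient prefix is controlled by the fixed finite structure. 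Making this bound precise is the technical heart of the argument. The sharp contrast with the $\ltlG$ case (\Cref{theo:safetyUndec}) is exactly that $\ltlF$ reduces each universal obligation to a \emph{single} finite window rather than an infinite conjunction across all positions, which is what keeps the abstraction finite and the fixpoint computable.
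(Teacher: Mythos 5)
Your setup is sound: the window abstraction correctly captures satisfaction of $\ltlF(\ltlN^*)$ bodies, models of $\forall\exists^*$ formulas are indeed closed under unions, and the largest model is the greatest fixpoint $\nu G$ of your monotone operator, so satisfiability is equivalent to $\nu G \neq \emptyset$. The fatal gap is exactly where you locate the ``technical heart'': the decreasing iteration $\traceSet_0 \supseteq \traceSet_1 \supseteq \cdots$ does \emph{not} stabilise after finitely many steps, and the bound you hope to extract does not exist. The fragment can encode successor steps of one-counter machines (the paper itself notes this right after \Cref{theo:conjunctionFAnalytical}): take $\psi$ enforcing that each trace encodes a configuration $(l,v)$ of a machine that decrements its counter and halts at zero (with $v$ given by the position of a marker), and $\varphi = \forall\pi\exists\pi'\ldot\ltlF\phi$ stating that $\pi'$ encodes a successor configuration of $\pi$. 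Then $\traceSet_n$ is essentially the set of configurations from which at least $n$ steps are possible, i.e., those with counter value on the order of $n$ or larger; every $\traceSet_n$ is non-empty, the chain strictly decreases forever, and $\nu G = \bigcap_n \traceSet_n = \emptyset$. Your procedure therefore runs forever on an unsatisfiable instance. This example also refutes the specific argument you sketch: the \emph{cofinal} availability stabilises immediately (marker windows occur at all sufficiently late positions in every $\traceSet_n$), yet the chain never stabilises, because the \emph{transient} part of the availability sequence of $\traceSet_n$ has length growing with $n$. There is no ``fixed finite structure'' governing all iterations: each $\traceSet_{n+1}$ is a product of the automaton for $\traceSet_n$ with a monitor whose transient/period depend on $\traceSet_n$, so these parameters grow along the chain. (In essence you have rediscovered the paper's Algorithm~\ref{alg} from \Cref{sec:largestModelsAlg}, specialised to this fragment --- which the paper explicitly presents as \emph{incomplete}.)

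The paper's proof avoids fixpoint computation entirely, and the way it does so is instructive about why your finite abstraction cannot work. After eliminating $\ltlN$s (by letting traces range over tuples), it arranges a model of a $\forall\exists$ formula as a linear sequence of accepting runs $r_0, r_1, \ldots$ of a Büchi automaton $\aut_\psi$ for $\psi$, linked at witness points where the eventuality holds, and simulates a traversal of this arrangement by a system whose state is $(q, b, n)$ with $q$ a state of $\aut_\psi$, $b$ a forward/backward direction, and $n$ a position counter. The unbounded counter is then realised as a \emph{stack} holding $[\nnstep{\aut_\psi}{n}, \ldots, \nnstep{\aut_\psi}{0}]$, yielding a Büchi pushdown system whose emptiness (runs with infinitely many witness steps) is decidable \cite{BouajjaniEM97}; propositional conjuncts and $\exists^m$ prefixes are handled by enriching the stack alphabet and by alternating pushdown systems, respectively. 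The pushdown stack is precisely the device that absorbs the unbounded counting exhibited by the one-counter instances above; any correct proof along your lines would have to replace finite stabilisation by a reduction to such an infinite-state but decidable model, rather than to a finite one.
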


This result is interesting for two reasons.
First, it outlines the precise difference between $\ltlG$ and $\ltlF$.
Second, it defines a new decidable class that contains many properties of interest. 
In particular, formulas of the fragment can enforce infinite models.\footnote{Existing decidability results for HyperLTL consider fragments that, if satisfiable, are satisfiable by a finite set of traces of bounded size. This includes the $\exists^*\forall^*$ fragment studied in \cite{decidable-hyperltl-1} and the decidable fragments identified in \cite{decidable-hyperltl-2}.}
For example, the specification in \Cref{ex:introEx} falls in this newly identified fragment.
 
The remainder of this subsection provides a proof for \Cref{theo:eventuallyDec}.
We introduce necessary concepts along the way.   

\subsubsection{Eliminating Nexts} 

We first show how to eliminate the $\ltlN$ operators in $\varphi$.

\begin{restatable}{lemma}{lemmaElimNext}\label{lem:elimNexts}
    Let $(\psi, \varphi)$ be a specification where $\varphi$ is of the form $\forall^n \exists^m \ldot \ltlF (\ltlN^*)$.
	There exists an effectively computable specification $(\psi', \varphi')$ where $\varphi'$ is the of the form $\forall^n \exists^m \ldot \ltlF$ such that $(\psi, \varphi)$ and $(\psi', \varphi')$ are equisatisfiable. 
\end{restatable}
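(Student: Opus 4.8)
The plan is to eliminate the $\ltlN$ operators by introducing, for each atomic proposition, finitely many fresh copies that stand for its value a fixed number of steps into the future, and to enforce this intended meaning through an LTL trace property. Since the lemma places no syntactic restriction on $\psi'$, we are free to push all $\ltlN$s out of the HyperLTL body and into $\psi'$.

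Concretely, let the body of $\varphi$ be $\ltlF \phi$ (possibly conjoined with a propositional formula, which we leave untouched), where $\phi$ is a Boolean combination of atoms of the form $\ltlN^j a_{\pi_i}$, and let $k$ be the maximal nesting depth of $\ltlN$ in $\phi$. For every $a \in \ap$ and every $1 \le j \le k$ we introduce a fresh proposition $a^{(j)}$, and set $a^{(0)} \coloneqq a$. We define the LTL shift constraint
\[
  \chi \coloneqq \ltlG \bigwedge_{a \in \ap} \bigwedge_{j=1}^{k} \big( a^{(j)} \leftrightarrow \ltlN a^{(j-1)} \big),
\]
and put $\psi' \coloneqq \psi \land \chi$. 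We obtain $\varphi'$ from $\varphi$ by replacing each atom $\ltlN^j a_{\pi_i}$ in $\phi$ with $(a^{(j)})_{\pi_i}$; the resulting body is next-free, so $\varphi'$ has the form $\forall^n \exists^m \ldot \ltlF$.

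The correctness hinges on the observation that, on any trace satisfying $\chi$, the proposition $a^{(j)}$ holds at position $i$ iff $a$ holds at position $i+j$ (a straightforward induction on $j$ using $\chi$). Hence evaluating the rewritten body at position $i$ on such a trace coincides with evaluating $\phi$ at position $i$ on its projection to $\ap$; since $\chi$ is guarded by $\ltlG$, this equivalence holds at \emph{every} position, and is therefore compatible with whatever position the outer $\ltlF$ selects. For the forward direction, given a model $\traceSet$ of $(\psi, \varphi)$ I would extend each trace $t$ by setting $a^{(j)}$ at position $i$ precisely when $a \in t(i+j)$; this yields traces satisfying $\chi$ and preserving $\psi$, and the extension carries the universal/existential witness structure of $\traceSet$ over to a model of $(\psi', \varphi')$. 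For the backward direction, I would project a model of $(\psi', \varphi')$ onto $\ap$; since $\chi$ determines the fresh propositions uniquely from the original ones, the equivalence above shows this projection yields a model of $(\psi, \varphi)$.

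The one point that needs care is verifying that the forward extension and backward projection genuinely preserve the quantifier alternation. Because the fresh atoms are a deterministic function of the original trace, the maps between $\traceSet$ and its extension are bijections, so a witness trace for some tuple in one model maps exactly to a witness trace in the other; once this bijection is in place, the per-position equivalence between $\phi$ and its rewriting makes both directions routine. The construction is plainly effective, as $k$, the fresh propositions, the constraint $\chi$, and the rewritten formula are all computed syntactically from $\varphi$.
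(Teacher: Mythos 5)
Your proposal is correct and matches the paper's own proof in all essentials: both introduce $k$ shifted copies of each atomic proposition, enforce their intended meaning via a globally-quantified shift constraint added to the LTL trace property, and rewrite each atom $\ltlN^j a_{\pi_i}$ to the corresponding copy, so that traces effectively range over windows of $k+1$ consecutive letters. The only (cosmetic) difference is that the paper renames the original propositions to a $0$th copy and rewrites $\psi$ accordingly, whereas you identify $a^{(0)}$ with $a$ and leave $\psi$ untouched.
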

\begin{proofSketch}
    Let $\varphi = \forall^n \exists^m \ldot (\ltlF \phi)\land \phi'$.
    We eliminate $\ltlN$ operators in $\phi$ by letting traces range over tuples. 
    Instead of considering traces in $\Sigma^\omega$, we consider traces in $(\Sigma^k)^\omega$, where $k$ is the lookahead needed to evaluate $\phi$ (which is upper bounded by the number of $\ltlN$s in $\phi$).
    For each trace $t \in \Sigma^\omega$, we define $t' \in (\Sigma^k)^\omega$ by $t'(i) \coloneqq  (t(i), t(i+1), \ldots, t(i+k))$. 
    This reduces the evaluation of $\phi$ to a formula without $\ltlN$s.     
    We also modify the LTL formula (which is allowed to contain $\ltlN$ operators) to assert that the tuples in each tuple trace are consistent, i.e., for each step $i$ if $t(i) = (l_1, \ldots, l_k)$ then $t(i+1) = (l_2, \ldots, l_{k}, l_{k+1})$.
    A detailed proof can be found in \ifFull{\Cref{app:liveness}}. 
\end{proofSketch}

Using \Cref{lem:elimNexts}, we can assume that in Theorem~\ref{theo:eventuallyDec}, the HyperLTL formula $\varphi$ contains a single $\ltlF$ as the only temporal operator. 
For now, we make two further assumptions: 
First, we assume that $\varphi$ contains only a single $\exists$ quantifier, and, second, we assume that there are no additional propositional conjuncts outside the $\ltlF$.
So let $\varphi = \forall \pi\exists \pi' \ldot \ltlF \phi$ where $\phi$ contains no temporal operators.
We begin by translating the trace property $\psi$ into a Büchi automaton.

\begin{definition}
    A \emph{state-labeled Büchi automaton} over alphabet $\Sigma$ is a tuple $\aut = (Q, Q_0,  \delta, F, L)$, where $Q$ is a finite set of states, $Q_0 \subseteq Q$ the initial states, $\delta \subseteq Q \times  Q$ the transition relation, $F \subseteq Q$ the set of accepting states, and $L : Q \to \Sigma$ a state labeling function. 
    An accepting run $r$ of $\aut$ is an infinite sequence $r \in Q^\omega$ such that 1) $r(0) \in Q_0$, 2) $(r(i), r(i+1)) \in \delta$  for every $i$, and 3) $r(i) \in F$ for infinitely many $i$.
    The trace $L(r) \in \Sigma^\omega$ associated to a run is defined by $L(r)(i) \coloneqq  L(r(i))$.
    For a set $X \subseteq Q$, we define $\nstep{\aut}{X} \coloneqq  \{q \in Q \mid \exists q' \in X\ldot (q', q) \in \delta\}$ as all states reachable in one step from $X$ and $\nnstep{\aut}{n}$ as all states reachable in $n$ steps from a state in $Q_0$.
\end{definition}

Note that we use state-labeled automata (as opposed to transition-labeled automata) to simplify our construction.
Let $\aut_\psi =  (Q_\psi, Q_{0, \psi},  \delta_\psi, F_\psi, L_\psi)$ be a (state-labeled) Büchi automaton over $2^\ap$ accepting $\psi$~\cite{VardiW94}.
A state $q \in Q_\psi$ is \emph{non-empty} if there exists an accepting infinite run starting in $q$.
W.l.o.g., we assume that $\aut_\psi$ only includes non-empty states, as we can remove all empty states without changing the language of $\aut_\psi$.
Detecting if a state is non-empty can be done easily using, e.g., nested depth-first search.

\newcommand{\seqSet}{R}

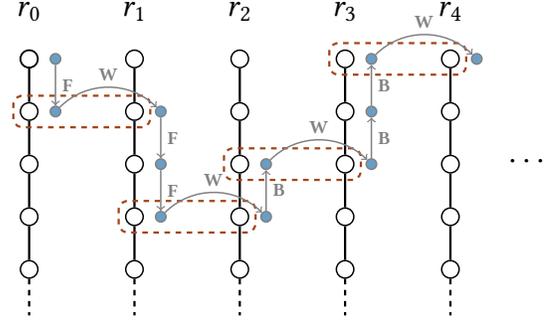
\begin{figure}[t]
	\centering
	\scalebox{0.7}{
		\tikzset{traceNode/.style={circle, draw=black, thick}}
		\tikzset{autNode/.style={circle, fill=pastelblue!70,draw=black!50, thick,inner sep=2pt}}
		\begin{tikzpicture}
			
			\node[] at (0,0.9) () {\huge$r_0$};
			
			\node[traceNode] at (0,0) (n00) {};
			\node[traceNode] at (0,0) (n00) {};
			\node[traceNode] at (0,-1) (n01) {};
			\node[traceNode] at (0,-2) (n02) {};
			\node[traceNode] at (0,-3) (n03) {};
			\node[traceNode] at (0,-4) (n04) {};
			\node[] at (0,-5) (n05) {};
			
			\draw[-, very thick] (n00) -- (n01);
			\draw[-,very thick] (n01) -- (n02);
			\draw[-,very thick] (n02) -- (n03);
			\draw[-, very thick] (n03) -- (n04);
			\draw[-, very thick,dashed] (n04) -- (n05);
			
			\node[] at (2,0.9) () {\huge$r_1$};
			
			\node[traceNode] at (2,0) (n10) {};
			\node[traceNode] at (2,-1) (n11) {};
			\node[traceNode] at (2,-2) (n12) {};
			\node[traceNode] at (2,-3) (n13) {};
			\node[traceNode] at (2,-4) (n14) {};
			\node[] at (2,-5) (n15) {};
			
			\draw[-, very thick] (n10) -- (n11);
			\draw[-,very thick] (n11) -- (n12);
			\draw[-,very thick] (n12) -- (n13);
			\draw[-, very thick] (n13) -- (n14);
			\draw[-, very thick,dashed] (n14) -- (n15);

			\node[] at (4,0.9) () {\huge$r_2$};
			
			\node[traceNode] at (4,0) (n20) {};
			\node[traceNode] at (4,-1) (n21) {};
			\node[traceNode] at (4,-2) (n22) {};
			\node[traceNode] at (4,-3) (n23) {};
			\node[traceNode] at (4,-4) (n24) {};
			\node[] at (4,-5) (n25) {};
			
			\draw[-, very thick] (n20) -- (n21);
			\draw[-,very thick] (n21) -- (n22);
			\draw[-,very thick] (n22) -- (n23);
			\draw[-, very thick] (n23) -- (n24);
			\draw[-, very thick,dashed] (n24) -- (n25);
			
			\node[] at (6,0.9) () {\huge$r_3$};
			
			\node[traceNode] at (6,0) (n30) {};
			\node[traceNode] at (6,-1) (n31) {};
			\node[traceNode] at (6,-2) (n32) {};
			\node[traceNode] at (6,-3) (n33) {};
			\node[traceNode] at (6,-4) (n34) {};
			\node[] at (6,-5) (n35) {};
			
			\draw[-, very thick] (n30) -- (n31);
			\draw[-,very thick] (n31) -- (n32);
			\draw[-,very thick] (n32) -- (n33);
			\draw[-, very thick] (n33) -- (n34);
			\draw[-, very thick,dashed] (n34) -- (n35);
			
			\node[] at (8,0.9) () {\huge$r_4$};
			
			\node[traceNode] at (8,0) (n40) {};
			\node[traceNode] at (8,-1) (n41) {};
			\node[traceNode] at (8,-2) (n42) {};
			\node[traceNode] at (8,-3) (n43) {};
			\node[traceNode] at (8,-4) (n44) {};
			\node[] at (8,-5) (n45) {};
			
			\draw[-, very thick] (n40) -- (n41);
			\draw[-,very thick] (n41) -- (n42);
			\draw[-,very thick] (n42) -- (n43);
			\draw[-, very thick] (n43) -- (n44);
			\draw[-, very thick,dashed] (n44) -- (n45);
			
			\node[] at (9.5,-2) () {\huge$\cdots$};

			\draw[-, very thick, chestnut, dashed, rounded corners=5pt] (1, -0.7) -- (2.3, -0.7) -- (2.3, -1.3) -- (-0.3, -1.3) -- (-0.3, -0.7)  -- (1, -0.7);
			
			\draw[-, very thick, chestnut, dashed, rounded corners=5pt] (3, -2.7) -- (4.3, -2.7) -- (4.3, -3.3) -- (1.7, -3.3) -- (1.7, -2.7)  -- (3, -2.7);
			
			\draw[-, very thick, chestnut, dashed, rounded corners=5pt] (5, -1.7) -- (6.3, -1.7) -- (6.3, -2.3) -- (3.7, -2.3) -- (3.7, -1.7)  -- (5, -1.7);
			
			\draw[-, very thick, chestnut, dashed, rounded corners=5pt] (7, 0.3) -- (8.3, 0.3) -- (8.3, -0.3) -- (5.7, -0.3) -- (5.7, 0.3)  -- (7, 0.3);
			
			\node[autNode] at (0.5,0) (aut0) {};
			\node[autNode] at (0.5,-1) (aut1) {};
			\node[autNode] at (2.5,-1) (aut2) {};
			\node[autNode] at (2.5,-2) (aut3) {};
			\node[autNode] at (2.5,-3) (aut4) {};
			\node[autNode] at (4.5,-3) (aut5) {};
			\node[autNode] at (4.5,-2) (aut6) {};
			\node[autNode] at (6.5,-2) (aut7) {};
			\node[autNode] at (6.5,-1) (aut8) {};
			\node[autNode] at (6.5,0) (aut9) {};
			\node[autNode] at (8.5,0) (aut10) {};
			
			\draw[thick, ->,black!50] (aut0) --node[right] {\textbf{F}} (aut1);
			\draw[thick, ->,black!50] (aut1) to[out=45, in=135] node[above] {\textbf{W}} (aut2);
			\draw[thick, ->,black!50] (aut2) --node[right] {\textbf{F}} (aut3);
			\draw[thick, ->,black!50] (aut3) --node[right] {\textbf{F}} (aut4);
			\draw[thick, ->,black!50] (aut4) to[out=45, in=135] node[above] {\textbf{W}} (aut5);
			
			\draw[thick, ->,black!50] (aut5) --node[right] {\textbf{B}} (aut6);
			\draw[thick, ->,black!50] (aut6) to[out=45, in=135] node[above] {\textbf{W}} (aut7);
			\draw[thick, ->,black!50] (aut7) --node[right] {\textbf{B}} (aut8);
			\draw[thick, ->,black!50] (aut8) --node[right] {\textbf{B}} (aut9);
			\draw[thick, ->,black!50] (aut9) to[out=45, in=135] node[above] {\textbf{W}} (aut10);
			
		\end{tikzpicture}
	}
	
	\caption{Model for $\forall \pi \ldot \exists \pi' \ldot {\protect \LTLeventually} \, \phi$ formulas. Dashed boxes indicate the witness points for the ${\protect \LTLeventually}$ operator. }\label{fig:model}
\end{figure}

\subsubsection{Models for \boldmath$\forall\exists$}
Intuitively, our decidability result can be derived as follows. 
Assume we had a model $\traceSet$ of $(\psi, \varphi)$.
Let $\seqSet \subseteq Q_\psi^\omega$ be a set of accepting runs of $\aut_\psi$ associated to $\traceSet$, i.e., $\traceSet = \{L_\psi(r) \mid r \in \seqSet\}$.
As we consider a $\forall\exists$ formula, we can arrange the runs in $\seqSet$ as a sequence:
we choose $r_0, r_1, \ldots \in \seqSet$ (not necessarily distinct) such that, for each $i$, $r_{i+1}$ serves as a witness for $r_i$, i.e., $[\pi \mapsto L_\psi(r_i), \pi' \mapsto L_\psi(r_{i+1})] \models \ltlF \phi$. 
We say $n_0, n_1, \ldots$ are \emph{witness points} if $[\pi \mapsto L_\psi(r_i), \pi' \mapsto L_\psi(r_{i+1})][n_i, \infty] \models \phi$ for every $i$, i.e., the $n_i$ point to a step at which the eventuality holds.
The trace arrangement is depicted in Figure~\ref{fig:model} (ignoring the blue smaller nodes and gray edges for now). 
For each $i$, the dashed box denotes the witness point $n_i$ where $r_i$ and $r_{i+1}$ satisfy $\phi$.

As an intermediate step, we describe an infinite-state Büchi system (a Büchi automaton without labels) that guesses such a ``linear'' model of $(\psi, \varphi)$.
The states of the system are triples $(q, b, n)$, where $q \in Q_\psi$ is a state in $\aut_\psi$, $b \in \{\forward, \backward\}$ gives a \emph{running direction}, and $n \in \nat$.
Each state $(q, b, n)$ additionally satisfies $q \in \nnstep{\aut_\psi}{n}$.
The initial states of the system are all states $(q_0, \forward, 0)$ with $q_0 \in Q_{0, \psi}$. 
In each step, the system has three options: it can either run forwards, run backwards, or claim to have found a witness.
In a forward step (\textbf{F}-step), the automaton moves from $(q, \forward, n)$ to $(q', \forward, n+1)$, where $(q,q') \in \delta_\psi$.
Similarly, in a backwards step (\textbf{B}-step), it runs from $(q, \backward, n+1)$ to $(q', \backward, n)$, where $(q', q) \in \delta_\psi$.
Note that in the backwards step, we always ensure that $q' \in \nnstep{\aut_\psi}{n}$.
Lastly, the system can claim to have found a witness (\textbf{W}-step): if in state $(q, b, n)$, it can select any $q' \in \nnstep{\aut_\psi}{n}$ such that $L_\psi(q) \times L_\psi(q') \models \phi$. 
Afterwards, the system continues in state $(q', b', n)$, where $b' \in \{\forward, \backward\}$ is chosen nondeterministically.
Call the resulting system $\system$. We claim the following.

\begin{lemma}\label{prop:infSystem}
    $\system$ has an infinite run that uses \textbf{W}-steps infinitely often if and only if $(\psi, \varphi)$ is satisfiable. 
\end{lemma}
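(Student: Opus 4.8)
The plan is to prove both directions of the equivalence by relating infinite runs of $\system$ that use \textbf{W}-steps infinitely often to ``linear'' arrangements of accepting runs of $\aut_\psi$ that witness $\varphi$. The key conceptual point is that the component $n$ in a state $(q,b,n)$ tracks a \emph{time point} (measured as distance from the start), and the running direction $b$ records whether the automaton is currently tracing an accepting run of $\aut_\psi$ forwards in time or reconstructing an earlier run backwards in time. A \textbf{W}-step at $(q,b,n)$ with chosen successor $q'$ asserts that two $\aut_\psi$-states reachable at time $n$ — namely $q$ and $q'$ — satisfy $L_\psi(q) \times L_\psi(q') \models \phi$, which is exactly the local condition needed for $\ltlF\phi$ to be witnessed at time point $n$ between two traces.

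For the direction ``satisfiable $\Rightarrow$ good run exists'', I would start from a model $\traceSet$ and, as sketched in the text, extract a sequence $r_0, r_1, \ldots$ of accepting runs of $\aut_\psi$ with witness points $n_0, n_1, \ldots$ such that each $r_{i+1}$ is a witness for $r_i$ at time $n_i$. The run of $\system$ is then built by concatenating \emph{segments}: between consecutive witness points the automaton traces the relevant portion of some $r_i$, running forwards (\textbf{F}-steps) when $n_{i+1} > n_i$ and backwards (\textbf{B}-steps) when $n_{i+1} < n_i$, and emits a \textbf{W}-step exactly at each witness point $n_i$ to transition from tracking $r_i$ to tracking $r_{i+1}$. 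Since there are infinitely many witness points, the constructed run uses \textbf{W}-steps infinitely often, and the invariant $q \in \nnstep{\aut_\psi}{n}$ is maintained because each $r_i$ is a genuine run of $\aut_\psi$ from an initial state, so its state at time $n$ is reachable in $n$ steps. The main subtlety here is that the witness points may jump around (as Figure~\ref{fig:model} illustrates), so one must verify that the forward/backward machinery can always realize the transition from time $n_i$ to time $n_{i+1}$ while staying consistent with a single fixed run in between.

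For the converse, ``good run $\Rightarrow$ satisfiable'', I would take an infinite run $\rho$ of $\system$ with infinitely many \textbf{W}-steps and reconstruct a model. Let $n_0, n_1, \ldots$ be the time-point values at the successive \textbf{W}-steps. Between two consecutive \textbf{W}-steps the run traces a finite path in $\aut_\psi$ (forwards or backwards); the crucial observation is that because every state of $\aut_\psi$ is assumed non-empty, each such finite path — together with the reachability guarantee $q \in \nnstep{\aut_\psi}{n}$ — can be completed to a full accepting run of $\aut_\psi$ covering all of $\nat$, by prepending an initial-segment path (witnessed by $\nnstep{\aut_\psi}{n}$) and appending an accepting continuation. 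This yields a sequence of accepting runs $r_0, r_1, \ldots$; the traces $L_\psi(r_i)$ all satisfy $\psi$, and the $i$-th \textbf{W}-step guarantees $[\pi \mapsto L_\psi(r_i), \pi' \mapsto L_\psi(r_{i+1})][n_i,\infty] \models \phi$, hence $[\pi \mapsto L_\psi(r_i), \pi' \mapsto L_\psi(r_{i+1})] \models \ltlF\phi$. Taking $\traceSet = \{L_\psi(r_i) \mid i \in \nat\}$ then satisfies both the $\forall$ and $\exists$ requirements of $\varphi$, and every trace satisfies $\psi$, so $(\psi,\varphi)$ is satisfiable.

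The main obstacle I expect is the bookkeeping in the converse direction: turning each finite forward/backward segment into a bona fide bi-infinite (or rather $\nat$-indexed) accepting run of $\aut_\psi$ in a way that is consistent with the labels actually used, and ensuring that the $n$-values behave correctly so that the \textbf{W}-step condition really aligns the two runs at the \emph{same} time point $n_i$. One must be careful that the reachability invariant $q \in \nnstep{\aut_\psi}{n}$ is precisely what makes the ``prepend an initial path'' step possible, and that non-emptiness of all states is what makes the ``append an accepting tail'' step possible; without either assumption the reconstruction could fail. A secondary technical point is that the same run $r_i$ may be reused for both its role as a witness (at step $i-1$) and as the trace being witnessed (at step $i$), so the segment structure must consistently hand off the current $\aut_\psi$-state across each \textbf{W}-step, which is exactly what the preservation of the $n$-component across a \textbf{W}-step encodes.
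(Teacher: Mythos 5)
Your proposal is correct and takes essentially the same route as the paper's own proof: the forward direction arranges the model as a sequence of accepting runs with witness points and realizes it via \textbf{F}/\textbf{B}/\textbf{W}-steps, while the converse splits the run at \textbf{W}-steps into direction-monotone segments, using the invariant $q \in \nnstep{\aut_\psi}{n}$ to prepend an initial path and non-emptiness of states to append an accepting tail. The subtleties you flag (alignment of the $n$-component across \textbf{W}-steps and the impossibility of reversing direction within a segment) are precisely the points the paper itself singles out as crucial.
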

\begin{proof}
    We sketch both directions.
    For the ``if'' direction, assume there is a model for $(\psi, \varphi)$.
    We can arrange a subset of this model as depicted in Figure~\ref{fig:model}. 
    Let $r_0, r_1 , \ldots,$ with $r_i \in Q_\psi^\omega$ be the sequence of accepting runs in $\aut_\psi$ and $n_i$ the witness points.
    Traversing the graph as shown by the small blue states in Figure~\ref{fig:model} creates a run of $\system$.
    We start in $(r_0(0), \forward, 0)$ and move forward (using \textbf{F}-steps) until the counter reaches $n_0$.
    At this point, we take the \textbf{W}-step from $(r_0(n_0),\forward, n_0)$ to $(r_1(n_0), b, n_0)$ and run towards counter value $n_1$.
    If $n_1 < n_0$, we set the running direction $b$ to $\backward$ and otherwise to  $\forward$.
    We continue this procedure to construct an infinite run. 
    For the example situation depicted in Figure~\ref{fig:model}, the resulting run would start with:\smallskip
    
    \noindent\scalebox{0.93}{
    \begin{minipage}{\columnwidth}\noindent%
        \begin{align*}
            &(r_0(0), \forward, 0) \xrightarrow{\textbf{F}} (r_0(1), \forward, 1) \xrightarrow{\textbf{W}} (r_1(1), \forward, 1) \xrightarrow{\textbf{F}} (r_1(2), \forward, 2) \\
            &\xrightarrow{\textbf{F}} (r_1(3), \forward, 3) \xrightarrow{\textbf{W}} (r_2(3), \backward, 3) 
            \xrightarrow{\textbf{B}} (r_2(2), \backward, 2) \xrightarrow{\textbf{W}}  \cdots
        \end{align*}
    \end{minipage}
    }\medskip
    
    \noindent The resulting sequence is a run of $\system$ and uses \textbf{W}-steps infinitely many times. 
    
    For the ``only if'' direction, assume an infinite run $r = (q_0, b_0, m_0) \to (q_1, b_1, m_1) \to \cdots$ of $\system$. 
    We split $r$ into infinitely many finite segments $x_0,x_1, \ldots, $ by splitting each time $r$ takes a \textbf{W}-step.
    In the example run above we would get 
    $x_0 = (r_0(0), 0)(r_0(1), 1)$, $x_1 = (r_1(1), 1)(r_1(2), 2) (r_1(3), 3), \ldots$. 
    In general, let $x_i$ be the sequence $(q_i^0, n_i^0)\ldots(q_i^{k_i}, n_i^{k_i})$. 
    From $x_i$, we construct a finite run $r_i \in Q_\psi^*$ of $\aut_\psi$ starting in a state in $Q_{0, \psi}$ such that for every $0 \leq j \leq k_i$, $r_i(n_i^j) = q_i^j$.
    Using the fact that for each $(q, b, n)$ in $\system$, we have $q \in \nnstep{\aut_\pi}{n}$, this is always possible.
    It is crucial that we cannot reverse directions between two \textbf{W}-steps.
    The finite $r_i$ ends in a state in $\aut_\psi$, so by the assumption that all states are non-empty, we can extend it into an infinite accepting run. 
    The set $\set{L_\psi(r_0),L_\psi(r_1), \ldots}$ is a model of $(\psi, \varphi)$.
\end{proof}

\subsubsection{From Infinite State to Pushdown.}
The construction of $\system$ requires infinitely many states as we need to carry the natural number $n$ to ensure valid \textbf{B} and \textbf{W} steps (which need access to $\nnstep{\aut_\psi}{n}$).
We show next that we can replace this infinite state space by a finite pushdown system.

\begin{definition}\label{def:pushdown}
    A \emph{Büchi pushdown system} is a tuple $\pushAut = (Q, \Gamma, Q_0, \gamma_0, \delta, F)$, where $Q$ is a finite set of states, $\Gamma$ the finite stack alphabet, $Q_0 \subseteq Q$ initial states, $\gamma_0 \in \Gamma$ the initial stack symbol, $\delta \subseteq (Q \times \Gamma^+) \times (Q \times \Gamma^*)$ a \emph{finite} transition relation, and $F \subseteq Q$ a set of accepting states. 
    The system operates on configuration $\langle q, \alpha\rangle$, where $q \in Q$ and $\alpha \in \Gamma^*$.
    A transition $\langle q, \alpha\rangle \rightsquigarrow \langle q', \alpha'\rangle \in \delta$ describes that the system, if in state $q$ and $\alpha \in \Gamma^+$ is a prefix of the current stack, pops $\alpha$, pushes $\alpha' \in \Gamma^*$ to the stack and moves to state $q'$. 
    An accepting run is an infinite sequence of configurations that starts in $\langle q_0, [\gamma_0] \rangle$ for some $q_0 \in Q_0$, respects $\delta$, and visits states in $F$ infinitely many times.
    It is decidable in polynomial time if a Büchi pushdown system has an accepting run~\cite{BouajjaniEM97}.
\end{definition}

We replace $\system$ with a pushdown system $\pushAut$.
Conceptually, we represent a state $(q, b, n)$ in $\system$ by the pushdown configuration with state $(q, b)$ and stack content $[\nnstep{\aut_\psi}{n}, \ldots, \allowbreak \nnstep{\aut_\psi}{0}]$, i.e., the length of the stack is $n+1$ and the $i$th element are all states reachable in $i$ steps.
The states in the pushdown system thus have the form $(q, b)$ with $q \in Q_\psi, b \in  \{\forward, \backward\}$ and the stack alphabet is $2^{Q_\psi}$.
The initial stack symbol is  $\gamma_0 \coloneqq  Q_{0, \psi}$ and the initial states are $\{(q_0, \forward) \mid q_0 \in Q_{0, \psi} \}$.
The transitions are of the following form:

\begin{prooftree}
	\def\defaultHypSeparation{\hskip .15in}
	\AxiomC{$(q, q') \in \delta_{\psi}$}
	\LeftLabel{{(\textbf{F})}}
	\UnaryInfC{$\big\langle (q, \forward), [A]\big\rangle \rightsquigarrow \big\langle (q', \forward), [\nstep{\aut_\psi}{A}, A]\big\rangle$}
\end{prooftree}

\begin{prooftree}
	\def\defaultHypSeparation{\hskip .15in}
	\AxiomC{$q' \in A_2$}
	\AxiomC{$(q', q) \in \delta_{\psi}$}
	\LeftLabel{{(\textbf{B})}}
	\BinaryInfC{$\big\langle (q, \backward), [A_1, A_2]\big\rangle \rightsquigarrow \big\langle (q', \backward), [A_2]\big\rangle$ }
\end{prooftree}

\begin{prooftree}
	\def\defaultHypSeparation{\hskip .15in}
	\AxiomC{$q' \in A$}
	\AxiomC{$L_{\psi}(q) \times L_{\psi}(q') \models \phi$}
	\AxiomC{$b, b' \in \{\forward, \backward\}$}
	\LeftLabel{{(\textbf{W})}}
	\TrinaryInfC{$\big\langle (q, b), [A]\big\rangle \rightsquigarrow \big\langle (q', b'), [A]\big\rangle$}
\end{prooftree}

\noindent
Note the close correspondence with the transitions in $\system$.
In particular, in \textbf{F}-steps, we compute $\nnstep{\aut_\psi}{n+1}$ based on $\nnstep{\aut_\psi}{n}$. 
In \textbf{B}-steps, the stack provides access to all states that are reachable, and thus guarantees the invariant that $q \in \nnstep{\aut_\psi}{n}$ for each state $(q, b, n)$ in $\system$.
It is not hard to see that $\pushAut$ has a run that uses \textbf{W}-steps infinitely often iff $\system$ has a run that uses \textbf{W}-steps infinitely often. 
Combined with \Cref{prop:infSystem} we thus get:
\begin{lemma}
    $\pushAut$ has an accepting run that uses \textbf{W}-steps infinitely often if and only if $(\psi, \varphi)$ is satisfiable. 
\end{lemma}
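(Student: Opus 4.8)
The plan is to obtain this lemma as a corollary of \Cref{prop:infSystem} by establishing a tight, label-preserving correspondence between runs of the pushdown system $\pushAut$ and runs of the infinite-state system $\system$. Since \Cref{prop:infSystem} already equates infinite runs of $\system$ using \textbf{W}-steps infinitely often with satisfiability of $(\psi, \varphi)$, it suffices to show that $\pushAut$ admits such a run exactly when $\system$ does. Everything therefore reduces to verifying that the pushdown refinement faithfully simulates $\system$, where the stack height plays the role of the natural number $n$ that $\system$ had to carry explicitly.

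First I would define the representation map $\iota$ sending each state $(q, b, n)$ of $\system$ to the configuration $\langle (q, b), [\nnstep{\aut_\psi}{n}, \ldots, \nnstep{\aut_\psi}{0}]\rangle$ of $\pushAut$, a stack of height $n+1$ carrying the set reachable in $i$ steps at depth $i$. The crucial structural invariant is that \emph{every} reachable configuration of $\pushAut$ lies in the image of $\iota$, i.e., has a stack of this canonical form for some $n$. I would prove this by induction on run length, starting from the initial configuration $\langle (q_0, \forward), [Q_{0,\psi}]\rangle = \iota(q_0, \forward, 0)$ (using $\nnstep{\aut_\psi}{0} = Q_{0,\psi} = \gamma_0$) and inspecting each rule. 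The one algebraic fact needed is the identity $\nstep{\aut_\psi}{\nnstep{\aut_\psi}{n}} = \nnstep{\aut_\psi}{n+1}$, which ensures that an \textbf{F}-step, pushing $\nstep{\aut_\psi}{A}$ on top of a canonical stack with top $A = \nnstep{\aut_\psi}{n}$, again yields a canonical stack of height $n+2$. The \textbf{B}-step only pops and the \textbf{W}-step leaves the stack untouched, so both trivially preserve canonicity.

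With the invariant established, I would read the counter $n$ off a canonical stack as its height minus one and check that $\iota$ transports transitions in both directions with matching labels. An \textbf{F}-step increments the height (hence $n$) under the side condition $(q, q') \in \delta_\psi$, matching $\system$'s forward move $(q, \forward, n) \to (q', \forward, n+1)$; a \textbf{B}-step decrements the height and enforces $q' \in A_2 = \nnstep{\aut_\psi}{n}$ with $(q', q) \in \delta_\psi$, matching $\system$'s backward move and its side condition; and a \textbf{W}-step fixes the stack and picks $q' \in A = \nnstep{\aut_\psi}{n}$ with $L_\psi(q) \times L_\psi(q') \models \phi$, matching $\system$'s witness move. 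In particular a \textbf{B}-step is inapplicable precisely when the stack has height one ($n = 0$), mirroring that $\system$ cannot run backward below counter $0$. Hence $\iota$ induces a label-preserving bijection on runs, so $\pushAut$ has a run using \textbf{W}-steps infinitely often iff $\system$ does; fixing the Büchi acceptance of $\pushAut$ so that accepting coincides with using \textbf{W}-steps infinitely often (e.g.\ by adding a one-bit flag to the state recording whether the last transition was a \textbf{W}-step, and marking the flagged states as $F$), the lemma follows from \Cref{prop:infSystem}.

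The main obstacle, though a mild one, is exactly the stack invariant: one must rule out that any reachable stack deviates from the canonical tower of reachability sets. This hinges entirely on the identity $\nstep{\aut_\psi}{\nnstep{\aut_\psi}{n}} = \nnstep{\aut_\psi}{n+1}$ and on the observation that $\pushAut$ never pushes an arbitrary symbol---\textbf{F} pushes precisely $\nstep{\aut_\psi}{\cdot}$ of the current top, while \textbf{B} and \textbf{W} introduce no new symbols. Once canonicity is secured, the stack height encodes $n$ exactly, and the simulation and its converse are immediate.
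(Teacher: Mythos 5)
Your proposal is correct and takes essentially the same route as the paper: the paper also argues that the stack $[\nnstep{\aut_\psi}{n}, \ldots, \nnstep{\aut_\psi}{0}]$ makes $\pushAut$ simulate $\system$ step-for-step (leaving the details as "not hard to see") and then concludes via \Cref{prop:infSystem}. Your write-up merely makes explicit the canonical-stack invariant and the identity $\nstep{\aut_\psi}{\nnstep{\aut_\psi}{n}} = \nnstep{\aut_\psi}{n+1}$ that the paper's terse argument implicitly relies on.
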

Lastly, we can easily translate a Büchi pushdown system with transition-based acceptance (as in $\pushAut$) to state-based acceptance (as in \Cref{def:pushdown}). Using the decidability of pushdown systems \cite{BouajjaniEM97}, we thus get that the satisfiability of  $(\psi, \varphi)$ is decidable.
Note that our proof gives an elementary upper bound of 2\texttt{EXPTIME} (for $\forall\exists$ properties).\footnote{The size of $Q_\psi$ is at most exponential in $\psi$ \cite{VardiW94}, so the size of the stack alphabet of $\pushAut$ (which is $2^{Q_\psi}$) is at most double exponential in $\psi$. As deciding the emptiness of a Büchi pushdown system is polynomial, the 2\texttt{EXPTIME} upper bound follows.  }

\subsubsection{Propositional Conjuncts and $\forall\exists^*$.}
We can now lift the two assumptions we made earlier.
As a first extension, we modify our construction to also support formulas of the form $\forall\pi \exists \pi'.(\ltlF \phi) \land \phi'$.
To do so, we keep track of the \emph{first} state of the run we are currently considering. 
In a \textbf{W}-step, we then only select a witness state $q'$ that stems from an initial state which satisfies the propositional requirement $\phi'$ when combined with the initial state of the current run.
We can access the set of all such states by keeping track of the set of states reachable from every individual state (by changing the stack alphabet to functions $Q_\psi \to 2^{Q_\psi}$). 

As a second extension, we can show decidability for a $\forall\exists^m$ prefix by moving to \emph{alternating} Büchi pushdown systems (defined as expected, see \cite{BouajjaniEM97} for details). 
For $\forall\exists^m$,  we can no longer arrange the traces of a model in a linear sequence (as depicted in \Cref{fig:model}) and instead use $m$-ary trees labeled by traces such that the children of a node correspond to witness traces of that trace. 
In a \textbf{W}-step from a state $(q, b)$, we now select $m$ states $q_1, \ldots, q_m$ (whereas we previously picked only $q'$) such that $q$ together with $q_1, \ldots, q_m$ satisfy $\phi$.
Afterwards, we need to find a new witnesses for \emph{each} of the $q_i$. 
We accomplish this by introducing a \emph{universal} transition that branches into states $(q_i, b_i)$ for each $1 \leq i \leq m$ (leaving the stack unchanged as before). 
The \textbf{F} and \textbf{B} step stay purely nondeterministic. 
The resulting alternating pushdown system has an accepting run (which now has the form of a tree) iff $(\psi, \varphi)$ is satisfiable. 
As emptiness of alternating pushdown systems is still decidable (albeit only in exponential time)~\cite{BouajjaniEM97}, we get a proof of \Cref{theo:eventuallyDec} for the full $\forall\exists^*$-fragment.
For the $\forall\exists^*$-fragment, our proof gives an elementary upper bound of 3\texttt{EXPTIME}.

\subsection{Conjunctions of Eventualities}

We show that Theorem~\ref{theo:eventuallyDec} is tight in the sense that already a conjunction of eventualities combined with an arbitrary trace property is undecidable (and even $\Sigma_1^1$-hard).

\begin{theorem}\label{theo:conjunctionFAnalytical}
    The satisfiability problem is $\Sigma^1_1$-hard for specifications $(\psi, \varphi)$ where $\varphi$ is of the form $\forall \exists^* \ldot \ltlF \land \ltlF \land \ltlF$. 
\end{theorem}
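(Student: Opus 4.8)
The plan is to reduce the recurring computation problem for nondeterministic two-counter machines (which is $\Sigma_1^1$-hard, as recalled in the preliminaries) to satisfiability of a specification $(\psi, \varphi)$ with $\varphi$ of the form $\forall\exists^* \ldot \ltlF \land \ltlF \land \ltlF$. The overall architecture mirrors the proof of \Cref{theo:safetyUndec}: each configuration of the machine is encoded on a trace, counter values $v_x$ are represented by the unique position at which a marker $\mathbf{c}_x$ holds, and a third ``clock'' counter $t$ that strictly decreases and, upon hitting $0$, forces the initial instruction $l_1$ and resets $t$ arbitrarily; this clock is what guarantees that instruction $l_1$ is visited infinitely often, i.e. that the computation is \emph{recurring} rather than merely infinite. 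As before, I would let each trace encode \emph{two} consecutive configurations over two disjoint copies $\ap^1, \ap^2$ of the alphabet, pushing all local consistency checks into the LTL formula $\psi$, which may be arbitrarily complex.

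The crux is that in \Cref{theo:safetyUndec} the ``successor'' link between a trace $\pi$ and its witness $\pi'$ was expressed by $\forall\pi\exists\pi'\ldot\ltlG\bigwedge_a (a^2_\pi \leftrightarrow a^1_{\pi'})$, a single $\ltlG$; here I am forbidden $\ltlG$ and may only use three unnested $\ltlF$s. The key idea is therefore to reconstruct a global equality of two traces from eventualities. First I would arrange, via $\psi$, that each configuration encoding carries a positional ``counter'' track — a monotone integer timestamp encoded in binary per position, or a self-describing marker scheme — so that a single matching position between $\pi$ and $\pi'$ can be leveraged to certify agreement everywhere. Concretely, I expect to exploit the counter already present for $t$: since $t$ and the $\mathbf{c}_x$ markers each fire at a \emph{unique} position, an equality of the form $\ltlF(\text{marker}^2_\pi \land \text{marker}^1_{\pi'} \land \text{local-agreement})$ forces the two unique witness positions to coincide, and the uniqueness (enforced by $\psi$) propagates this to a full configuration match. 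The three available $\ltlF$ conjuncts would then be spent on (i) aligning the instruction/clock position, (ii) aligning the $\mathbf{c}_1$ marker, and (iii) aligning the $\mathbf{c}_2$ marker between the second configuration of $\pi$ and the first of $\pi'$, with all the surrounding invariants (unique occurrence of each marker, binary-counter consistency, correct local transition relation) offloaded entirely into $\psi$.

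With this encoding in place, the HyperLTL formula would read roughly as
\begin{align*}
	\varphi \coloneqq \forall \pi \exists \pi' \ldot \ltlF\big(\chi_{\mathit{instr}}\big) \land \ltlF\big(\chi_{\mathbf{c}_1}\big) \land \ltlF\big(\chi_{\mathbf{c}_2}\big),
\end{align*}
where each $\chi$ conjoins the relevant marker on the second copy of $\pi$, the same marker on the first copy of $\pi'$, and a propositional equality of the accompanying data at that position. I would then prove the two implications of the reduction. For soundness, given a recurring computation I build the model whose traces are exactly the encodings of consecutive configuration pairs along the run; the decreasing clock forces $l_1$ infinitely often, and the uniqueness of markers makes each $\ltlF$ witness point unique, so the three eventualities together certify the successor relation. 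For completeness, given a model I extract a run by following the $\forall\pi\exists\pi'$ chain: the unique clock track linearly orders configurations, and the recurring visits to $l_1$ imposed by the resetting clock yield an infinite computation returning to $l_1$ infinitely often.

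The main obstacle I anticipate is precisely the loss of $\ltlG$: certifying \emph{global} equality of configurations using only a bounded number of \emph{unnested} eventualities. The safety proof could compare all positions at once via $\ltlG$, whereas here each $\ltlF$ pins down only a single time point. The delicate part is designing the positional encoding so that agreement at the unique marker positions, together with the $\psi$-enforced structural invariants (unique occurrence and a rigid binary-counter track), genuinely forces agreement of the entire successor configuration — without any temporal comparison operator in $\varphi$ itself. Getting exactly three $\ltlF$s to suffice (matching the statement $\ltlF \land \ltlF \land \ltlF$), rather than more, is where the careful bookkeeping will lie; I would double-check that the instruction label, the two counter markers, and the clock can all be reconciled within three eventualities, possibly by bundling the clock check into the instruction conjunct since both are single-position data.
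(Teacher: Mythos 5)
Your reduction is sound and arrives at the same conclusion as the paper, but it distributes the work differently. The paper's proof encodes \emph{one} configuration per trace and spends the three eventualities on the \emph{transition relation itself}, checked across the pair $(\pi,\pi')$: the first $\ltlF$ executes the instruction, where increment/decrement is expressed by the shifted marker $\blacksquare_x$ (which $\psi$ forces to hold exactly one step after $\boxdot_x$), so that $\ltlF\big((\blacksquare_x)_\pi \land (\boxdot_x)_{\pi'}\big)$ pins the unique marker positions together and yields $v_x(\pi') = v_x(\pi)+1$; the second $\ltlF$ freezes the counter not involved in the instruction; and the third $\ltlF$ handles the clock's decrement-or-reset. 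You instead keep the two-configurations-per-trace architecture of \Cref{theo:safetyUndec}, let $\psi$ verify the successor relation \emph{inside} each trace, and spend the three eventualities purely on gluing, i.e., on equating the second configuration of $\pi$ with the first configuration of $\pi'$. Both arguments rest on the same key observation --- a single $\ltlF$ can force two uniquely-occurring markers onto a common position, so counter equality needs no $\ltlG$ --- and both produce a $\forall\exists^2$ formula with three eventualities plus an existential/propositional conjunct. Your version buys a simpler hyperproperty (pure equality) at the price of a heavier $\psi$; the paper's version shows that even the $\pm 1$ arithmetic of the transition can be pushed into the eventualities.

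Two details in your sketch need repair. First, your completeness direction only extracts a chain of configurations visiting $l_1$ infinitely often, but the $\Sigma_1^1$-hard problem asks for a recurring computation \emph{starting from} $(l_1,0,0)$; you must add the conjunct asserting the existence of a trace encoding the initial configuration (as both \Cref{theo:safetyUndec} and the paper's proof of this theorem do) and anchor the extracted run at that trace. Second, your justification for bundling the clock check with the instruction check --- ``since both are single-position data'' --- is exactly backwards: two markers that each hold at a single position can be matched by one $\ltlF$ only if those positions coincide, which the clock marker and a position-$0$ instruction label generally do not. The bundling works precisely because the instruction label can be made to hold \emph{globally} along the trace (as the paper arranges via $\psi$), so it can be read off at whatever position the two $t$-markers align; with labels holding only at position $0$, the combined eventuality would wrongly force $t=0$.
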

\begin{proof}
    We encode the problem of whether a nondeterministic 2CM with instructions $l_1, \ldots, l_n$ has a recurring computation \cite{FischerL79,AlurH94}.
    Let $\mathit{AP} = \bigcup_{x \in \{1, 2, t\}} \{\boxdot_x, \blacksquare_x, \mathit{isZero}_x \} \cup \{\mathbf{l}_1, \ldots, \mathbf{l}_n\}$. 
    Each trace encodes a configuration of the machine as follows.
    The current value of counter $x \in \{1, 2\}$ is encoded as a trace in $\emptyset^*\{\boxdot_x\}\{\blacksquare_x\}\emptyset^\omega$ such that the (unique) step at which $\boxdot_x$ holds indicates the current value of $c_x$.
    We later use proposition $\blacksquare_x$ (which always holds the step after $\boxdot_x$) to encode the update of the counter. 
    The proposition $\mathit{isZero}_x$ holds exactly if the counter is zero. 
    The current instruction is encoded by propositions $\{\mathbf{l}_1, \ldots, \mathbf{l}_n\}$, of which exactly one holds globally along a trace. 
    Finally, to ensure a recurring computation, we use a third counter $t$, which is encoded analogously to the counters above and counts down the steps to the next visit to $l_1$. 
    It is easy to see that we can encode the validity of a configuration in an LTL formula $\psi$.
    For $x \in \{1, 2, t\}$ we ensure a valid counter via
    \begin{align*}
        (\neg \boxdot_x \land \neg \blacksquare_x) \ltlU \big(  &(\boxdot_x \land \neg \blacksquare_x) \\
        & \land \ltlN (\blacksquare_x \land \neg \boxdot_x) \land \ltlN\ltlN \ltlG (\neg \boxdot_x \land \neg \blacksquare_x) \big)
    \end{align*}
	and ensure correct placement of $\mathit{isZero}_x$ by  $(\ltlG \mathit{isZero}_x)\lor (\ltlG \neg \mathit{isZero}_x )$ together with $ \mathit{isZero}_x \leftrightarrow \boxdot_x$.
  	Finally, we assert that the propositions $\{\mathbf{l}_1, \ldots, \mathbf{l}_n\}$ are set correctly via $\bigvee_{i} \big( \ltlG \mathbf{l}_i \land \bigwedge_{j \neq i} \ltlG \neg \mathbf{l}_j\big)$.
    In the hyperproperty, we encode that there exists a trace representing the initial configuration as follows (note that we allow counter $t$ to have any value):
    \begin{align*}
        \varphi_\mathit{init} \coloneqq  \exists \pi. (\mathbf{l}_1)_\pi \land (\mathit{isZero}_1)_\pi \land (\mathit{isZero}_2 )_\pi
    \end{align*}
    Lastly, we express that each trace has a successor.
    For each instruction $l_i$, we write $c(l_i) \in \{1, 2\}$ for the counter that is changed or tested in instruction $l_i$. 
    We define $\overline{1} \coloneqq  2$ and $\overline{2} \coloneqq  1$ for the other counter.
    We then define
     {\begin{align*}
            \varphi \coloneqq  \forall \pi\exists \pi'. 
            &\Big[\ltlF \bigvee_{i \in \{1, \ldots, n\}} (\mathbf{l}_i)_{\pi} \land \mathit{exec}(l_i)\Big] \land {}\\
            & \Big[\ltlF \bigvee_{\substack{i \in \{1, \ldots, n\}}} (\mathbf{l}_i)_\pi  \land  (\boxdot_{\overline{c(l_i)}})_{\pi} \land (\boxdot_{\overline{c(l_i)}})_{\pi'}  \Big] \land {}\\
            & \Big[\ltlF  \big((\mathit{isZero}_t )_\pi \land (\mathbf{l}_1)_{\pi}\big) \lor  \big((\boxdot_t)_{\pi} \land (\blacksquare_t)_{\pi'}\big)  \Big].
    \end{align*}}%
    Here, $\mathit{exec}(l_i)$ denotes that the action or test of instruction $l_i$ is performed on $c(l_i)$. 
     For example, if $l_i = \big[c_x \coloneqq c_x+1; \texttt{goto} \{l_j, l_{k}\}\big]$, we define $\mathit{exec}(l_i)$ as
    \begin{align*}
    	\big((\mathbf{l}_j)_{\pi'}  \lor (\mathbf{l}_k)_{\pi'}\big) \land  (\blacksquare_x)_{\pi} \land (\boxdot_x)_{\pi'}.
    \end{align*}%
    Note that $(\blacksquare_x)_{\pi} \land (\boxdot_x)_{\pi'}$ encodes that the counter $x$ is increased.
    For a decrement operation, we can replace this with  $(\blacksquare_x)_{\pi'} \land (\boxdot_x)_{\pi}$.
   If $l_i = \big[\texttt{if } c_x = 0 \texttt{ then goto }  l_j   \texttt{ else} \allowbreak\texttt{goto } l_k\big]$, we define $\mathit{exec}(l_i)$ as
    \begin{align*}
        &(\boxdot_x)_{\pi} \land (\boxdot_x)_{\pi'} \land \big((\mathit{isZero}_x)_{\pi} \rightarrow (\mathbf{l}_j)_{\pi'} \big) \\
        & \land \big(\neg (\mathit{isZero}_x)_{\pi} \rightarrow  (\mathbf{l}_k)_{\pi'} \big).
    \end{align*}%
    In $\varphi$, the first conjunct thus encodes that the counter $c(l_i)$ is updated and/or tested as required by $l_i$.
    The second conjunct states that the counter that is not involved in $l_i$ is left unchanged.
    As the current instruction is set consistently along a trace, both eventualities refer to the same instruction.
    Finally, the third conjunct ensures that the counter $t$ either decreases or is already zero, at which point the current instruction is $l_1$. 
    In case the $t$-counter is zero, it can be reset to any value on $\pi'$. 
    This ensures a recurring computation of the machine. 
    It is easy to see that $(\psi, \varphi_\mathit{init} \land \varphi)$ is satisfiable iff the 2CM has a recurring computation (note that $\varphi_\mathit{init} \land \varphi$ is a $\forall\exists^2$-formula).
\end{proof}

While \Cref{theo:conjunctionFAnalytical} requires three conjunctions of eventualities to show $\Sigma^1_1$-hardness, already two eventualities suffice to show undecidability.
To do so, we can encode the non-termination of a 2CM (avoiding the $t$ counter).
This further underlines the tightness of \Cref{theo:eventuallyDec}.

\begin{lemma}
    The satisfiability problem is undecidable for specifications $(\psi, \varphi)$ where $\varphi$ is of the form $\forall \exists^* \ldot \ltlF \land \ltlF$. 
\end{lemma}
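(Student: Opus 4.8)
The plan is to reduce from the problem of deciding whether a nondeterministic 2CM has an \emph{infinite} computation, which by the preliminaries is \coRE-complete and hence undecidable. The construction reuses, almost verbatim, the encoding from the proof of \Cref{theo:conjunctionFAnalytical}, but drops the auxiliary counter $t$ (whose sole purpose there was to force a \emph{recurring} computation and thereby reach $\Sigma_1^1$) together with the third eventuality. As before, each trace encodes one configuration of the machine: counter $x \in \{1,2\}$ is represented by a trace in $\emptyset^*\{\boxdot_x\}\{\blacksquare_x\}\emptyset^\omega$, the current instruction by exactly one globally-true proposition from $\{\mathbf{l}_1, \ldots, \mathbf{l}_n\}$, and $\mathit{isZero}_x$ marks a zero counter. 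The LTL formula $\psi$ enforces the validity of these configuration encodings exactly as in \Cref{theo:conjunctionFAnalytical}, now only for $x \in \{1,2\}$.

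For the hyperproperty, I would again use $\varphi_\mathit{init} \coloneqq \exists \pi. (\mathbf{l}_1)_\pi \land (\mathit{isZero}_1)_\pi \land (\mathit{isZero}_2)_\pi$ to assert the presence of the initial configuration, together with the two-eventuality formula
\begin{align*}
    \varphi \coloneqq \forall\pi\exists\pi'.\; & \Big[\ltlF \bigvee_{i \in \{1,\ldots,n\}} (\mathbf{l}_i)_\pi \land \mathit{exec}(l_i)\Big] \land {}\\
    & \Big[\ltlF \bigvee_{i \in \{1,\ldots,n\}} (\mathbf{l}_i)_\pi \land (\boxdot_{\overline{c(l_i)}})_\pi \land (\boxdot_{\overline{c(l_i)}})_{\pi'}\Big],
\end{align*}
where $\mathit{exec}(l_i)$ and the second conjunct are defined exactly as in \Cref{theo:conjunctionFAnalytical}: the first eventuality states that the successor trace $\pi'$ updates/tests the involved counter $c(l_i)$ as instruction $l_i$ prescribes (and moves to an admissible next instruction), and the second states that the uninvolved counter $\overline{c(l_i)}$ is copied unchanged. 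Since the instruction proposition is constant along each trace, both eventualities necessarily refer to the same instruction. The key modification is that for a \texttt{halt} instruction $l_i$ I set $\mathit{exec}(l_i) \coloneqq \bot$, so that no configuration standing at a \texttt{halt} instruction can satisfy the first eventuality, and hence cannot occur as the universally quantified $\pi$ in any model. The combined specification $(\psi, \varphi_\mathit{init} \land \varphi)$ is a $\forall\exists^2$ formula whose body is a conjunction of two eventualities plus a propositional conjunct, i.e., of the required form $\forall\exists^*\ldot\ltlF \land \ltlF$.

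For correctness I would argue both directions. If the machine has an infinite computation, the set of configurations along that run (encoded as traces) is a model: $\varphi_\mathit{init}$ holds by the initial configuration, and every configuration has its chosen successor in the set, witnessing both eventualities. Conversely, any model contains the initial configuration and, by the $\forall\exists$ structure together with $\psi$, every configuration trace is forced to have a genuine successor configuration inside the model; since \texttt{halt} configurations have no successor ($\mathit{exec} = \bot$), iterating from the initial configuration yields an infinite, non-halting run. Hence $(\psi, \varphi_\mathit{init} \land \varphi)$ is satisfiable iff the machine admits an infinite computation, establishing \coRE-hardness and thus undecidability. The main point to verify carefully is that two eventualities genuinely suffice once $t$ is removed, i.e., that correctly propagating the involved counter and preserving the uninvolved counter captures exactly the transition relation, and that the \texttt{halt} encoding blocks model existence precisely when the machine terminates.
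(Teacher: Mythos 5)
Your proposal is correct and takes essentially the same approach the paper intends: the paper justifies this lemma only with the one-line remark that one can ``encode the non-termination of a 2CM (avoiding the $t$ counter),'' and your construction---reusing the encoding of Theorem~\ref{theo:conjunctionFAnalytical}, dropping the $t$ counter together with the third eventuality, and reducing from the \coRE-complete infinite-computation problem---is exactly that reduction, worked out in detail. Your explicit treatment of \texttt{halt} instructions via $\mathit{exec}(l_i) \coloneqq \bot$ (so that no model can contain a halting configuration) is a detail the paper leaves implicit, and it is handled correctly.
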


\begin{example}
    Using similar ideas as in \Cref{theo:conjunctionFAnalytical}, we can encode that a one-counter machine has an infinite computation with only a single eventuality (as we only need to ensure that the \emph{single} counter is updated consistently).
    Combined with \Cref{theo:eventuallyDec}, we derive that we can decide the existence of an infinite computation in a one-counter machine. 
    While this is long known (see, e.g., \cite{HaaseKOW09}), it nevertheless emphasizes that our newly identified decidable class is broader than it seems at first glance. 
    \demo
\end{example}

\subsection{Deterministic Liveness}

Livness for trace properties (cf.~\Cref{def:safetyLiveness}) and hyperliveness (cf.~\Cref{def:hyperSafe}) already imply that a property is satisfiable.
As demonstrated by \Cref{theo:livenessAnaytical}, the same does not hold for temporal liveness hyperproperties. 
We can, however, identify a fragment within temporal liveness for which the intuition that liveness implies satisfiability transfers to the realm of hyperproperties. 
We say an LTL property $\phi$  is a \emph{deterministic liveness} property if it is a liveness property and can be recognized by a \emph{deterministic} Büchi automaton. 

\begin{restatable}{proposition}{alwaysSat}
	\label{prop:alwaysSat}
    HyperLTL formulas of the form $\varphi = \forall\exists^* \ldot \phi$ where $\phi$ is a deterministic liveness property are always satisfiable and have a finite model. 
\end{restatable}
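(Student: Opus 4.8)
The plan is to construct, for any formula $\varphi = \forall \pi_0 \exists \pi_1 \cdots \exists \pi_m \ldot \phi$ of this shape, an explicit \emph{finite} model consisting of at most $m+1$ traces. Since $\phi$ is a deterministic liveness property, I would fix a deterministic Büchi automaton $\aut = (Q, q_0, \delta, F)$ over the product alphabet $\Gamma \coloneqq \pow{\ap_{\pi_0} \cup \cdots \cup \ap_{\pi_m}}$ recognizing $\phi$ (interpreted as an LTL property over the $m+1$ trace copies). I identify a letter of $\Gamma$ with a tuple $(\sigma_0, \ldots, \sigma_m) \in \Sigma^{m+1}$ assigning one $\Sigma$-letter to each trace variable.

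First I would extract from liveness two structural facts about the reachable part of $\aut$. Let $q = \delta^*(q_0, u)$ be any state reachable on a finite word $u \in \Gamma^*$. (i) $q$ is \emph{complete}: for every $\sigma \in \Gamma$, $\delta(q,\sigma)$ is defined, since otherwise $u\sigma$ would be rejected and could not be extended to a word of $\phi$, contradicting liveness (\Cref{def:safetyLiveness}). (ii) $q$ is \emph{good}: it admits an accepting continuation, because $u$ extends to some word of $\phi$ and $\aut$ is deterministic; equivalently, some finite block of letters drives $q$ to a visit of $F$. Hence \emph{every} infinite sequence of letters keeps $\aut$ in states that are both complete and good, and from any such state the next visit to $F$ can always be forced by an appropriate finite block.

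The model skeleton uses $m+1$ traces $t_0, \ldots, t_m$ with the witness assignment that declares the witnesses of $t_i$ to be the remaining $m$ traces (in some fixed order). For each $i$, let $\tau_i$ be the coordinate permutation of $\Gamma$ placing the $i$-th component first, so that running $\aut$ on $\langle t_i;\, \text{the other traces}\rangle$ amounts to reading $\tau_i(\gamma_0)\tau_i(\gamma_1)\cdots$ whenever the column (symbol tuple) at time $k$ is $\gamma_k$. I then build the columns $\gamma_0\gamma_1\cdots \in \Gamma^\omega$ by an infinite round-robin procedure, maintaining the current states $p_0, \ldots, p_m$ of the $m+1$ runs. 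In phase $i$ (with $i$ ranging cyclically over $0,\ldots,m$) I append a finite block chosen so that the $i$-th run visits $F$: its current state $p_i$ is good, so a finite $\Gamma$-word drives it to $F$, and since $\tau_i$ is a bijection I realize that word by the columns $\gamma_k = \tau_i^{-1}(\cdot)$. By (i)–(ii) the other runs, reading the permuted columns $\tau_{i'}(\gamma_k)$, merely move among complete good states. Cycling through all $i$ forever, every run visits $F$ infinitely often, so each product word is accepted, i.e. $[\pi_0 \mapsto t_i, \ldots] \ltlmodels \phi$ for each $i$. Thus $\traceSet \coloneqq \{t_0, \ldots, t_m\}$ is a nonempty finite model of $\varphi$.

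The main obstacle to watch is the interaction between the $m+1$ acceptance requirements: while steering one run toward $F$ we must not strand another in a state from which $F$ is unreachable. This is exactly what liveness rules out, since facts (i) and (ii) guarantee that no reachable state is ever dead, so every run can always be driven to $F$ later, regardless of what the previous phases did. The combinatorial choice of witness assignment (each trace witnessed by the other $m$ distinct traces) is what makes every $\tau_i$ a bijection of $\Gamma$, giving full control over all columns in each phase, and determinism of $\aut$ is used so that the runs, and hence facts (i)–(ii), are well defined.
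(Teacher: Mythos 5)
Your proposal is correct and takes essentially the same approach as the paper's proof: both construct a finite model of $m+1$ traces where each trace is witnessed by the remaining $m$, and both interleave the $m+1$ acceptance obligations in a round-robin fashion, using the key fact that a deterministic B\"uchi automaton for a liveness property can, from any reachable state and after any finite continuation, still be driven to an accepting state (your facts (i)--(ii) are exactly the paper's ``crucial property''). The only cosmetic difference is that you phrase the construction via coordinate permutations $\tau_i$ of the product alphabet while the paper uses a $\mathit{zip}$ of reordered trace prefixes, and you should additionally require each appended block to be non-empty (as the paper does) so the constructed traces are genuinely infinite.
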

\begin{proofSketch}
	In a deterministic automaton describing a liveness property, any reachable state has a path to an accepting state.
	We use this to iteratively construct a model. 
	The full proof can be found in \ifFull{\Cref{app:liveness}}.
\end{proofSketch}

Note that the same does not hold if we consider more than one universal quantifier. 
As an example, the formula $\forall \pi \forall \pi'\ldot \ltlF(a_{\pi} \land \neg a_{\pi'})$ is unsatisfiable but $\ltlF(a_{\pi} \land \neg a_{\pi'})$ is a deterministic liveness property.
If we consider deterministic liveness in combination with trace properties, we again obtain a jump to $\Sigma^1_1$-hardness.

\begin{corollary}
	\label{cor:detLiveness}
    Satisfiability is $\Sigma^1_1$-hard for specifications of the form $(\psi, \varphi)$ where $\varphi$ is of the form $\forall\exists^* \ldot \phi$ and $\phi$ is a deterministic liveness property.
\end{corollary}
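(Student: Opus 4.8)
The plan is to avoid any new reduction and instead observe that the $\Sigma_1^1$-hardness already established in \Cref{theo:conjunctionFAnalytical} is witnessed by a specification whose HyperLTL body is, in fact, a deterministic liveness property. The corollary then follows immediately, reusing the very same LTL specification $\psi$ and the very same quantifier prefix.

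First, I would revisit the formula $\varphi$ constructed in the proof of \Cref{theo:conjunctionFAnalytical}. Its body has the shape $\phi = \ltlF \phi_1 \land \ltlF \phi_2 \land \ltlF \phi_3$, where $\phi_1, \phi_2, \phi_3$ are temporal-operator-free (propositional) formulas over $\ap_\pi \cup \ap_{\pi'}$ (for instance $\phi_1 = \bigvee_i (\mathbf{l}_i)_\pi \land \mathit{exec}(l_i)$), and each $\phi_i$ is propositionally satisfiable. It therefore suffices to show that $\phi$, read as an LTL trace property over $\ap_\pi \cup \ap_{\pi'}$, is a deterministic liveness property in the sense of \Cref{prop:alwaysSat}, i.e., it is liveness and is recognized by a deterministic B\"uchi automaton.

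Second, I would check liveness directly from \Cref{def:safetyLiveness}: given an arbitrary finite prefix $u$, since each $\phi_i$ is satisfiable we may append a finite segment in which $\phi_1$, $\phi_2$, and $\phi_3$ each hold at least once, followed by an arbitrary infinite continuation; the resulting trace extends $u$ and satisfies all three eventualities irrespective of $u$. Hence every prefix extends to a model of $\phi$, so $\phi$ is liveness. For deterministic B\"uchi recognizability I would give the automaton whose states are the subsets $S \subseteq \{1,2,3\}$ of eventualities fulfilled so far: on reading a letter $\sigma$ it moves from $S$ to $S \cup \{i \mid \sigma \models \phi_i\}$, the state $\{1,2,3\}$ is an absorbing accepting sink, and all other states are non-accepting. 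This automaton is deterministic, and a run visits $\{1,2,3\}$ infinitely often iff it eventually reaches it iff all three eventualities hold; thus it recognizes $\phi$. Consequently $\phi$ is a deterministic liveness property, and the hardness of \Cref{theo:conjunctionFAnalytical} is already hardness for $(\psi, \forall\exists^* \ldot \phi)$ with $\phi$ deterministic liveness.

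The one point that requires care — and the reason the corollary is routed through \Cref{theo:conjunctionFAnalytical} rather than through the $\ltlG$-based \Cref{theo:safetyUndec} — is that deterministic liveness demands \emph{both} liveness \emph{and} deterministic-B\"uchi recognizability simultaneously. A conjunction of reachability objectives stays within the deterministic-B\"uchi-recognizable class (an absorbing accepting sink suffices), whereas applying the generic \Cref{theo:fromGeneralToLiveness} to a $\ltlG$-body would yield an $\ltlF\ltlG$-style persistence property that is liveness but \emph{not} recognizable by any deterministic B\"uchi automaton. The main (and only minor) obstacle is therefore simply to confirm that the eventuality-conjunction body lands in the intersection of the two classes, which the subset-tracking automaton above makes explicit.
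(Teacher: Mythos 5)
Your proposal is correct and takes exactly the paper's own route: the paper proves \Cref{cor:detLiveness} with the single observation that it ``follows directly from \Cref{theo:conjunctionFAnalytical} as conjunctions of eventualities are deterministic liveness,'' which is precisely your argument. Your prefix-extension check for liveness, the subset-tracking deterministic B\"uchi automaton with the absorbing accepting sink, and your closing remark on why the reduction must go through the $\ltlF$-based \Cref{theo:conjunctionFAnalytical} rather than the $\ltlG$-based \Cref{theo:safetyUndec} merely spell out the verification that the paper leaves implicit.
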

\begin{proof}
    Follows directly from \Cref{theo:conjunctionFAnalytical} as conjunctions of eventualities are deterministic liveness.
\end{proof}

\subsection{Overview: Liveness vs Safety}
Our results provide a clear picture of the (un)decidability boundaries within fragments of HyperLTL.
In particular, our systematic study allows a direct comparison between temporal safety and temporal liveness. 
For the full fragment, temporal liveness already subsumes satisfiability of full HyperLTL, which contrasts strongly with the much cheaper (albeit still undecidable) problem for temporal safety. 
This changes if we consider simpler fragments. 
Here, the $\ltlF$ fragment is drastically better behaved in terms of complexity and even admits large decidable fragments for cases where the safety counterpart already exhibits full analytical hardness. 

\section{Finding Largest Models}\label{sec:largestModelsAlg}

To complement the decidability results from the previous sections, we propose a new (incomplete) algorithm to detect (un)satisfiability of $\forall\exists^*$ HyperLTL formulas.
So far, the only available algorithm checks for finite models of bounded size and then iteratively increases the bound~\cite{decidable-hyperltl-2,FinkbeinerHH18}. Such an approach finds smallest models and cannot determine unsatisfiability.
The key insight for our algorithm is that $\forall \exists^*$ formulas are closed under union, therefore, a formula $\varphi$ is satisfiable iff there is a (unique) \emph{largest} model satisfying $\varphi$.
To find such models algorithmically, we iteratively eliminate choices for the $\exists^*$ quantifiers that admit no witness trace when chosen for the $\forall$ quantifier. 
Thereby, we do not only find largest models but can also detect unsatisfiability.
Our incremental elimination is closely related to a recent algorithm used in the context of finite-trace properties (which was developed independently) \cite{bonakdarpour2022finiteword}.

For presentation reasons, we present the algorithm for $\forall\exists$ formulas. 
Our implementation (see \Cref{sec:eval}) supports full $\forall\exists^*$ properties. 

\begin{algorithm}[t]
	\caption{Algorithm that searches for the largest model of a $\forall\exists$ property. Initially, $\mathcal{A}$ is a Büchi automaton that accepts the body the HyperLTL property. }
	\label{alg}
	\begin{algorithmic}[1]
		\Procedure{findModel}{$\mathcal{A}$}
		\If{$\mathcal{L}(\mathcal{A}^\forall) = \emptyset$}
		\State \textbf{return} UNSAT;
		\EndIf	
		\If{$\mathcal{L}(\mathcal{A^\exists}) \subseteq \mathcal{L}(\mathcal{A^\forall})$}
		\State \textbf{return} SAT, model: $\mathcal{L}(\mathcal{A^\forall})$;
		\EndIf	
		\State $\mathcal{A}_\text{new} \coloneqq  \mathcal{A} \cap \mathcal{A}^\forall_{\pi'}$;
		\State \Call{findModel}{$\mathcal{A}_\text{new}$};
		\EndProcedure
	\end{algorithmic}
\end{algorithm}

\subsection{Algorithm}

For a Büchi automaton $\mathcal{A}$ over $\ap_\pi \cup \ap_{\pi'}$, we define $\mathcal{A}^\forall$ and $\mathcal{A}^\exists$ as the automata (over $\ap$) that (existentially) project $\mathcal{A}$ on the alphabet $\ap_\pi$ and $\ap_{\pi'}$, respectively.
Now let a HyperLTL formula $\varphi = \forall \pi\exists \pi' \ldot \phi$ be given and let $\mathcal{A}_\phi$ be an automaton over $\ap_\pi \cup \ap_{\pi'}$ accepting $\phi$.
In particular, $\mathcal{A}_\phi^\forall$ accepts all words for which there exists a witness trace for the existential quantifier. 
Our algorithm is depicted in Algorithm~\ref{alg}. 
Initially, we call \Call{findModel}{$\mathcal{A}_\phi$}.

The first candidate is $\mathcal{A} = \mathcal{A}_\phi$.
If $\mathcal{L}(\mathcal{A}^\forall) = \emptyset$, i.e., no trace has a witness trace in $\mathcal{A}$, $\varphi$ is unsatisfiable. 
If all potential witness traces in $\mathcal{L}(\mathcal{A}^\exists)$ are contained in $\mathcal{L}(\mathcal{A}^\forall)$ (so they have a witness trace themself), $\varphi$ is satisfiable and $\mathcal{L}(\mathcal{A}^\forall)$ is a model.
If neither is the case, we refine $\mathcal{A}$ by removing all traces whose $\exists$ component is not in $\mathcal{L}(\mathcal{A}^\forall)$.
We define $\mathcal{A}_\text{new}$ as the intersection $\mathcal{A} \cap \mathcal{A}^\forall_{\pi'}$ where $\mathcal{A}^\forall_{\pi'}$ is $\mathcal{A}^\forall$ with the alphabet changed from $\ap$ to $\ap_{\pi'}$.
We can compute $\mathcal{A}_\text{new}$ via a standard intersection construction on Büchi automata. 
Note that $\mathcal{A}_\text{new}$ might again contain witness traces that themselves have no witness trace, so we recurse. 

\subsection{Correctness}

The algorithm maintains the following invariants.
\begin{restatable}{lemma}{algInv}\label{prop:algorithm}
    In every iteration of the algorithm it holds that $\lang{\mathcal{A}_\text{new}} \subseteq \lang{\mathcal{A}}$, and for any trace set $\traceSet$ with $\traceSet \models \forall \pi \exists \pi' \ldot \phi$, $\traceSet \subseteq  \mathcal{L}(\mathcal{A^\forall})$.
\end{restatable}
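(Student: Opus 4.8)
The plan is to prove the two invariants separately, the first by direct inspection of the refinement step and the second by induction on the recursion depth of \textsc{findModel}, after strengthening it to a more robust statement.

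The first invariant, $\lang{\mathcal{A}_\text{new}} \subseteq \lang{\mathcal{A}}$, is immediate: by construction $\mathcal{A}_\text{new} = \mathcal{A} \cap \mathcal{A}^\forall_{\pi'}$, and the standard Büchi intersection construction satisfies $\lang{\mathcal{A}_\text{new}} = \lang{\mathcal{A}} \cap \lang{\mathcal{A}^\forall_{\pi'}} \subseteq \lang{\mathcal{A}}$. For the second invariant I would proceed by induction on the iterations, but the bare statement ``$\traceSet \subseteq \lang{\mathcal{A}^\forall}$ for every model $\traceSet$'' does not obviously propagate on its own. The key step is to strengthen it to the following: for every $\traceSet \models \forall \pi \exists \pi' \ldot \phi$ and every $t \in \traceSet$, there exists a \emph{witness} $t' \in \traceSet$ with $[\pi \mapsto t, \pi' \mapsto t'] \in \lang{\mathcal{A}}$. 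This stronger statement implies the claimed invariant, since $\lang{\mathcal{A}^\forall}$ is exactly the set of traces $t$ for which some $t'$ yields $[\pi \mapsto t, \pi' \mapsto t'] \in \lang{\mathcal{A}}$; thus a nonempty witness set for every $t \in \traceSet$ gives $\traceSet \subseteq \lang{\mathcal{A}^\forall}$.

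For the base case $\mathcal{A} = \mathcal{A}_\phi$, the strengthened statement is just the semantics of the $\forall\exists$ prefix: for $t \in \traceSet$ the existential quantifier provides some $t' \in \traceSet$ with $[\pi \mapsto t, \pi' \mapsto t'] \models \phi$, i.e. $[\pi \mapsto t, \pi' \mapsto t'] \in \lang{\mathcal{A}_\phi}$. For the inductive step, assume the statement for $\mathcal{A}$ and consider $\mathcal{A}_\text{new} = \mathcal{A} \cap \mathcal{A}^\forall_{\pi'}$. Fix a model $\traceSet$ and some $t \in \traceSet$; by the induction hypothesis choose $t' \in \traceSet$ with $[\pi \mapsto t, \pi' \mapsto t'] \in \lang{\mathcal{A}}$. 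It remains to check that this combined trace survives the intersection with $\mathcal{A}^\forall_{\pi'}$, i.e. that its $\pi'$-component $t'$ lies in $\lang{\mathcal{A}^\forall}$. This is precisely where the induction closes: since $t' \in \traceSet$ and the strengthened statement for $\mathcal{A}$ already yields the ordinary invariant $\traceSet \subseteq \lang{\mathcal{A}^\forall}$, we obtain $t' \in \lang{\mathcal{A}^\forall}$, hence $[\pi \mapsto t, \pi' \mapsto t'] \in \lang{\mathcal{A}} \cap \lang{\mathcal{A}^\forall_{\pi'}} = \lang{\mathcal{A}_\text{new}}$, so $t'$ is still a witness for $t$ in $\mathcal{A}_\text{new}$.

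The main obstacle is recognizing the correct strengthening: the naive invariant is not preserved by the refinement, because one must show that witnesses belonging to an \emph{actual} model are never discarded. The decisive observation is that any witness drawn from a model $\traceSet$ is itself an element of $\traceSet$ and therefore has its own witness, so it passes the projection test imposed by $\mathcal{A}^\forall_{\pi'}$. This ``witnesses of a model are themselves witnessed'' property is exactly what makes the elimination sound and closes the induction.
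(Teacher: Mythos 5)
Your proof is correct and takes essentially the same route as the paper's: an induction over the iterations whose crux is that a witness drawn from a model is itself an element of the model, hence itself witnessed, and therefore survives the projection test imposed by $\mathcal{A}^\forall_{\pi'}$. Your explicit strengthened invariant (every $t$ in a model keeps a witness $t'$ in that model with the combined trace still in $\lang{\mathcal{A}}$) is precisely what the paper's terse phrase ``by induction, $\lang{\mathcal{A}}$ is a superset of the union of all satisfying trace sets'' gestures at, so you have simply made the paper's implicit induction hypothesis precise.
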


Using \Cref{prop:algorithm} it is easy to see the following.

\begin{proposition}
	Given a formula $ \varphi = \forall \pi \exists \pi' \ldot \phi$, if Algorithm~\ref{alg} terminates with UNSAT, the formula is unsatisfiable. If it terminates with SAT and model $\mathcal{L}(\mathcal{A^\forall})$, then $\mathcal{L}(\mathcal{A^\forall})$ is the unique largest model of $\varphi$.
\end{proposition}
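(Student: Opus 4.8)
The plan is to read off both claims directly from the loop invariant of \Cref{prop:algorithm}, distinguishing the two termination branches of \textsc{findModel}.

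For the \textbf{UNSAT} branch, the procedure stops precisely because $\lang{\aut^\forall} = \emptyset$ for the automaton $\aut$ of the current call. The second part of \Cref{prop:algorithm} states that every model $\traceSet \models \varphi$ satisfies $\traceSet \subseteq \lang{\aut^\forall}$. Combining the two gives $\traceSet \subseteq \emptyset$, so the only candidate model is the empty set. Since satisfiability requires a non-empty model, no model exists, i.e., $\varphi$ is unsatisfiable.

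For the \textbf{SAT} branch, the procedure stops because $\lang{\aut^\exists} \subseteq \lang{\aut^\forall}$, and returns $\lang{\aut^\forall}$. I first argue that this set is itself a model of $\varphi$. The first part of \Cref{prop:algorithm} gives $\lang{\aut_\text{new}} \subseteq \lang{\aut}$ in every step; by induction over the recursion depth this yields $\lang{\aut} \subseteq \lang{\aut_\phi}$, so every word accepted by the current $\aut$ still satisfies $\phi$. Now take any $t \in \lang{\aut^\forall}$. By definition of the projection $\aut^\forall$ onto $\ap_\pi$, there is a word $w \in \lang{\aut}$ whose $\pi$-component is $t$; let $t'$ be its $\pi'$-component, so $t' \in \lang{\aut^\exists}$. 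The stopping condition gives $t' \in \lang{\aut^\forall}$, and $w \in \lang{\aut} \subseteq \lang{\aut_\phi}$ gives $[\pi \mapsto t, \pi' \mapsto t'] \models \phi$. Hence $t'$ is a witness for $t$ inside $\lang{\aut^\forall}$, which shows $\lang{\aut^\forall} \models \forall \pi \exists \pi' \ldot \phi$. Moreover $\lang{\aut^\forall} \neq \emptyset$, since the UNSAT test failed, so this is a genuine (non-empty) model.

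It remains to see that $\lang{\aut^\forall}$ is the \emph{unique largest} model. By the second part of \Cref{prop:algorithm}, any model $\traceSet$ satisfies $\traceSet \subseteq \lang{\aut^\forall}$; as $\lang{\aut^\forall}$ is itself a model, it is an upper bound on all models that is also a model, hence the largest, and trivially unique (two largest models would each contain the other). I expect the main obstacle to be the witness argument in the SAT case: one has to verify that the refinement step only removes traces and never introduces spurious behaviour, so that a $\pi'$-projection recovered from $\lang{\aut}$ still pairs with its $\pi$-projection to satisfy $\phi$; this is exactly where the monotonicity part of \Cref{prop:algorithm} is used.
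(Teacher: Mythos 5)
Your proof is correct and follows essentially the same route as the paper: the paper simply states that the proposition follows from Lemma~\ref{prop:algorithm} ("it is easy to see"), and your argument is exactly the intended unfolding — the UNSAT case from the containment invariant plus non-emptiness, and the SAT case by using the projection and the stopping condition $\mathcal{L}(\mathcal{A}^\exists) \subseteq \mathcal{L}(\mathcal{A}^\forall)$ to extract witnesses, then concluding maximality from the invariant. No gaps; the witness argument you flag as the main obstacle is handled correctly via $\mathcal{L}(\mathcal{A}) \subseteq \mathcal{L}(\mathcal{A}_\phi)$.
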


To generalize to $\forall\exists^*$, we intersect the universal projection with each of the projections on existentially quantified positions. 
Models for $\forall^*\exists^*$-properties are, in general, not closed under union, so our algorithm does not extend beyond $\forall\exists^*$.

\section{Implementation and Experiments}\label{sec:eval}

We have implemented the algorithm described in \Cref{sec:largestModelsAlg} in a tool called \texttt{LMHyper} (short for \textbf{L}argest \textbf{M}odel of \textbf{Hyper}LTL).
\texttt{LMHyper} reads both a $\forall\exists^*$ HyperLTL formula $\varphi$ and LTL formula $\psi$ and searches for an (un)satisfiability proof for $(\psi, \varphi)$.
Internally, we represent the current candidate as a generalized Büchi automaton and use \texttt{spot} \cite{DuretLutzLFMRX16} to perform automata operations.
The only other available tool for $\forall\exists^*$ HyperLTL satisfiability is \texttt{MGHyper}~\cite{FinkbeinerHH18}, which implements the incremental approach to find models of bounded size.

\subsection{Random Benchmarks}

We compare \texttt{LMHyper} against \texttt{MGHyper} on random formulas where we sample the LTL body of a formula using \texttt{randltl} \cite{DuretLutzLFMRX16}.
The results are given in \Cref{fig:randomEval}.
On our benchmarks, \texttt{LMHyper} usually takes longer than \texttt{MGHyper} but can handle a larger percentage of formulas.
We observe that randomly generated HyperLTL formulas are, in most cases, satisfiable by a model with a single trace, as the atomic propositions are seldom shared between different trace variables. 
This explains the high success rate of \texttt{MGHyper} (see \cite{FinkbeinerHH18}) even though  \texttt{MGHyper} cannot prove unsatisfiability.

\begin{table}
   \caption{Comparison of \texttt{LMHyper} and \texttt{MGHyper} on 100 random formulas generated with \texttt{randltl} \cite{DuretLutzLFMRX16}. Size refers to the size of the AST, $p$ is the percentage of solved formulas,
   	$t$ the average time spent on solved cases in milliseconds, and \#Iter is the average number of iterations (number of recursive calls) used by \texttt{LMHyper}.
   	The timeout is set to 5sec.
   }
   \begin{tabular}{c@{\hspace{8mm}}cc@{\hspace{8mm}}ccc}
       \toprule
        & \multicolumn{2}{@{}c@{\hspace{8mm}}}{\texttt{MGHyper}} & \multicolumn{3}{c}{\texttt{LMHyper}} \\
       Size & $p$ & $t$ & $p$ & $t$ & \#Iter \\
       \midrule
       15& $95\%$ & $40$ & $100\%$  &$235$ & $0.38$   \\
       16& $93\%$ & $39$ & $99\%$  &$239$ & $0.44$  \\
       17& $95\%$ & $39$ & $100\%$  &$221$ & $0.43$  \\
       18& $92\%$ & $38$ & $100\%$ &$201$ & $0.39$   \\
       19& $95\%$ & $40$ & $100\%$  &$180$ & $0.43$  \\
       20& $97\%$ & $42$ & $100\%$  &$215$ & $0.27$  \\
       \bottomrule
   \end{tabular}
    \label{fig:randomEval}
\end{table}

\begin{table}
    \caption{Comparison of \texttt{LMHyper} and \texttt{MGHyper} on hand-crafted specifications. We give the result (\cmark{} if the specification is satisfiable and \xmark{} if it is unsatisfiable), the time in ms, and the number of iterations needed by \texttt{LMHyper}. The timeout is set to 5min.}\label{fig:selectEval}
    \begin{tabular}{c@{\hspace{8mm}}cc@{\hspace{8mm}}ccc}
        \toprule
         & \multicolumn{2}{@{}c@{\hspace{8mm}}}{\texttt{MGHyper}} & \multicolumn{3}{c}{\texttt{LMHyper}} \\
        Problem & Res & $t$ & Res & $t$ & \#Iter \\
        \midrule
        Inf & - & TO & \cmark{} &350 & 1  \\
        \Cref{ex:introEx} & - & TO & \cmark{} &232 & 1 \\
        Enforce-2& \cmark{} & 444 & \cmark{} &262 & 0  \\
        Enforce-3& - & TO & \cmark{} &334 & 0  \\
        Enforce-5& - & TO & \cmark{} &491 & 0  \\
		Unsat-3& - & TO & \xmark{} &777 & 3  \\
        Unsat-5& - & TO & \xmark{} &1363 & 5  \\
        Unsat-9& - & TO & \xmark{} &1681 & 9  \\
        \bottomrule
    \end{tabular}
\end{table}

\subsection{Infinite and Large Models}
We compiled a small number of more interesting properties that do not have single-trace models. Our results are depicted in \Cref{fig:selectEval}.
The Inf specification expresses that a model has infinitely many traces.
\Cref{ex:introEx} is the example from the introduction.
The Enforce-$n$ specification enforces a model that has at least $n$ traces.
It is defined as $\forall \pi \exists \pi_1\ldots \pi_n\ldot \bigwedge_{i \neq j} \ltlF (a_{\pi_i} \not \leftrightarrow a_{\pi_j})$.
The Unsat-$n$ specifications are unsatisfiable. Their definition is a trace property $\psi \coloneqq (\neg a) \ltlU (a \land \ltlN\ltlG \neg a) \land \ltlN^n \ltlG\neg a$ combined with the hyperproperty $\varphi \coloneqq \forall \pi \exists \pi' \ldot \ltlF (a_\pi \land \ltlN a_{\pi'})$. The formula is designed such that \Cref{alg} requires $n$ iterations to discover unsatisfiability. 
\texttt{MGHyper} times out for most of the examples; even on simple properties like Enforce-$3$.
In contrast, \texttt{LMHyper} can verify properties enforcing many traces in a single iteration because the number of iterations is independent of the number of traces in a model.
As expected, Unsat-$n$ is unsatisfiable and \texttt{LMHyper} requires multiple iterations to show this.

\section{Conclusion}

We have studied the satisfiability problem for $\forall^*\exists^*$ HyperLTL formulas in combination with LTL formulas describing functional behavior.
To obtain results below the general $\Sigma_1^1$ complexity of HyperLTL, we have focused on simpler hyperproperties belonging to the classes of temporal safety and temporal liveness as well as fragments thereof.
We have shown that temporal safety is an expressive class that is very well suited for satisfiability studies and enjoys \coRE-completeness.
In combination with general LTL properties, already very simple formulas like invariants cause $\Sigma_1^1$-completeness.
The temporal liveness class, on the other hand, is $\Sigma_1^1$-complete in general but contains non-trivial fragments that are decidable, even in combination with arbitrary LTL formulas.

We have shown that functional specifications given in LTL play a significant role in the undecidability of general hyperproperties.
The main open question for future work is whether further decidable fragments can be found by restricting the operator structure of the functional specification.

\begin{acks}
	All authors are partially supported by the \grantsponsor{DFG}{\emph{German Research Foundation} (DFG)}{} in project \grantnum{DFG}{389792660, TRR 248} (Center for Perspicuous Systems).
	M.~Krötzsch is additionally supported by the \grantsponsor{BMBF}{\emph{Bundesministerium für Bildung und Forschung} (BMBF)}{} in project \grantnum{BMBF}{ScaDS.AI } (Center for Scalable Data Analytics and Artificial Intelligence), and by the Center for Advancing Electronics Dresden (cfaed).
	R.~Beutner, B.~Finkbeiner and J.~Hofmann are additionally supported by the \grantsponsor{ERC}{\emph{European Research Council} (ERC)}{} in project \grantnum{ERC}{OSARES (No.~68330)}.
	R.~Beutner and J.~Hofmann carried out this work as members of the Saarbrücken Graduate School of Computer Science.
\end{acks}

\bibliographystyle{ACM-Reference-Format}

{
	\interlinepenalty=10000 
\bibliography{references}
}

\iffullversion

\appendix

\section{Additional Material For \Cref{sec:safety}}\label{app:secSafety}

\begin{definition}
    A \emph{nondeterministic Turing machine} is a tuple $T = (Q, Q_0, \Gamma, \delta, F)$ where $Q$ is a finite set of states, $Q_0 \subseteq Q$ the initial states, $\Gamma$ a finite alphabet, $\delta \subseteq (Q \times \Gamma) \times (Q \times \Gamma \times \{L, R\})$ the transition relation and $F \subseteq Q$ a set of accepting states. 
    A transition $((q, a), (q', a', d) )\in \delta$ means that the if the TM is in state $q$ and reads $a$, it updates its state to $q'$, writes $a'$, and moves either to the left ($d = L$) or the right ($d = R$). 
    We assume a dedicated blank symbol $\# \in \Gamma$.
    In an initial configuration for a finite word $w \in \Gamma$ (not containing $\#$), the head is at position $0$, the state is in some state in $Q_0$ and the tape contains $w$ (followed by infinitely many $\#$).
    We say that a TM accepts the empty word if there is a run starting in an initial configuration that eventually visits a configuration where the state is in $F$.
    A TM is \emph{deterministic} if $Q_0 = \set{q_0}$ and for each $(q,a)$, there exists at most one successor in $\delta$.
\end{definition}

\coREContainment*
\begin{proof}
	We already gave the construction of a FOL formula $\Theta$ in the proof sketch in the main part (in \Cref{sec:safety}).
	Here we only show its correctness, i.e., we show that the FOL formula $\Theta$ and HyperLTL formula $\varphi$ are equisatisfiable.
	
	Assume $\Theta$ is satisfiable and fix a first-order model. Let the set $X$ be the set of elements from the $\mathit{Trace}$ domain which may be assigned to some variable $x_i$ in any possible evaluation of the quantifiers. We iteratively construct a trace for any element of $X$.
	To do so, let $i_0, i_1, \ldots$ be a fixed sequence of element from $\mathit{TimePoint}$ such that $\mathit{Succ}(i_j, i_{j+1})$ for any $j \in \nat$ and $i_0$ is the constant described above. This sequence might not be unique and elements might occur several times but we need to fix one such sequence to obtain well-defined traces.
	For each element $v \in X$, we define a trace $t_v \in \Sigma^\omega$, by setting $t_v(n) \coloneqq \set{a \mid P_a(v, i_n)}$ for every $n \in \nat$, where we write $P_a(v, i_n)$ whenever this holds in the fixed firs-order model of $\Theta$.
	It is easy to see that $T \coloneqq \{t_v \mid v \in X\}$ is a model of $\varphi$.
	This holds as $\Theta$ minims the quantification in $\varphi$, so whenever a quantifier is instantiated with $v \in X$, we use trace $t_v \in T$ for the respective quantifier in $\varphi$.  
	By construction of $\rho_{q}$ we ensure that $\mathcal{A}_\phi$ has an accepting run on all tuples of $n$ traces chosen in the quantifier prefix. 
	
	For the other direction, assume that $\varphi$ is satisfiable by trace set $\traceSet$, which we choose as domain for sort $\mathit{Trace}$. For $\mathit{TimePoints}$, we choose the set of natural numbers with $i_0 = 0$. We set $P_a(t, i)$ to true iff $a \in t[i]$. For every assignment of $t_1, \ldots t_n$ to the trace variables, we fix a run through $\mathcal{A}$ and set $\mathit{State}_q(t_1, \ldots, t_n, i)$ to true iff the the run is in state $q$ in step $i$. The resulting structure satisfies $\phi$.
\end{proof}

\coReHardness*
\begin{proof}   
    We reduce from the non-halting problem of deterministic Turing machines on the empty word, which is \coRE complete. We assume, w.l.o.g., that the tape of the Turing machine is left-bounded and only takes a step to the left when it is possible. 
    We encode the position of the head with a proposition $h$. Throughout the construction, we maintain the invariant on all relevant traces that $h$ is set exactly once on the trace. We cannot encode this property directly as it would require to nest multiple $\LTLglobally$ operators.
    We encode the alphabet $\Gamma$ and the set of states $Q$ with sets of atomic propositions, one for each symbol and enforce that in each step, exactly one for each set holds, for $S = \Gamma$ or $S = Q$:
    \begin{align*}
        \forall \pi \ldot \LTLglobally \, \Big( \bigvee_{a \in S} \big(a_\pi \land \bigwedge_{a \neq b \in S} \neg b_\pi\big) \Big)
    \end{align*}
    We fix the current state to be the one that holds in the position of the head.
    Initially, the TM is in state $q^0$, the head at position 0, and the tape is blank. 
    We require that the initial configuration is present in the set:
    \begin{align*}
        \exists \pi \ldot h_\pi \land q^0_\pi \land \LTLglobally (\#_\pi \land \LTLnext \neg h_\pi)
    \end{align*}
    Now we encode the possible transitions with a $\forall \pi \exists \pi'$ formula. We ensure that if the configuration encoded by $\pi$ is a valid one (i.e., $h$ only holds once), then the successor configuration is also valid. 
    For correct transitions, all positions on $\pi'$, which are not left or right of the head in $\pi$, must remain unchanged. Second, the head must move either left or right and the symbol and state propositions are only allowed to change in the position of the old head.
    \begin{alignat*}{2}
        &\psi_\text{LxorR} \coloneqq && \, \bigg((\neg h_\pi \land \LTLnext (\neg h_\pi \land \LTLnext \neg h_\pi)) \rightarrow \LTLnext \bigwedge_{a \in \ap} a_\pi \leftrightarrow a_{\pi'}\bigg) \\
        & && \land \LTLnext h_\pi \rightarrow \bigg( \LTLnext \neg h_{\pi'} \land (h_{\pi'} \oplus \LTLnext \LTLnext h_{\pi'}) \\
        & && \qquad \land \big(\bigwedge_{a \in \Gamma} (a_\pi \leftrightarrow a_{\pi'}) \land  \LTLnext \LTLnext (a_\pi \leftrightarrow a_{\pi'})\big)\bigg) 
    \end{alignat*}
    Now we translate each transition $(q, a), (q', a', L) \in \delta$ as follows. If the head moves right instead of left, we change the position of the $\LTLnext$ operator accordingly.
    \begin{align*}
        &\psi_{\text{trans}_1} \coloneqq \LTLnext (h_\pi \land a_\pi \land q_\pi) \rightarrow h_{\pi'} \land q'_{\pi'} \land \LTLnext a'_{\pi'}
    \end{align*}
    The final transition formula is the following:
    \begin{align*}
        &\forall \pi \ldot \exists \pi' \ldot \LTLglobally (\psi_\text{LxorR} \land ( \psi_{\text{trans}_1} \lor \psi_{\text{trans}_2} \lor \cdots))
    \end{align*}
    We only encode transitions where $q' \notin F$, i.e. only those transitions that do not make the TM halt.
    The conjunction of the above formulas can be easily transformed into a $\forall \exists^2$ formula with a single $\LTLglobally$ and only non-nested $\LTLnext$ operators in the scope of $\LTLglobally$.
    Now, the TM has an infinite non-halting run iff the conjunction of the above formulas has a satisfying model. The trace set might not only contain the witnessing run but also non-valid configurations or non-reachable configurations, which we can just ignore. 
\end{proof}

\safetySigma*
\begin{proof}
    Membership in $\Sigma^1_1$ follows from~\cite{FortinKT021}.
    We reduce from the recurring computation problem of nondeterministic two-counter machines. 
    The key idea is to use LTL formulas over pairs of encoded configurations and use the hyperproperty only to ensure that every second component matches with some first component of another trace.
    We thus set $\ap = \ap^1 \cup \ap^2$ and each $\ap^i = \set{\mathbf{l}_1^i, \ldots, \mathbf{l}_n^i, \mathbf{c}_1^i, \mathbf{c}_2^i, \mathbf{t}^i}$.
    The counter values for counter $c_1, c_2$ are encoded by two atomic propositions $\mathbf{c}_1$ and $\mathbf{c}_2$ which hold exactly once, i.e., counter $x$ has value $v$ if $\mathbf{c}_x$ occurs in the $v$th position. We use $n$ atomic propositions for the instructions, and require that in the first position, exactly one instruction label holds.
    Note that all of the above requirements for a valid configuration are easily expressible in LTL.
    To ensure that the computation visits the initial instruction $l_1$ infinitely often, we use a proposition $\mathbf{t}$, which must also hold exactly once and decreases with every computation step. When $t$ is in position $0$, $\mathbf{l}_1$ must hold.
    We encode this requirement as $\mathbf{l}_1^1 \lor \LTLglobally (\mathbf{t}^2 \leftrightarrow \LTLnext \mathbf{t}^1)$.
    We also encode that each trace denotes an update step of the 2CM.
    For example instruction $l_i = \big[c_1 \coloneqq c_1+1 \texttt{ goto } \{l_j, l_{k}\}\big]$ can be encoded as:
    \begin{align*}
        \mathbf{l}_i^1 \rightarrow \big[ \ltlG ((\mathbf{c}_2^1 \leftrightarrow \mathbf{c}_2^2) \land (\mathbf{c}_1^1 \leftrightarrow \LTLnext \mathbf{c}_1^2)) \land (\mathbf{l}_j^2 \lor \mathbf{l}_k^2) \big]
    \end{align*}
    Decrement instructions are analogous.
    Instructions of the form $l_i = \big[\texttt{if } c_1 = 0 \texttt{ then goto } l_j \texttt{ else goto } l_k\big]$ can be encoded as.
    \begin{align*}
        \mathbf{l}_i^1 \rightarrow & \Big[(\mathbf{c}_1^1 \rightarrow \mathbf{l}_j^2) \land (\neg \mathbf{c}_1^1 \rightarrow \mathbf{l}_k^2) \land \bigwedge_{x \in {1,2}} \LTLglobally ( \mathbf{c}_x^1 \leftrightarrow \mathbf{c}_x^2 )\Big]
    \end{align*}
    We then take the conjunction over the resulting formulas for each instruction.
    
    What is left to state is the hyperproperty  $\varphi$, which matches second components to first components.
    \begin{align*}
        \forall \pi \ldot \exists \pi' \ldot \LTLglobally \bigwedge_{a \in \ap} a^2_{\pi} \leftrightarrow a^1_{\pi'}
    \end{align*}
    Additionally, the initial configuration must be present in the first component of a trace.
    \begin{align*}
        \exists \pi \ldot (\mathbf{c}_1^1)_\pi \land (\mathbf{c}_2^1)_\pi \land (\mathbf{l}_1^1)_\pi
    \end{align*}
    The resulting specification $(\psi, \varphi)$ is satisfiable iff the 2CM has a recurring computation starting in the initial state. The formula $\varphi$ is from the $\forall\exists^2$ fragment and uses only a single $\ltlG$ without nested temporal operators.
\end{proof}

\propTheo*
\begin{proof}
    Let $\varphi = Q^* \ldot \phi$.
    For membership in \NEXPT, let $k$ be the lookahead needed to evaluate $\phi$, which is upper-bounded the number of $\ltlN$ operators occurring in $\phi$. 
    Now define $M \subseteq \Sigma^k$ as all finite traces $\tau$ of length $k$ such that $\tau$ can be extended to a trace satisfying $\psi$. 
    $M$ can be constructed in exponential time by converting $\psi$ to a Büchi automaton and checking if each $\tau \in M$ has an accepting run in linear time. 
    We then nondeterministically guess a subset $M' \subseteq M$ and verify that $M' \models \varphi$ which can easily be done in nondeterministic exponential time, giving the desired \NEXPT-upper bound. 
    
    To show hardness we reduce the acceptance of a time-bounded nondeterministic Turing machine on the empty word to a HyperLTL formula (our construction does not require any temporal operators, not even $\ltlN$s).
    Consequently, only the first position of each trace is relevant.
    Let $T = (Q, Q_0, \Gamma, \delta, F)$ be a exponential time-bounded nondeterministic TM.
    As our formula uses no $\ltlN$s, only the first position is relevant so we can see each trace as a propositional evaluation over $\ap$.
    The idea is that each trace in the model encodes a piece of information of an accepting run of $T$.
    Such a piece of information is a tuple $(s, p, \gamma, q)$ where $s, p \in \nat$, $\gamma \in \Gamma$ and $q \in Q \cup \{\bot\}$.
    The information encodes that in time-step $s$ and at position $p$, the tape content is $\gamma$ and the head is at that position and the machine in state $q$ if $q \in Q$ or the head is not at that position (if $q = \bot$).
    As $T$ is time (and thus space) bounded, $s$ and $p$ are bound by $2^n$ for some $n$.
    
    In our formula we now use propositions $\vec{\mathbf{s}} = \mathbf{s}^1, \ldots, \mathbf{s}^{n}$ to encode a time step and $\vec{\mathbf{p}} = \mathbf{p}^1, \ldots, \mathbf{p}^{n}$ to encode a position on the tape (both as a binary counter). 
    Moreover, for each state $q \in Q \cup \{\bot\}$ and each letter $\gamma \in \Gamma$, we introduce an atomic proposition $q$ and $\mathbf{\gamma}$. 
    In each evaluation exactly one of the tape-content propositions holds and at most one state-proposition holds (if none holds, the head is at a different step).

    Using HyperLTL, we specify that the information encoded in the traces is consistent and forms an accepting run of the TM:\\
    \textbf{(1)} At most one of the state-propositions and exactly one of the tape-propositions holds:
    \begin{align*}
        &\forall \pi\ldot\left( \bigvee_{\gamma \in \Gamma} \gamma_\pi \land \bigwedge_{\gamma \neq \gamma' \in \Gamma} \neg \gamma'_\pi\right) \land \bigwedge_{q \neq q' \in Q} \neg (q_\pi \land q'_\pi)
    \end{align*}
    \textbf{(2)} If two traces agree in time step and position they also agree on tape content, state (and thus head position):
    \begin{align*}
        &\forall \pi \forall \pi' \ldot \left( \bigwedge_{i = 1}^n (\mathbf{s}^i_{\pi} \leftrightarrow \mathbf{s}^i_{\pi'}) \land (\mathbf{p}^i_{\pi} \leftrightarrow \mathbf{p}^i_{\pi'})\right) \rightarrow \\
        &\quad\quad\quad\quad\quad\left(\bigwedge_{q \in Q} (q_\pi \leftrightarrow q_{\pi'}) \land \bigwedge_{\gamma \in \Gamma} (\gamma_\pi \leftrightarrow \gamma_{\pi'})\right)
    \end{align*}
    \textbf{(3)} The TM state (and therefore also head position) agrees across all traces that are in the same time step.
    Phrased differently, for any two different traces that agree on the step and the automaton state, are already equal. 
    Here we write $\pi = \pi'$ to mean that $\pi$ and $\pi'$ agree (in the first position)
    \begin{align*}
        \forall \pi \forall \pi' \ldot \left(\bigwedge_{i = 1}^n (\mathbf{s}^i_{\pi} \leftrightarrow \mathbf{s}^i_{\pi'}) \land  \bigwedge_{q \in Q} (q_\pi \leftrightarrow q_{\pi'}) \right) \rightarrow  \pi = \pi'
    \end{align*}
    \textbf{(4)} The initial configuration of the TM is in the model. 
    We cannot encode this with existential quantification directly, as we would need to refer to exponentially many positions and therfiore also exponentially many traces.
    Instead, we express it as a $\forall\exists$ property. 
    We use syntactic sugar and write $\vec{\mathbf{p}}_\pi = \vec{\mathbf{p}}_{\pi'} + 1$ to indicate that the binary counter increases by one position. This can easily be encoded in propositional logic in quadratic size. 
    For a constant $c$ we write $\vec{\mathbf{p}}_\pi = c$ to denote the binary counter given by $\vec{\mathbf{p}}$ equals $c$.
    We express the desired property as a conjunction of the following two formulas.
    The first formula is given by:
    \begin{align*}
        & \exists \pi \ldot \vec{\mathbf{s}}_\pi = 0 \land  \vec{\mathbf{p}}_\pi = 0 \land \#_{\pi} \land \bigvee_{q \in Q_0} q_{\pi}
    \end{align*}
    Which expresses that at step $0$ at least the very first position is filled with $\#$ and the head is at that poistion (in an initial state).
    The second information propagates this information to all remaining positions in the first step:
    \begin{align*}
        \forall \pi \exists \pi'\ldot &\big(\vec{\mathbf{s}}_\pi = 0 \land \vec{\mathbf{p}}_\pi \neq 2^n-1 \big) \rightarrow \\
        &\quad\quad\big( \vec{\mathbf{s}}_{\pi'} = 0 \land \vec{\mathbf{p}}_{\pi'} = \vec{\mathbf{p}}_{\pi} + 1 \land \#_{\pi'}\big)
    \end{align*}
	Note that \textbf{(3)} already implies that the position is set consistently in the same step, so we do not need to require that $\bot_{\pi'}$ holds.  \\
    \textbf{(5)} The model encodes valid steps of the TM.
    As the transitions of a TM are local, it is sufficient to compare three consecutive tape positions with the same three tape positions in the next step. 
    Thus, for every three traces that share the same time step and encode consecutive positions, there exists three traces on the next time step with the same positions such that the step is possible in the TM. 
    The resulting $\forall^3\exists^3$ formula then consists of a finite conjunction over those valid ``transitions triples'' of $T$.
    We do not give the formula explicitly, as the notation gets clustered. \\
    \textbf{(6)} The TM is eventually in an accepting state which we can express via $\exists \pi. \bigvee_{q \in F} q_\pi$.	

    Note that because of the explicit step count (via $\vec{\mathbf{s}}$), each model directly encodes a unique run of the TM.
    In particular, there can not be spurious facts that do not hold on the defined run of the TM. 
    Consequently, we get that the resulting formula is satisfiable iff TM has a $2^n$-time bounded accepting run. 
    The reduction outlined above can be done in logspace, giving the desired \NEXPT-hardness.
\end{proof}

\section{Additional Material for \Cref{sec:livness}}
\label{app:liveness}

\npEventual*
\begin{proof}
	We first consider the case where we have only a single eventuality.
    Let $\varphi = \forall \pi_1 \ldots \pi_n \exists \pi_{n+1} \ldots \pi_{n+m} \ldot \ltlF \phi$.
    
    We claim that we can actually assume that $\phi$ contains no $\ltlN$s operators. 
    We assume that in $\phi$ the $\ltlN$s occur directly in from of atomic proportions, i.e., $\phi$ is a boolean formula over atoms of the form $\ltlN^n a_{\pi_i}$.
    Transforming a formula in this form is possible in polynomial time. 
    Now let $k$ be the lookahead needed to evaluate $\phi$ (which is upper bounded by the number of $\ltlN$s in $\phi$).
    We define $\phi'$ as the formula over $\ap_0 \cup \cdots \cup \ap_k$ (where $\ap_i = \{a_i \mid a \in \ap\}$) obtained from $\phi$ by replacing each atom $\ltlN^n a_{\pi_i}$ with the atomic proposition $(a_n)_{\pi_i}$.
    As $\phi$ occurs under a $\ltlF$, it is easy to see that $\forall \pi_1 \ldots \pi_n \exists \pi_{n+1} \ldots \pi_{n+m} \ldot \ltlF \phi$ is satisfiable iff $\forall \pi_1 \ldots \pi_n \exists \pi_{n+1} \ldots \pi_{n+m} \ldot \ltlF \phi'$ is satisfiable.
    
    So let us assume in the following that $\phi$ contains no temporal operators, i.e., $\phi$ is a propositional boolean formula over $\ap_{\pi_1} \cup \cdots \cup \ap_{\pi_{n+m}}$.
    We claim that $\varphi$ is satisfiable iff
    \begin{align*}
    	\zeta \coloneqq\phi \land \bigwedge_{i = 2}^n \bigwedge_{a \in \ap} (a_{\pi_i} \leftrightarrow a_{\pi_1})
    \end{align*}
    is satisfiable (viewed as a propositional formula).
    For the left to right direction, assume that $\varphi$ is satisfiable by $\traceSet$. 
    We now pick any $t \in \traceSet$ and instantiate all universal quantifiers with $t$. 
    Then let $t_{n+1}, \ldots, t_{n+m} \in \traceSet$, be witness traces for the existential quantification.
    So $[\pi_1 \mapsto t, \ldots, \pi_n \mapsto t, \pi_{n+1} \mapsto t_{n+1}, \ldots, \pi_{n+m} \mapsto t_{n+m}] \models \ltlF \phi$.
    Let $n \in \nat$ be such that $[\pi_1 \mapsto t, \ldots, \pi_n \mapsto t, \pi_{n+1} \mapsto t_{n+1}, \ldots, \pi_{n+m} \mapsto t_{n+m}][n,\infty] \models \phi$.
    We construct an assignment $\alpha : \ap_{\pi_1} \cup \cdots \cup \ap_{\pi_{n+m}} \to \mathbb{B}$ as follows:
    For each $a_{\pi_i}$ we define $\alpha(\pi_i) \coloneqq t(n)(a)$ if $1 \leq i \leq n$ and $\alpha(\pi_i) \coloneqq t_i(n)(a)$ if $n+1 \leq i \leq n+m$.
    It is easy to see that $\alpha$ satisfies $\zeta$ (note that all universally quantified copies are chosen by $t$ so $\bigwedge_{i = 2}^n \bigwedge_{a \in \ap} (a_{\pi_i} \leftrightarrow a_{\pi_1})$ holds trivially).
    
    For the reverse, assume that $\zeta$ is satisfiable and let $\alpha$ be a satisfying variable assignment over $\ap_{\pi_1} \cup \cdots \cup \ap_{\pi_{n+m}}$. 
    We construct a model with $n+m$ traces $t_1, \ldots, t_{n+m}$ as follows:
    For every indices $i_1, \ldots, i_n \in \{1, \ldots, n+m\}$ we choose $m$ distinct indices $j_1, \ldots, j_m \in  \{1, \ldots, n+m\}$ that are also different from all $i_1, \ldots, i_n$ (this is possible as we fixed $n+m$ traces).
    Now fix some \emph{fresh} position $n \in \nat$ and ensure that $[\pi_1 \mapsto t_{i_1}, \ldots, \pi_n \mapsto t_{i_n}, \pi_{n+1} \mapsto t_{j_1} \ldots \pi_{n+m} \mapsto t_{j_m}][n, \infty] \models \phi$.
    This is always possible as we can use $\alpha$ to construct the assignments of all involved traces at position $k$.
    Here it is crucial that we added the second conjunct $\bigwedge_{i = 2}^n \bigwedge_{a \in \ap} (a_{\pi_i} \leftrightarrow a_{\pi_1} )$ to $\zeta$ as this ensures that we construct a position at which the property holds even if  $t_{i_1}, \ldots, t_{i_n}$  are not distinct. 
    We iterate this for every of the finitely many elements in $\{1, \ldots, n+m\}^n$ (always choosing a fresh position $n$), thereby defining a model of $\varphi$. 

    To show \texttt{NP}-hardness, we can easily reduce from propositional SAT problem:
    A formula $\zeta$ over $\ap$ is satisfiable iff $\exists \pi. \ltlF \zeta_\pi$ is satisfiable (we write $\zeta_\pi$ for $\zeta$ when replacing all atoms $a$ with $a_\pi$).

    To handle multiple eventualities, reuse the above proof. 
    Let $\varphi = \forall^* \exists^* \ldot \ltlF \phi_1 \land \cdots \land \ltlF \phi_n$. 
    We again assume, w.l.o.g., that each $\phi_i$ contains no $\ltlN$s.
    We introduce a different copy of $\ap_i$ for the $i$th $\ltlF$ operator. In each $\ltlF$-subformula, the propositional variables are exchanged accordingly, i.e., let $\phi_i'$ be $\phi_i$ where the atomic propositions from $\ap$ are replaced with those from $\ap_i$ .
   	It is easy to see that $\varphi$ is satisfiable iff $\forall^* \exists^* \ldot \ltlF (\phi'_1 \land \cdots \land \phi'_n)$ is satisfiable, as we can simply fulfill all eventualities at different timepoints. 
\end{proof}

\begin{lemma}\label{lem:npEventualWithProp}
	The satisfiability problem is \NEXPT-complete for specifications $(\top, \varphi)$ where $\varphi$ is of the form $\forall^* \exists^* \ldot \ltlF (\ltlN^*) \land \cdots \land \ltlF (\ltlN^*)$ (and we do allow propositional formulas occur outside of the $\ltlF$ operators).
\end{lemma}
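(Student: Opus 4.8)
The plan is to prove the two bounds separately, reusing \Cref{theo:prop} for both the lower bound and as a template for the upper bound. Write $\varphi = \forall^n\exists^m \ldot \phi' \land \bigwedge_{i=1}^{r} \ltlF \phi_i$, where $\phi'$ is the (temporal-operator-free) conjunct outside the eventualities and each $\phi_i$ uses only nested $\ltlN$s. For \textbf{hardness}, I would simply observe that this fragment subsumes the hard instances of \Cref{theo:prop}: the \NEXPT-hardness there is witnessed by a $\forall^*\exists^*$ formula whose body is \emph{purely propositional} (no temporal operators at all) and with $\psi = \top$. Such a formula is exactly of the required shape, namely a propositional conjunct $\phi'$ outside the $\ltlF$ operators together with a trivial eventuality $\ltlF \top$ (equivalently, the empty conjunction of eventualities). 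Hence \NEXPT-hardness is inherited directly, with no new construction.

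For \textbf{membership}, I would follow the guess-and-check template of \Cref{theo:prop}, but now allowing models whose traces are not bounded in length. First I would remove the $\ltlN$ operators by a window abstraction: let $k$ be the maximal nesting depth of $\ltlN$, push all $\ltlN$s in front of atoms, and read each body $\phi_i$ as a propositional constraint over the length-$(k{+}1)$ window $t(p),\dots,t(p{+}k)$ of the involved traces, so that $\ltlF\phi_i$ asks for \emph{some} position at which this window constraint holds. The outer conjunct $\phi'$ is then a constraint on the windows at position $0$ only.

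The core of the argument is a \textbf{small-model property}: if $\varphi$ is satisfiable, then it has a finite model consisting of at most exponentially many ultimately periodic traces whose lasso (prefix plus period) has at most exponential length, and in which, for every universal tuple, all $r$ eventualities are realized within an exponential horizon. Given such a model, the \NEXPT algorithm guesses it and verifies $T \models \varphi$: there are only exponentially many universal $n$-tuples (since $|T|$ is exponential and $n$ is linear, $|T|^n = 2^{\mathrm{poly}}$), and for each tuple one guesses a witness $m$-tuple and checks $\phi'$ at position $0$ together with each window constraint within the horizon, an exponential-time computation overall, yielding the \NEXPT upper bound.

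The \textbf{main obstacle} is establishing this small-model bound, and specifically bounding the realization positions of the $\ltlF$ conjuncts. Unlike the purely propositional case of \Cref{theo:prop}, these positions genuinely depend on the chosen universal tuple, so the naive idea of reserving one fixed position per eventuality and taking the full product of admissible window-letters is \emph{incomplete}: a model may use timing to avoid the ``diagonal'' letter combinations that such a product would force. For instance, $\forall\pi\exists\pi'\ldot \ltlF(a_\pi \land \neg a_{\pi'})$ is satisfied by $\{(\{a\}\emptyset)^\omega, (\emptyset\{a\})^\omega\}$, even though no single common position works for all tuples and the corresponding propositional $\forall\pi\exists\pi'\ldot(a_\pi\land\neg a_{\pi'})$ is unsatisfiable. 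I would therefore prove the bound by a pumping and cut-and-paste argument on a given (possibly infinite) model: the number of distinct window-profiles and of relevant alignment patterns for the finitely many eventualities is finite and exponentially bounded, which lets one both collapse the trace set to exponentially many ultimately periodic traces and confine every realization to an exponential prefix. Discharging this combinatorial bound rigorously is the delicate step; the remaining pieces (the window abstraction, the position-$0$ constraint $\phi'$, and the final guess-and-check) are routine adaptations of \Cref{theo:prop}.
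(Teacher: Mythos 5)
Your hardness argument is exactly the paper's: the \NEXPT-hard instances of \Cref{theo:prop} are purely propositional and hence already lie in this fragment, so the lower bound is inherited with no new work. The gap is in your membership proof. Its entire weight rests on the small-model property (exponentially many ultimately periodic traces, exponential lassos, exponential realization horizon), and you explicitly leave that lemma undischarged, offering only a vague pumping/cut-and-paste sketch. The sketch as stated does not obviously go through: collapsing traces with equal window-profiles changes which traces can serve as witnesses, i.e., it changes exactly those ``alignment patterns'' that you yourself identify as the source of difficulty, so after a collapse you must re-establish every eventuality under a possibly different pattern of equalities between universally chosen traces and witnesses. Making this precise essentially amounts to re-proving \Cref{lem:npEventual} (including its collapsing conjuncts $\bigwedge_{a}(a_{\pi_i}\leftrightarrow a_{\pi_1})$, which exist precisely to handle coinciding traces) inside your combinatorial argument. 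So, as written, the crucial step of the upper bound is asserted, not proven.

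The paper avoids all of this with a separation observation that your proposal misses: $\varphi = \forall^*\exists^*\ldot \phi' \land \bigwedge_i \ltlF\phi_i$ is satisfiable iff both $\forall^*\exists^*\ldot\bigwedge_i \ltlF\phi_i$ and $\forall^*\exists^*\ldot\phi'$ are satisfiable. The forward direction is trivial; for the backward direction, take a finite model $T_P$ of the propositional part (only the first $k{+}1$ letters of its traces matter) and a finite model $T_F$ of the eventuality part, and form all concatenations of a length-$(k{+}1)$ prefix from $T_P$ with a trace from $T_F$: the $\phi'$-witnesses are determined by the prefixes, the eventualities are realized (shifted) inside the suffixes at fresh positions, and every required prefix/suffix combination is present in the product model. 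The upper bound is then immediate, since the eventuality-only formula is decidable in \NP by \Cref{lem:npEventual} and the propositional formula in \NEXPT by \Cref{theo:prop}; moreover, your small-model property falls out of this construction as a corollary rather than having to be established by pumping. Your observation that timing lets models evade ``diagonal'' letter combinations is correct and important, but it is exactly the phenomenon that the fresh-position construction of \Cref{lem:npEventual} already accommodates; it does not need to be re-fought in a new argument.
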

\begin{proof}
	We prove this lemma by combining \Cref{lem:npEventual} and \Cref{theo:prop}.
	The crux is that we can separate the propositional formulas outside the $\ltlF$.
	Assume we are given a formula $\varphi = \forall^* \exists^*\ldot (\ltlF \phi_1) \land \cdots \land (\ltlF \phi_n) \land \phi'$ where $\phi_1, \ldots, \phi_n, \phi'$ contain only $\ltlN$s.
	It it easy to see that $\varphi$ is satisfiable iff both $\forall^*\exists^*\ldot (\ltlF \phi_1) \land \cdots \land (\ltlF \phi_n)$ and $\forall^* \exists^*\ldot \phi'$ are satisfiable.
	The former is decidable in \texttt{NP} (see \Cref{lem:npEventual}) and the latter in \NEXPT (see \Cref{theo:prop}), so the \NEXPT upper bound follows.
	For the lower bound, we can directly use the hardness shown in \Cref{theo:prop} and ignore the ability to use $\ltlF$ formulas. 
\end{proof}

\lemmaElimNext*
\begin{proof}
	Let $\varphi = \forall^n\exists^m\ldot \ltlF \phi$.
	Let $k$ be the lookahead needed to evaluate $\phi$ (which is upper bounded by the number of $\ltlN$s in $\phi$).
	Similar to the proof of \Cref{lem:npEventual} we eliminate $\ltlN$s by using letting traces range over tuples.
	
	We construct a formula $\phi'$ over $\mathit{AP}_0 \cup \cdots \cup \mathit{AP}_k$ (where $\ap_i = \{a_i \mid a \in \ap\}$)  as follows:
	We assume that in $\phi$ the $\ltlN$s occur directly in from of atomic proportions, i.e., $\phi$ is a boolean formula over atoms of the form $\ltlN^n a_{\pi_i}$.
	Transforming a formula in this form is possible in polynomial time.  
	We then define $\phi'$ as the formula over $\ap_0 \cup \cdots \cup \ap_k$ obtained from $\phi$ by replacing each atom $\ltlN^n a_{\pi_i}$ with the atomic proposition $(a_n)_{\pi_i}$.
    
    The traces we consider now range over letters from $\mathit{AP}_0 \cup \cdots \cup \mathit{AP}_k$, i.e., over the window of the next $k+1$ steps.
    In the LTL part of the specification we assert that this window is consistent, i.e., the $i+1$th state (evaluation over $\mathit{AP}_{i+1}$ equals the evaluation of the $i$th state in the next step).
    We define $\psi'$ as 
    \begin{align*}
        \psi' \coloneqq  \psi_0 \land \ltlG \bigwedge_{i = 0}^{k-1}  \bigwedge_{a \in \mathit{AP}} \big(a_{i+1} \leftrightarrow \ltlN a_i\big)
    \end{align*}%
	where $\psi_0$ is obtained from $\psi$ by replacing each proposition $a \in \ap$ with $a_0 \in \ap_0$.
	We define $\varphi' = \forall^n\exists^m\ldot \ltlF \phi'$.
    It is easy to see that $(\psi, \varphi)$ and $(\psi', \varphi')$ are equisatisfiable. 
\end{proof}

\alwaysSat*
\begin{proof}
	Let $\varphi = \forall \pi_1\exists \pi_2\ldots \pi_{m+1}\ldot \phi$ and let $\aut_\phi$ be a deterministic Büchi automaton for $\phi$ over $\ap_{\pi_1} \cup \cdots \cup \ap_{\pi_{m+1}}$.
	For finite traces $u_1, \ldots, u_{m+1}$ of the same length (say $k$), we define $\mathit{zip}(u_1, \ldots, u_{m+1})$ as the finite trace (of length $k$) over $\ap_{\pi_1} \cup \cdots \cup \ap_{\pi_{m+1}}$ that combines $u_1, \ldots, u_{m+1}$, i.e., the evaluation of $\ap_{\pi_i}$ is copied from $t_i$.
    The crucial property we use is that for a deterministic liveness property, we can always revisit an accepting state even after having read an arbitrary finite word.
    We claim that there always exists a finite model of size at most $m+1$.
    For any $i \in \{1, \ldots, m+1\}$ define $f(i)$ as the vector $(1, \ldots, i-1, i+1, \ldots, m+1)$ (which has length $m$).
    We iteratively constructs a model as follows:
    Initially, we set $u_1, \ldots, u_{m+1} = \epsilon$. 
    For each $j = 0, 1, 2, \ldots$ let $i = (j \% (m+1)) + 1$ (this way we consider each $i \in \{1, \ldots, m+1\}$ infinitely many times) and $(j_1, \ldots, j_m)  = f(i)$.
    We now extend each of the traces in $\traceSet$ by some finite, non-empty word $u'_1, \ldots, u'_{m+1}$ of the same length such that $\mathit{zip}(u_{i} u'_{i}, u_{j_1} u'_{j_1}, \ldots, u_{j_m} u'_{j_m})$ reaches an accepting state in $\aut_\phi$.
    As $\aut_\phi$ is a deterministic liveness automaton this is always possible. 
    
    Let $t_1, \ldots t_{m+1}$ be the infinite traces constructed in the limit.
    It is easy to see that $\traceSet = \{t_1, \ldots, t_{m+1}\}$ is a model of $\varphi$.
    In the limit, this constructs traces of infinite length which serve as a model for $\varphi$.
    For each trace $t_i$, the traces with index determined by $f(i)$ can be chosen as witness traces for existential quantification.
    By construction, the (unique) run of $\aut_\phi$ on the resulting tuple of traces is accepting.
\end{proof}

\section{Additional Material for \Cref{sec:largestModelsAlg}}

\algInv*
\begin{proof}
    Initially, the property obviously holds as $\mathcal{A}$ encodes $\varphi$ and every trace which occurs in any satisfying trace set must have a run through $\mathcal{A}$ as the $\forall$ trace.
    It remains to show that $\mathcal{A}_\text{new}$ does not exclude traces that occur in satisfying $T$.
    The construction ensures that if $t \in \lang{\mathcal{A}^\forall}$ but $t \not \in \lang{\mathcal{A}^\forall_\text{new}}$, then there exists no witness trace $t'$ such that $t_\pi \cup t'_{\pi'} \in \lang{\mathcal{A}}$ and $t' \in \lang{\mathcal{A^\forall}}$.
    By induction, $\lang{\mathcal{A}}$ is a superset of the union of all satisfying trace sets, therefore, there can be no $T$ such that $t \in T$ and $T \models \forall \pi \exists \pi' \ldot \phi$.
\end{proof}

\fi

\end{document}
\endinput
